\documentclass[11pt]{article}

\usepackage[a4paper,margin=1in]{geometry}
\usepackage{amsmath,amssymb,amsthm,mathtools}
\usepackage{microtype}
\usepackage{xcolor}
\usepackage{booktabs}
\usepackage{tikz}
\usetikzlibrary{arrows.meta}
\usepackage{algorithm}
\usepackage[noend]{algpseudocode}
\usepackage[
  colorlinks=true,
  linkcolor=blue!55!black,
  citecolor=blue!55!black,
  urlcolor=blue!55!black
]{hyperref}
\usepackage{authblk}

\newtheorem{theorem}{Theorem}[section]
\newtheorem{lemma}[theorem]{Lemma}
\newtheorem{proposition}[theorem]{Proposition}
\newtheorem{corollary}[theorem]{Corollary}

\theoremstyle{definition}
\newtheorem{definition}[theorem]{Definition}

\newtheorem{remark}[theorem]{Remark}

\newcommand{\ALG}{\operatorname{ALG}}
\newcommand{\OPT}{\operatorname{OPT}}
\newcommand{\cost}{\operatorname{cost}}
\newcommand{\len}{\operatorname{len}}
\newcommand{\dist}{\operatorname{dist}}

\newcommand{\ST}{\operatorname{ST}}
\newcommand{\MST}{\operatorname{MST}}

\title{Online Service with Per-Batch Maximum Delay}
\author{
Tianhang Lu\thanks{The first two authors contribute equally.},
Runtian Ren,
Shengcai Liu,
Ke Tang
}
\affil{
Guangdong Provincial Key Laboratory of Brain-Inspired Intelligent Computation,\\
Department of Computer Science and Engineering,\\
Southern University of Science and Technology, Shenzhen 518055, China\\
liusc3@sustech.edu.cn
}
\date{}

\begin{document}

\maketitle

\begin{abstract}
We study online service with one maximum-waiting-time charge per service
batch.  The persistent server endpoint prevents a phase-by-phase comparison
with the offline optimum: an offline schedule may merge many online phases,
share movement globally, and finish at unrelated endpoints.  Our main
contribution is a metric-independent \emph{group--trajectory certificate
framework} that restores such a comparison.  For ordered request groups in
disjoint time windows, a certificate value is bounded both by the window
length and by the metric Steiner cost of the group.  After normalizing the
offline schedule into consecutive arrival blocks, strictly interior groups
are charged to offline delay, while boundary groups induce connectors of
congestion at most two along the offline trajectory.  One color class
therefore has certificate sum at most $2\OPT$; a parity decomposition yields
$\sum_h C_h\le4\OPT$.  Consequently, any phase rule whose cost is at most
$\alpha C_h$ is $4\alpha$-competitive.

For visible service, this theorem yields deterministic ratios $10$ on a
line, $12$ on a weighted tree, and $20$ on an arbitrary finite metric; the
last algorithm is polynomial and uses a phase-local terminal-MST envelope,
while an exact metric-Steiner oracle gives ratio $12$.  Structurally,
elective and automatic schedules can have different event structures but
equal offline optimal values.  The common value is computable exactly in
polynomial time on lines and explicitly represented weighted trees, whereas
exact optimization on arbitrary finite metrics is NP-hard.  Finally, we use
spatial blindness---announced requests whose locations are revealed only
when visited---as a stress test: dyadic exploration preserves a constant
ratio on a known finite line, while a single hidden request on a star forces
a loss linear in its degree.
\end{abstract}

\section{Introduction}
\label{sec:introduction}

Online service with delay isolates a basic tension in dynamic routing: a
mobile server can wait and batch nearby demands, but every postponed request
becomes more urgent.  The classical problem was introduced by Azar, Ganesh,
Ge, and Panigrahi~\cite{azar2017osd}.  Requests arrive over time at points of
a metric space, and the objective is the distance traveled by the server plus
the sum of the requests' individual delays.  The model is a delayed-service
relative of the one-server and $k$-server problems and captures batching
versus responsiveness in operations management, operating systems,
logistics, supply-chain management, and scheduling
\cite{azar2017osd}.  It also belongs to a broader family of metric
optimization problems with delay or deadlines, including facility location
and network-design problems~\cite{azar2019framework,azar2020beyond}.

The request-additive objective is natural when total waiting is the relevant
quantity.  It is less faithful to settings governed by tail latency,
service-level penalties, or incident-based accounting, where a service is
charged according to its oldest participant.  Motivated by the per-batch
maximum-delay objective introduced for multi-level aggregation by Lu, Ren,
Liu, and Tang~\cite{lu2026mlamax}, we replace the sum of all request delays by one
maximum waiting time for every
nonempty service batch.  If a walk $P$ executed at time $t$ serves the
nonempty request set $B$, its cost is
\begin{equation}
  \label{eq:intro-objective}
  \len(P)+\max_{q\in B}\bigl(t-a(q)\bigr).
\end{equation}
The total objective is the sum of~\eqref{eq:intro-objective} over service
events.  An arbitrary number of younger requests may now join a batch with
no additional delay charge once its oldest request has fixed the
maximum-delay term.  This destroys
the request-by-request accounting behind classical online service with delay
and raises several structural questions.

Does the offline optimum remain tractable even though the endpoint of every
batch changes the cost of all later batches?  Does it matter whether the
server may encounter a request without serving it?  Can one still obtain a
polylogarithmic competitive ratio, or even a constant one, on a line, a tree,
or an arbitrary metric?  Finally, what remains possible if arrivals are
announced but their locations stay hidden until the server discovers them?

The mobile-server state makes these questions fundamentally different from
maximum-delay aggregation with a fixed root.  In static aggregation, a
service has a spatial cost independent of earlier services, so an offline
DP envelope can guide online deadlines.  Here every service walk chooses the
endpoint from which all later movement begins.  No static joint-service cost
summarizes that consequence, and the static aggregation machinery does not
transfer.

Our replacement is a \emph{group--trajectory certificate}.  Ordered request
groups in disjoint time windows are paid in one of two ways.  Groups absorbed
strictly inside an offline batch are charged to its temporal span; boundary
groups induce connectors along the realized offline trajectory, where their
congestion is at most two.  In short:
\begin{center}
  \emph{Interior groups pay in delay; boundary groups pay in motion.}
\end{center}
Splitting online transitions by parity gives two identity-disjoint
certificate families.  The resulting framework has the reusable form
\begin{equation}
  \label{eq:intro-certificate-engine}
  \ALG
    \le \alpha\sum_h C_h
    \le 4\alpha\OPT.
\end{equation}
The global factor four is common to every upper bound in the paper; geometry,
computation, and information determine only the local realization factor
$\alpha$.  Figure~\ref{fig:certificate-engine} in
Section~\ref{sec:metric-certificates} depicts the temporal--trajectory
dichotomy.

\paragraph{Our contributions.}
We work with finite metrics, instantaneous movement, and common unit-rate
waiting.  All online upper bounds are deterministic and hold under both
elective and automatic service.

\begin{enumerate}
  \item \emph{The group--trajectory certificate theorem.}
  For an ordered, identity-disjoint group family, let $C_k$ denote a value no
  larger than both its time-window length and the ambient Steiner cost of its
  locations.  We prove a schedule-level inequality
  \[
    \sum_kC_k
      \le \operatorname{Delay}(\sigma)+2\operatorname{Mov}(\sigma)
      \le2\cost(\sigma).
  \]
  Parity separation then yields $\sum_hC_h\le4\OPT$ and the generic
  transfer~\eqref{eq:intro-certificate-engine}.

  \item \emph{Geometric and computational realizations.}
  Table~\ref{tab:intro-engine} records the certificate and local factor in
  each setting.  Intervals give $\alpha=5/2$.  Exact Steiner structure gives
  $\alpha=3$ and is efficiently available on weighted trees.  On arbitrary
  metrics, a half-scaled phase-local envelope of terminal-MST weights is a
  polynomially computable certificate with $\alpha=5$; its running maximum
  is essential because terminal MST weight is not monotone under arrivals.

  \item \emph{Offline semantic equivalence.}
  Elective schedules admit a consecutive-arrival-block normal form.  Although
  automatic service need not admit an optimal schedule with that event
  structure, automatic and elective service have equal offline optimal
  values.  This equality is a noncausal structural bridge, not an online
  black-box reduction.

  \item \emph{Offline computation and hardness.}
  The common offline value is computable exactly in polynomial time on lines
  and explicitly represented weighted trees.  Exact optimization on an
  arbitrary finite metric is NP-hard even when every request arrives in one
  simultaneous epoch.

  \item \emph{Information loss as a stress test.}
  In the blind model, arrivals, identities, and waiting clocks are visible,
  but locations remain hidden until visited.  The global certificate theorem
  survives unchanged.  Only local realization deteriorates: dyadic line
  exploration has $\alpha=21$, hence ratio $84$.  A $d$-leaf unit star gives
  deterministic and randomized one-request lower bounds $2d-1$ and $d$,
  showing that the topological obstruction is genuine.

  \item \emph{Lower bounds.}
  Every deterministic visible algorithm has ratio at least $3$ already on a
  fixed two-point line.  For one blind request on a line, the deterministic
  and randomized lower bounds are $3$ and $2$.
\end{enumerate}

\begin{table}[b]
  \centering
  \small
  \setlength{\tabcolsep}{3pt}
  \begin{tabular}{@{}p{0.18\linewidth}p{0.29\linewidth}ccp{0.19\linewidth}@{}}
    \toprule
    Setting & Phase certificate $C_h$ & $\alpha$ & Ratio & Computation \\
    \midrule
    Finite line & transition span $X_h$ & $5/2$ & $10$ & polynomial \\
    Weighted tree & minimal-subtree weight $W_h$ & $3$ & $12$ & polynomial \\
    Finite metric & ambient Steiner weight $S_h$ & $3$ & $12$ & oracle \\
    Finite metric & half MST-envelope value $b_h$ & $5$ & $20$ & polynomial \\
    Blind finite line & anchor radius $R_h$ & $21$ & $84$ & polynomial \\
    \bottomrule
  \end{tabular}
  \caption{Instantiations of the common bound
  $\ALG\le\alpha\sum_hC_h\le4\alpha\OPT$.  All ratios hold for elective and
  automatic service; the exact-Steiner row assumes an oracle.}
  \label{tab:intro-engine}
\end{table}

The constants should therefore be read as a factorization, not as unrelated
analyses.  Improving the parity packing would improve every row, whereas a
better route, surrogate, or exploration rule improves only its local factor.

\paragraph{Related work.}
\emph{Online service and metric optimization with delay.}
Azar et al.~\cite{azar2017osd} introduced online service with additive delay
and gave a polylogarithmic randomized algorithm on general metrics, together
with multi-server extensions.  Azar and Touitou improved the general-metric
ratio to $O(\log^2 N)$ through a broader framework for metric optimization
with delay and deadlines~\cite{azar2019framework}.  Touitou subsequently
obtained a deterministic polynomial-time $O(\log N)$ algorithm
\cite{touitou2023improved}.  On a line, Bie{\'n}kowski, Kraska, and Schmidt
gave an $O(\log N)$ Bucket algorithm~\cite{bienkowski2018osdline}.  A recent
logarithmic lower bound applies to a broad lazy-server class containing the
best known classical algorithms, while the existence of an unrestricted
constant-competitive algorithm remains unresolved
\cite{disser2025lazy}.  Related
frameworks address Steiner-type network design, facility location, and other
metric problems with delay or deadlines~\cite{azar2020beyond}.
The standard non-clairvoyant variant hides the future evolution of each
request's delay function, while retaining its metric location.  Touitou gave
the first positive result for non-clairvoyant online service with delay,
matching the known square-root lower bounds up to logarithmic factors
\cite{touitou2025nonclairvoyant}.  This temporal notion of non-clairvoyance is
orthogonal to the spatial blindness studied in Section~\ref{sec:blind-line},
where release times and delay clocks are visible but request locations are
hidden until discovered.

\emph{Matching, caching, and multiple servers with delay.}
Emek, Kutten, and Wattenhofer introduced min-cost perfect matching with
delays, whose objective combines metric connection cost with the sum of
request waiting times~\cite{emek2016matching}.  Subsequent work obtained an
$O(\log N)$ randomized bound on finite metrics, developed the bipartite
variant and nearly matching lower bounds, and extended the model to concave
delay functions~\cite{azar2017matching,ashlagi2017bipartite,
azar2021concave}.  More recently, Dufay and Wattenhofer obtained a
deterministic polylogarithmic guarantee without requiring the metric space to
be known in advance~\cite{dufay2026matching}.  In paging with delay, requests
may remain unserved while accumulating penalties; Gupta, Kumar, and
Panigrahi gave $O(\log k\log N)$-competitive algorithms and constant-factor
offline approximations, together with APX-hardness
\cite{gupta2022caching}.  The multi-server extension of online service with
delay was already considered by Azar et al.~\cite{azar2017osd}; on the
uniform metric, Krneti\'c et al. proved a tight deterministic ratio of
$2k+1$~\cite{krnetic2020kserver}.  Classical paging is the uniform-metric
$k$-server specialization; equivalently, the leaf metric of a unit star
realizes the same distances up to uniform scaling.  These problems share the
movement-versus-waiting tension of our model, but their accumulated delay is
additive over requests rather than charged once per service batch.

\emph{Per-batch and nonadditive delay.}
Dynamic TCP acknowledgment is the canonical fixed-location batching problem;
its deterministic and randomized guarantees are studied in
\cite{dooly2001tcp,karlin2001dynamic,seiden2000guessing}.  Albers and Bals
considered TCP objectives penalizing long delays
\cite{albers2005dynamic}, and Bhore, Paw{\l}owski, and
Umboh developed a general theory of batch-aware and batch-oblivious TCP delay
functions~\cite{bhore2026online}.  Lu, Ren, Liu, and Tang isolate the same per-batch
maximum objective for multi-level aggregation and obtain optimal deterministic
and randomized guarantees for static rooted services~\cite{lu2026mlamax}.
The present mobile-server problem is structurally different: every service
changes the state from which all future routes start.  Consequently, the
static DP-Envelope and submodular-service analysis does not transfer to the
mobile setting; the present paper instead develops the group--trajectory
certificate to account for persistent endpoints.

\paragraph{Organization.}
Section~\ref{sec:model} fixes the metric, event, and information conventions.
Section~\ref{sec:certificate-framework} proves the offline normal form and
the group--trajectory certificate theorem, including the generic
$4\alpha$ transfer.  Section~\ref{sec:line-warmup} instantiates the framework
on a line and gives the exact line dynamic program and semantic equivalence.
Section~\ref{sec:tree-extension} specializes the framework to weighted trees
and gives the exact tree dynamic program.  Section
\ref{sec:general-metric} gives the Steiner-oracle and polynomial MST-envelope
algorithms.  Section~\ref{sec:blind-line} introduces spatially blind service,
develops the line exploration algorithm, and
proves the elective and automatic guarantees.  Finally,
Section~\ref{sec:conclusion} concludes with open directions.

\section{Preliminaries and Service Semantics}
\label{sec:model}

Let $(X,d)$ denote a known finite metric space with $N=|X|$, and let
$s_0\in X$ denote the initial server position.
A request occurrence $q$ has an arrival time $a(q)\ge 0$ and a location
$x(q)\in X$.  Distinct occurrences remain distinct even when they have the
same arrival time and location.  Simultaneous arrivals are grouped into
epochs
\[
  a_1<a_2<\cdots<a_m,
\]
and $R_h$ denotes the nonempty set of requests arriving at time $a_h$.
We use the arrival-first convention when an arrival and a service occur at
the same time.

Movement is instantaneous but is charged by distance.  The server occupies a
point of $X$ between movement events.  At one time $t$, it may traverse a
finite walk $P=(v_0,v_1,\ldots,v_\ell)$ in $X$.  The walk visits its listed
vertices and has movement cost
\[
  \len(P)=\sum_{i=1}^{\ell}d(v_{i-1},v_i).
\]
A nonempty set $B$ of pending requests may
be served during this walk, subject to the service semantics below.  The cost
of the event is
\begin{equation}
  \label{eq:event-cost}
  \cost(P,B,t)
  =
  \len(P)+\max_{q\in B}\bigl(t-a(q)\bigr).
\end{equation}
The objective is the sum of~\eqref{eq:event-cost} over all nonempty service
events, together with the length of any movement that serves no request.
Every request must eventually be served.

Formally, a schedule is a finite endpoint-continuous sequence of
\emph{atomic actions} with nondecreasing timestamps.  After the complete
arrival epoch at time $t$ has been processed, one atomic action at $t$ may
contain a finite ordered sequence of walks, chosen adaptively from
observations made earlier in that action.  No further arrival is processed
until the whole action ends.  Each component walk is one service event: all
requests served, or automatically hit, during that component share one
maximum-delay charge.  A component serving no request is a pure movement and
costs only its length.  The first component starts at $s_0$, and every later
component starts at the terminal point of its predecessor.  This convention
allows a blind successful sweep and its ensuing cleanup or repositioning to
form one compound zero-time action without an arrival being interleaved
between them.

For a schedule $\sigma$, write
\[
  \operatorname{Mov}(\sigma)=\sum_P\len(P),
  \qquad
  \operatorname{Delay}(\sigma)
    =\sum_{(P,B,t):B\ne\varnothing}
       \max_{q\in B}(t-a(q)).
\]
Thus $\cost(\sigma)=\operatorname{Mov}(\sigma)
+\operatorname{Delay}(\sigma)$.  We write $\OPT(I)$ for the infimum of this
cost over feasible schedules for input $I$, and omit $I$ when it is clear.
The finite line and tree dynamic programs below prove attainment in their
respective settings.

The present manuscript uses the common unit-rate waiting time
$t-a(q)$.  Request-specific rates and general delay functions are different
models; several natural threshold rules fail in those extensions even on a
two-point line.

\subsection{Elective and automatic service}
\label{subsec:service-semantics}

\begin{definition}[Elective service]
  A pending request may be included in $B$ only if its location is visited by
  $P$.  The algorithm may nevertheless visit or occupy its location without
  serving it.
\end{definition}

\begin{definition}[Automatic service]
  Every pending request whose location is visited by $P$ is served in the
  current event.  A request arriving at the current server position is served
  immediately.  A nominally pure movement becomes a service event if its walk
  visits a pending request.
\end{definition}

Elective service is a genuine relaxation under a per-batch maximum-delay
objective at the level of feasible actions and trajectories.  An old request
can deliberately remain pending and later cover the waiting times of younger
requests in the same batch.  Nevertheless, the two semantics have equal
\emph{offline optimal values} under the present instantaneous model; this is
proved in Theorem~\ref{thm:auto-elective-equivalence}.  The distinction can
still matter for online execution.  This differs from the request-additive
objective, under which serving an encountered request early never increases
the total waiting cost.

\begin{lemma}[Zero-age co-located arrivals]
  \label{lem:zero-age-arrivals}
  Under elective service, an online algorithm may, without loss of
  generality, immediately serve every request that arrives at the current
  server position.
\end{lemma}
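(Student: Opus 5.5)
We prove the claim by a simulation argument. Fix any online algorithm $A$ under elective service. We build a modified online algorithm $A'$ that, whenever a request $q$ arrives at the server's current position $p$ at time $a(q)$, serves $q$ at once with the trivial walk $(p)$. Apart from these extra events, $A'$ copies every action of $A$, except that it removes from each later batch of $A$ any request that $A'$ has already served. The goal is to show $\cost(A')\le\cost(A)$ on every input, and that $A'$ is causal.

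\emph{Cost of the inserted events.} Under the arrival-first convention, once the epoch at time $a(q)$ has been processed the server is still at $p$. The walk $(p)$ has length $0$, and its batch consists only of requests of age $t-a(q)=0$. So each inserted event costs exactly $0$. If several requests arrive at $p$ in the same epoch, we put them all into one such zero-cost event.

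\emph{Cost of the remaining events.} Since the trivial walk ends at $p$, the trajectory of $A'$ coincides with that of $A$, and every walk of $A$ is still endpoint-continuous. Each event $(P,B,t)$ of $A$ becomes $(P,B',t)$ with $B'\subseteq B$. This gives two facts.
\begin{itemize}
\item The movement cost is unchanged, and elective feasibility is preserved. Every request in $B'$ is located on $P$, and under elective semantics the server may visit a location without serving its requests.
\item The delay term can only decrease: $\max_{q\in B'}(t-a(q))\le\max_{q\in B}(t-a(q))$.
\end{itemize}
If $B'=\varnothing$, the event becomes a pure movement costing $\len(P)$, which is at most $\cost(P,B,t)$.

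\emph{Main obstacle.} The delicate point is causality. $A$'s decisions may depend on its own internal state, in particular on which requests are still pending for it. We handle this by letting $A'$ run $A$ internally as a shadow process. The shadow state evolves exactly as $A$ would on the same input, with the same arrivals and the same positions, because $A'$ never deviates in position. Every request served by $A$ is either in $B'$ or was served earlier by $A'$, so every request is eventually served by $A'$. Summing over events gives $\cost(A')\le\cost(A)$. Hence restricting attention to algorithms that immediately serve co-located arrivals loses nothing in competitive ratio.
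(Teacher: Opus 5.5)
Your proposal is correct and takes essentially the same approach as the paper. Both insert a zero-length singleton service at the arrival time, retain a virtual (shadow) copy so the simulated algorithm's decisions are unchanged, and drop the request from the later batch, which can only lower that batch maximum; your explicit treatment of the empty-batch case and of causality simply spells out what the paper's short proof leaves implicit.
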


\begin{proof}
  Add a zero-length singleton service at the arrival time.  Its movement and
  waiting costs are both zero, and the server state is unchanged.  The
  algorithm may retain a virtual copy of the request for internal decision
  making.  When the simulated algorithm would later serve the request, omit
  it from that batch; this can only decrease the later batch maximum.
\end{proof}

Lemma~\ref{lem:zero-age-arrivals} applies only at age zero.  Serving an
arbitrary positive-age request encountered by a walk is not a general
dominance rule: doing so may increase the current batch maximum and destroy a
useful future sponsor.

\subsection{Information models}
\label{subsec:information-models}

In the \emph{visible} model, the location of a request is revealed at its
arrival.  In the \emph{blind} model, the algorithm learns that a request has
arrived and observes its persistent identity and waiting counter, but does
not learn its location.  Equivalent timestamp-and-multiplicity feedback is
sufficient if it lets the algorithm match every discovery to an
announcement and determine when all announced requests have been discovered.
Visiting the hidden location reveals every pending request there, including
its identity, arrival time, and accumulated waiting time.  Merely occupying
the location counts as a visit, so a request at the current server point is
observable by a zero-length check.  In the elective blind model, discovery
does not force service; in the automatic blind model, discovery serves the
request.

A deterministic online algorithm is $\rho$-competitive if
\[
  \ALG(I)\le \rho\OPT(I)
\]
for every finite input $I$.  Randomized guarantees, when considered, are
against an oblivious adversary unless explicitly stated otherwise.  In every
information model, the offline comparator is clairvoyant and knows the full
input, including all hidden locations.

\paragraph{Pseudocode conventions.}
Fix total orders on metric points, edges, and request identities; every
minimum, maximum, tree, endpoint, and traversal in the pseudocode uses these
orders to break ties.  Online phase algorithms are event driven.  After a
complete phase-start epoch has been inserted, let $\alpha$ denote the oldest
active release time and let $\theta$ denote the current monotone threshold.
Compute the tentative crossing $c=\alpha+\theta$.  If the next arrival epoch
has time $u\le c$, advance to $u$, process that entire epoch first, recompute
$\theta$, and repeat.  Otherwise advance to $c$ and execute the trigger.
Thus an epoch tied with a crossing is always processed before the action,
and every claimed trigger equality is literal rather than an infinitesimal
perturbation convention.

\paragraph{Special metric classes.}
In Section~\ref{sec:line-warmup}, $X$ is a finite subset of the real line and
$d(x,y)=|x-y|$.  A monotone move from $x$ to $y$ may list all sites of $X$
between them without increasing its length.  In
Section~\ref{sec:tree-extension}, $X$ is the vertex set of a finite weighted
tree and $d$ is shortest-path distance.  Section~\ref{sec:general-metric}
returns to an arbitrary finite metric.

\section{The Group--Trajectory Certificate Framework}
\label{sec:certificate-framework}

The main obstacle in stateful service is global: an offline schedule may
merge requests from many online phases, share movement among them, and leave
its server at an endpoint unrelated to any online endpoint.  A phase-by-phase
comparison with $\OPT$ is therefore unavailable.  We first normalize the
offline schedule and then develop a certificate that is paid either in time
or in space.  The two-terminal line certificate used in
Section~\ref{sec:line-warmup} is the simplest prototype; the theorem below is
metric-independent and drives every online upper bound in the paper.

\subsection{A metric-independent offline normal form}
\label{sec:offline-normal-form}

The certificate theorem relies on one offline normalization.  Although it is
first used through a line-metric prototype, the argument uses neither line
order nor tree structure.

For a nonempty request set $B$, let $\alpha(B)$ and $\beta(B)$ denote its
earliest and latest arrival times:
\[
  \alpha(B)=\min_{q\in B}a(q),
  \qquad
  \beta(B)=\max_{q\in B}a(q).
\]

\begin{lemma}[Adjacent absorption]
  \label{lem:adjacent-absorption}
  Consider two adjacent nonempty service batches $B$ and $C$, served at
  times $t\le u$, respectively.  If $\alpha(C)\le t$, then there is a
  schedule of no greater cost in which $B$ and $C$ are replaced by one batch
  served at time $u$.  The total movement and the server position after time
  $u$ are unchanged.
\end{lemma}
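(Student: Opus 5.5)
Let $P_B$ and $P_C$ be the walks serving $B$ at time $t$ and $C$ at time $u$. Adjacency means no other service event, and hence no movement, lies between them. The plan is to perform the same two walks but postpone the service of $B$ to time $u$.

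\emph{Construction.} Move the walk $P_B$ from time $t$ to time $u$ and concatenate it with $P_C$, so that at time $u$ the server traverses $P_B$ followed by $P_C$. This is a single walk, because $P_C$ starts at the endpoint of $P_B$. Every request in $B\cup C$ is served in this one event at time $u$.
\begin{itemize}
  \item \emph{Elective feasibility.} Each request of $B$ lies on $P_B$ and each request of $C$ lies on $P_C$.
  \item \emph{Position.} Between $t$ and $u$ the server simply waits at the start point of $P_B$ instead of at its endpoint. Since no other event occurs in $(t,u)$, nothing else is affected.
  \item \emph{Validity of the merged batch.} Requests of $C$ may have arrived after $t$, but they have arrived by $u$, so serving them at $u$ is valid. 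Requests of $B$ are still pending at $u$ because we stopped serving them at $t$.
  \item \emph{Invariants.} Total movement is unchanged. The endpoint after $u$ is the endpoint of $P_C$, exactly as before. All later events are untouched.
\end{itemize}

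\emph{Cost comparison.} The old delay charges were $t-\alpha(B)$ and $u-\alpha(C)$. The new charge is
\[
  u-\min\{\alpha(B),\alpha(C)\} \;=\; \max\{u-\alpha(B),\,u-\alpha(C)\}.
\]
The term $u-\alpha(C)$ is already paid by the old charge for $C$. For the other term, the hypothesis $\alpha(C)\le t$ gives
\[
  u-\alpha(B) \;\le\; (u-\alpha(C)) + (t-\alpha(B)),
\]
since this rearranges to exactly $\alpha(C)\le t$. Both old charges are nonnegative, so the maximum is at most their sum.

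\emph{Main obstacle.} The delicate point is making the construction respect the atomic-action conventions. If $t=u$, the two events may belong to the same atomic action; then merging two consecutive component walks into one is directly allowed. If $t<u$, an arrival epoch may occur in $(t,u]$, so postponing $P_B$ must not violate causality. The offline schedule is clairvoyant, so causality is not an issue. Under the arrival-first convention, the requests of $C$ with arrival time $u$ are available at $u$.

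I would state the construction for elective service. Automatic service is not needed, since the normal form is used on elective schedules and Theorem~\ref{thm:auto-elective-equivalence} handles the transfer. Under automatic semantics, the delayed walk $P_B$ could hit extra requests, and that would complicate the argument.
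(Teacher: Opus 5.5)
Your overall plan matches the paper's: park at the start of $P_B$, replay the movement at time $u$ as one elective walk serving $B\cup C$, and compare charges. Your cost comparison is also correct. The merged charge $u-\min\{\alpha(B),\alpha(C)\}$ is at most $(t-\alpha(B))+(u-\alpha(C))$, using $\alpha(C)\le t$ for one branch of the maximum and feasibility ($\alpha(B)\le t$) for the other.

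\textbf{The gap is your opening claim that adjacency leaves no movement between the two events.} In this paper, adjacency is among \emph{nonempty} service batches. The model explicitly allows pure movements between them: components that serve no request and are charged only by their length. Proposition~\ref{prop:elective-normal-form} applies the lemma to consecutive nonempty batches that may be separated by such pure interbatch relocations; its proof speaks of ``every pure interblock movement.''

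When such movement exists, $P_C$ does not start at the endpoint of $P_B$, so your concatenation of $P_B$ and $P_C$ is not a walk. Parking at the start of $P_B$ throughout $(t,u)$ then breaks endpoint continuity. Either the intervening moves disappear, changing the total movement, or, if you keep them at their original times, they start from the wrong point.

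The repair is immediate, and it is exactly the paper's construction. At time $u$, traverse as one walk: $P_B$, then all movement formerly performed between the two service events in its original order, then $P_C$, serving $B\cup C$ electively. Total movement and the final endpoint are unchanged. The only new feasibility point is that the formerly pure segments, now inside a service walk, may pass pending requests outside $B\cup C$, and elective service lets you ignore them. With this change, the rest of your argument goes through verbatim.
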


\begin{proof}
  Hold the server at the starting point of the old $B$-walk until time $u$.
  At time $u$, concatenate the old $B$-walk, all movement formerly performed
  between the two service events, and the old $C$-walk.  Serve $B\cup C$
  electively during this concatenated walk.  Every request in $B\cup C$ has
  arrived by $u$, while unrelated pending requests met by the walk may be
  ignored.  The movement length and the final server position are unchanged.

  The two old maximum-delay terms sum to
  \[
    (t-\alpha(B))+(u-\alpha(C)),
  \]
  whereas the merged term is
  \[
    u-\min\{\alpha(B),\alpha(C)\}.
  \]
  The old value minus the new value is either $t-\alpha(C)$ or
  $t-\alpha(B)$, both nonnegative under the hypothesis and feasibility.
\end{proof}

\begin{proposition}[Elective consecutive-block normal form]
  \label{prop:elective-normal-form}
  For every feasible elective schedule, there is a schedule of no greater
  cost whose nonempty service batches are consecutive blocks of complete
  arrival epochs.  A block beginning at epoch $p$ and ending at epoch $j$ is
  served at time $a_j$.
\end{proposition}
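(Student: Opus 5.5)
We start from an arbitrary feasible elective schedule and apply three cost-nonincreasing normalizations in turn.

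\textbf{First normalization: remove pure movements and delay each batch to its latest arrival.} Pure movements that serve no request are merged into the next service walk. If we postpone such a movement to the time of the next service event and concatenate it with that event's walk, total movement and all endpoints are unchanged. Any trailing pure movement after the last service can be deleted. Next, take each batch $B$ served at time $t$ and postpone it. Since $\beta(B)\le t$, if the next service time $u$ satisfies $u\ge t$ and some request of the next batch arrives by $t$, Lemma~\ref{lem:adjacent-absorption} applies and we merge.

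\textbf{Second normalization: merge batches until arrival windows are disjoint.} We repeat Lemma~\ref{lem:adjacent-absorption} on adjacent batches $B,C$ (served at $t\le u$) whenever $\alpha(C)\le t$. Each application strictly decreases the number of batches, so the process terminates. In the resulting schedule, consecutive batches satisfy $\alpha(C)>t\ge\beta(B)$. Hence every request of $C$ arrives strictly after every request of $B$, and the batches, read in service order, occupy disjoint, increasingly ordered arrival windows. Because every request is served, each epoch's requests are split among batches.

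\textbf{Third normalization: make each batch a complete consecutive block of epochs served at $a_j$.} Suppose a batch $B$ is served at a time $t$ with $\beta(B)=a_j<t$. Then no later batch contains a request from an epoch in $(a_j,t]$, since any such request would have $\alpha(C)\le t$ and would already have been absorbed. We therefore move the service of $B$ earlier to $a_j$, keeping the same walk. The server waits at the walk's endpoint instead of at its start point, which is harmless because movement is instantaneous and no service happens in between. The delay term drops by $t-a_j$.

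\textbf{Main obstacle: preserving the ordering after the earlier shift.} We must check that moving $B$ earlier does not conflict with the previous batch's time or with the requests of $B$ having arrived. Both hold, since the previous batch is served at some time $<\alpha(B)\le a_j$. The disjoint-window property then says each epoch's requests lie in a single batch, with a tie-breaking argument for an epoch $a_j$ split between $B$ and $C$. This case is excluded because $\alpha(C)=a_j\le t$ would have triggered absorption. Finally, the windows are consecutive, because every epoch appears in some batch. So batch $k$ covers epochs $p_k,\dots,j_k$ with $p_{k+1}=j_k+1$ and is served at $a_{j_k}$, which is the claimed normal form.
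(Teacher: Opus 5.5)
Your proposal is correct and follows essentially the paper's argument: repeated adjacent absorption until $\beta(B_k)\le t_k<\alpha(B_{k+1})$, which forces each batch to be a consecutive block of complete epochs, followed by shifting each block's walk to its latest arrival $a_j$. The only differences are cosmetic. You fold pure movements into the next service walk up front, whereas the paper bundles all inter-block movement into the transition walk $P_k$ during the timing step. Also, the ``postpone'' sentence in your first normalization is superfluous, since that merge is already the absorption step.
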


The proof has four steps.  Adjacent absorption first merges two batches when
the later batch already contains a released request.  Termination then gives
strict temporal separation between consecutive batches, which forces
each batch to contain a consecutive block of complete arrival epochs.
Finally, the entire transition walk of a block is postponed to the latest
release in that block.  Elective service is essential in the absorption
step: unrelated pending requests encountered by the replayed walk may be
ignored.

\begin{proof}
  Order the nonempty service events chronologically and repeatedly apply
  Lemma~\ref{lem:adjacent-absorption} whenever adjacent batches
  $B_k@t_k$ and $B_{k+1}@t_{k+1}$ satisfy
  $\alpha(B_{k+1})\le t_k$.  Each absorption decreases the number of
  nonempty batches, so the process terminates.  In the resulting schedule,
  feasibility and failure of the absorption condition give
  \[
    \beta(B_k)\le t_k<\alpha(B_{k+1}),
  \]
  and therefore $\beta(B_k)<\alpha(B_{k+1})$.  Because the batches partition
  all request identities, this strict separation orders their request sets
  by release time.  Two requests from one arrival epoch cannot lie in
  different batches, and the epochs assigned to each batch form one
  consecutive interval.

  For the timing normalization, let $P_k$ denote all movement after the
  completion of block $k-1$ and through the completion of block $k$,
  including every pure interblock movement.  Pure movement after the final
  service event may be deleted because the terminal point is free.  Wait at
  the preceding endpoint and execute $P_k$ instantaneously at the latest
  arrival time of block $k$.  All requests of the block have arrived, no
  request of a later block has arrived, total movement and relevant endpoints
  are preserved, and the block delay weakly decreases.

  The absorption and timing arguments use only concatenation of
  instantaneous walks.  They therefore remain valid on every finite metric
  under the conventions of Section~\ref{sec:model}.
\end{proof}

\subsection{The metric-independent certificate theorem}
\label{sec:metric-certificates}

The following theorem is the global lower-bounding principle used by every
online upper bound in the paper.  Part~\textup{(a)} pays one ordered family of
groups either through offline delay or through the offline trajectory.
Part~\textup{(b)} converts two such families and a local realization bound
into competitiveness.

Let $(X,d)$ denote a finite metric.  For a finite set $A\subseteq X$, define its
\emph{ambient Steiner cost} by
\begin{equation}
  \label{eq:ambient-steiner-cost}
  \ST_X(A)
  =
  \min\left\{
    \sum_{e\in E(F)} d(e):
    F\text{ is a tree in the complete graph on }X
    \text{ and }A\subseteq V(F)
  \right\}.
\end{equation}
Thus vertices of $X\setminus A$ may be used as Steiner vertices.  Every
metric walk visiting all points of $A$ induces a connected multigraph
spanning $A$, and consequently has length at least
$\ST_X(A)$.

Informally, consider ordered identity-disjoint groups for which $C_k$ is no
larger than both the group's time-window length and the ambient Steiner cost
of its locations.  Then the total certificate is at most offline delay plus
twice offline movement.  Applying this statement to two parity classes gives
the global factor four.  Figure~\ref{fig:certificate-engine} previews the two
charges used in the proof.

\begin{theorem}[Group--trajectory certificate theorem]
  \label{lem:metric-group-matching}
  \label{thm:certificate-engine}
  \begin{enumerate}
  \item[\textup{(a)}]
  Let
  \[
    0=T_0<T_1<\cdots<T_m
  \]
  define windows $I_1=[T_0,T_1]$ and
  $I_k=(T_{k-1},T_k]$ for $k\ge2$.  The first window may contain one free
  dummy request at the initial server position and time $T_0$.  Consider an
  extracted instance whose requests are partitioned
  into nonempty, pairwise identity-disjoint groups
  $G_1,\ldots,G_m$, where every request of $G_k$ arrives in $I_k$ and there
  are no other requests.  Let $C_k$ denote any nonnegative values satisfying
  \begin{equation}
    \label{eq:metric-group-certificate-assumptions}
    0\le C_k\le T_k-T_{k-1}
    \qquad\text{and}\qquad
    C_k\le
    \ST_X\bigl(\{x(q):q\in G_k\}\bigr).
  \end{equation}
  Then every feasible elective schedule $\sigma$ satisfies
  \begin{equation}
    \label{eq:metric-group-matching-bound}
    \sum_{k=1}^m C_k
    \le
    \operatorname{Delay}(\sigma)
      +2\operatorname{Mov}(\sigma)
    \le 2\cost(\sigma).
  \end{equation}
  Consequently, $\sum_kC_k\le2\OPT$.

  \item[\textup{(b)}]
  Suppose an online execution produces groups $G_h$, values $C_h\ge0$, and
  a partition of the phase indices into two colors.  For each color, after
  deleting all noncertificate requests, suppose its groups are pairwise
  identity-disjoint and can be assigned to ordered, interior-disjoint windows
  satisfying the hypotheses of part~\textup{(a)}.  If, for some $\alpha\ge0$,
  \begin{equation}
    \label{eq:local-realization-bound}
    \ALG\le\alpha\sum_h C_h,
  \end{equation}
  then
  \begin{equation}
    \label{eq:four-alpha-transfer}
    \ALG\le4\alpha\OPT.
  \end{equation}
  \end{enumerate}
\end{theorem}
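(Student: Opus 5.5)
The plan for part~(a) is to reduce to the normal form of Proposition~\ref{prop:elective-normal-form}. The normalization there keeps total movement unchanged and only weakly decreases delay, so $\operatorname{Delay}+2\operatorname{Mov}$ does not increase. It therefore suffices to prove \eqref{eq:metric-group-matching-bound} for a schedule whose nonempty batches are consecutive blocks $B_1,B_2,\ldots$ of complete arrival epochs, with $B_j$ served at $t_j=\beta(B_j)$. Strict separation gives $t_j<\alpha(B_{j+1})$, so the \emph{block spans} $[\alpha(B_j),t_j]$ are pairwise disjoint. The second inequality $\operatorname{Delay}+2\operatorname{Mov}\le2\cost$ is immediate. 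The bound $\sum_kC_k\le2\OPT$ then follows by taking the infimum over $\sigma$.

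I would split the groups into two kinds. Call group $k$ \emph{interior} if some block $B_j$ satisfies $\alpha(B_j)\le T_{k-1}$ and $t_j\ge T_k$, so that the window $I_k$ lies inside that block's span. The free dummy at $T_0$ is what lets the first window be handled this way. For each block $B_j$, the windows of its interior groups are interior-disjoint subintervals of $[\alpha(B_j),t_j]$. Hence the sum of their $C_k$ is at most $\sum(T_k-T_{k-1})\le t_j-\alpha(B_j)$, which is exactly that block's delay term. Because block spans are disjoint, no window is counted twice, and summing over blocks charges all interior groups to $\operatorname{Delay}(\sigma)$.

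Every remaining group is a \emph{boundary} group, and these pay in motion. For a boundary group $k$, let $\tau_k$ be the completion time of the last block serving a request of $G_k$. Define the connector $P_k$ as the concatenation of all movement executed at times in $(T_{k-1},\tau_k]$. Every block serving a $G_k$ request has $t_j\ge a(q)>T_{k-1}$, so $P_k$ contains every walk that visits a location of $G_k$. By the remark after \eqref{eq:ambient-steiner-cost}, this gives $\len(P_k)\ge\ST_X(\{x(q):q\in G_k\})\ge C_k$.

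The key step, and the one I expect to be the main obstacle, is showing that these connectors have congestion at most two. I would try to prove $\tau_k<T_{k+1}$. Suppose instead $\tau_k\ge T_{k+1}$. The block realizing $\tau_k$ contains a $G_k$ request, so its $\alpha\le T_k$, and its completion time is at least $T_{k+1}$. Then group $k+1$ would be interior, and this block absorbs window $k+1$. That by itself does not contradict $k$ being a boundary group, so the claim has to be restated. What it shows is that $P_k$ is contained in movement at times in $(T_{k-1},T_{k+1})$ \emph{whenever $k+1$ is also a boundary group}. When $k+1$ is interior, I would truncate or reassign so that the overlap with later connectors is still controlled. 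The concrete option is to extend the time range to the next boundary index and argue that intervening interior windows carry no connector. Each time instant then lies in at most two ranges $(T_{k-1},T_{k+1})$ among consecutive boundary indices. Movement at a single timestamp belongs wholesale to one slice, so the half-open window convention resolves ties. Summing gives $\sum_{\text{boundary}}C_k\le2\operatorname{Mov}(\sigma)$. This bookkeeping across runs of interior groups is where I expect the care to be needed.

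For part~(b), fix an optimal (or near-optimal) schedule for the full input. For each color, delete all non-certificate requests and all requests of the other color. Deleting requests cannot increase the optimum: the same walks remain feasible electively, and each batch either shrinks, which lowers its maximum, or becomes pure movement. The restricted instance satisfies the hypotheses of part~(a), so each color's certificate sum is at most $2\OPT$, and together $\sum_hC_h\le4\OPT$. Combining this with \eqref{eq:local-realization-bound} gives \eqref{eq:four-alpha-transfer}. Automatic service enters only through Theorem~\ref{thm:auto-elective-equivalence}, which makes the elective and automatic optima equal.
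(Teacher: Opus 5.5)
Your proposal is correct. It differs from the paper's proof mainly in how boundary groups are defined and paid for.

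\textbf{How the paper argues.} The paper calls $G_k$ strictly interior to a batch $B$ only when $p(B)<k<q(B)$, where these are the extreme group indices represented in $B$. Its connector is the arclength subtrajectory from the first to the last service occurrence of $G_k$. Congestion is then proved locally: a piece of the walk of $B_i$ can only be covered by connectors of groups represented in $B_i$, by the interval property of $J_k$. Among those groups, the uncharged ones can only be $p(B_i)$ and $q(B_i)$. Pure relocations have congestion one.

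\textbf{How you argue.} You define interior as ``window inside a block span.'' This is slightly broader; for example, it also captures $k=q(B)$ when $\beta(B)=T_k$. It turns the delay charge into a clean interval packing inside the disjoint spans $[\alpha(B_j),t_j]$. You then use time-slice connectors and a purely temporal packing. Your version needs no arclength parametrization or intra-action ordering. The paper's version has shorter connectors and a local congestion statement that does not depend on how movement is placed in time.

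\textbf{Tightening the congestion step.} Your temporal argument does close, but the hedged passage should be replaced by the general lemma. If $k$ is a boundary index and $k'$ is the next boundary index, then $\tau_k<T_{k'}$. Indeed, the block completing at $\tau_k$ contains a $G_k$ request, so its earliest arrival is at most $T_k\le T_{k'-1}$. Hence $\tau_k\ge T_{k'}$ would make $k'$ interior by your definition. The ranges should therefore be $(T_{k-1},T_{k'})$, not $(T_{k-1},T_{k+1})$. For consecutive boundary indices $b_1<b_2<\cdots$, two ranges $(T_{b_i-1},T_{b_{i+1}})$ with non-adjacent indices are disjoint, because $T_{b_{i+1}}\le T_{b_{i+2}-1}$. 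This gives pointwise congestion at most two.

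There is also a shorter way to close it. In the normal form, your connector for $k$ is exactly the union of the walks of blocks containing a $G_k$ request. The paper's observation that only $p(B_j)$ and $q(B_j)$ can be non-interior among the groups represented in $B_j$ then gives congestion two immediately.

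\textbf{Part (b).} Do not delete ``all requests of the other color.'' Groups of different colors may share identities; in every application, $p_{h-1}$ lies in both $G_{h-1}$ and $G_h$. You should retain exactly the requests of the current color's groups. You should also note that adding the free dummy does not change the optimum.
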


Figure~\ref{fig:certificate-engine} summarizes the two layers of the proof:
the online transitions are first separated by parity, and each single-color
certificate is then charged to either offline delay or offline movement.

\begin{figure}[h]
  \centering
  \begin{tikzpicture}[
      x=0.72cm,
      y=0.72cm,
      >=Latex,
      every node/.style={font=\scriptsize},
      axis/.style={draw=black!65,thin,-{Latex[length=1.8mm]}},
      odd/.style={draw=blue!65!black,fill=blue!12,thick},
      even/.style={draw=orange!85!black,fill=orange!14,thick},
      temporal/.style={draw=green!45!black,fill=green!12,thick},
      spatial/.style={draw=red!70!black,fill=red!10,thick}
    ]

    \begin{scope}
      \node[font=\small\bfseries] at (3,4.25) {(a) Two parity classes};
      \draw[axis] (0,1.25)--(6.45,1.25) node[right] {phase time};
      \foreach \k in {0,...,6} {
        \draw (\k,1.32)--(\k,1.18);
        \node[below=2pt] at (\k,1.18) {$\tau_{\k}$};
      }
      \node[anchor=east,text=blue!65!black] at (-0.15,3.35) {odd};
      \node[anchor=east,text=orange!85!black] at (-0.15,2.35) {even};
      \draw[odd] (0,3.05) rectangle (1,3.62);
      \node at (0.5,3.335) {$C_1$};
      \draw[odd] (1,3.05) rectangle (3,3.62);
      \node at (2,3.335) {$C_3$};
      \draw[odd] (3,3.05) rectangle (5,3.62);
      \node at (4,3.335) {$C_5$};
      \draw[even] (0,2.05) rectangle (2,2.62);
      \node at (1,2.335) {$C_2$};
      \draw[even] (2,2.05) rectangle (4,2.62);
      \node at (3,2.335) {$C_4$};
      \draw[even] (4,2.05) rectangle (6,2.62);
      \node at (5,2.335) {$C_6$};
      \node[align=center,text=black!70] at (3,0.25)
        {windows are disjoint within each color;\\each color is compared separately with $\OPT$};
    \end{scope}

    \begin{scope}[xshift=8.1cm]
      \node[font=\small\bfseries] at (3.2,4.25) {(b) Charging one offline batch};
      \draw[very thick,black!60] (0.35,3.45)--(6.05,3.45);
      \node[above=2pt] at (3.2,3.45) {one offline batch $B$};

      \draw[spatial] (0.45,2.25) rectangle (1.55,2.9);
      \node at (1,2.575) {$G_p$};
      \draw[temporal] (1.85,2.25) rectangle (4.55,2.9);
      \node[align=center] at (3.2,2.575)
        {$G_{p+1},\ldots,G_{q-1}$};
      \draw[spatial] (4.85,2.25) rectangle (5.95,2.9);
      \node at (5.4,2.575) {$G_q$};

      \draw[green!45!black,thick,-{Latex[length=1.8mm]}]
        (3.2,2.22)--(3.2,1.35);
      \node[align=center,text=green!35!black] at (3.2,0.95)
        {strict interiors\\charged to $d(B)$};

      \draw[red!70!black,thick,-{Latex[length=1.8mm]}]
        (1,2.22)--(0.8,1.35);
      \draw[red!70!black,thick,-{Latex[length=1.8mm]}]
        (5.4,2.22)--(5.6,1.35);
      \draw[red!70!black,thick] (0.35,1.15)--(1.25,1.15)
        node[midway,below=2pt] {$Q_p$};
      \draw[red!70!black,thick] (5.15,1.15)--(6.05,1.15)
        node[midway,below=2pt] {$Q_q$};
      \node[align=center,text=red!60!black] at (3.2,0.18)
        {boundary connectors $\longrightarrow$ movement;\\trajectory congestion $\le2$};
    \end{scope}
  \end{tikzpicture}
  \caption{The certificate engine behind all visible upper bounds.  Panel
  (a) shows why alternating online transitions are split into two colors:
  each color supplies identity-disjoint groups in interior-disjoint
  macro-windows.  Panel (b) shows the offline dichotomy used by
  Lemma~\ref{lem:metric-group-matching}.  Groups strictly inside one offline
  batch are paid by its temporal span, while only its two boundary groups
  remain to form spatial trajectory connectors.}
  \label{fig:certificate-engine}
\end{figure}

\begin{proof}
  We first prove part~\textup{(a)}.
  Fix the schedule $\sigma_0$ from the statement and put it in the
  consecutive-arrival-block normal form of
  Proposition~\ref{prop:elective-normal-form}.  Denote the normalized
  schedule by $\sigma$.  The transformation does not increase movement and
  weakly decreases delay, so it suffices to prove the claimed bound for
  $\sigma$.
  Its nonempty batches are chronologically ordered, and each batch contains
  a consecutive block of complete arrival epochs.

  \paragraph{Case I: strictly interior groups.}
  For a batch $B$, let $p(B)$ and $q(B)$ denote the smallest and largest group
  indices represented in $B$.  If
  \[
    p(B)<k<q(B),
  \]
  then every request of $G_k$ belongs to $B$.  Indeed, every arrival of
  $G_k$ is later than $T_{k-1}\ge T_{p(B)}$, while every arrival in
  $G_{q(B)}$ is later than $T_{q(B)-1}\ge T_k$.  Thus the complete epochs
  of $G_k$ lie between represented epochs of the first and last groups, and
  the arrival block of $B$ cannot skip them.
  Call such a group \emph{strictly interior} to $B$.

  Let $d(B)$ denote the maximum-delay charge of $B$.  The earliest request in
  $B$ arrives no later than $T_{p(B)}$.  If
  $q(B)\le p(B)+1$, the interior sum below is empty and the desired bound is
  immediate.  Otherwise $q(B)\ge3$, and every request from $G_{q(B)}$
  arrives after $T_{q(B)-1}$.  Since the service time of $B$ is no earlier
  than its latest release, in the nontrivial case
  \begin{align}
    d(B)
    \ge T_{q(B)-1}-T_{p(B)} 
    =\sum_{k=p(B)+1}^{q(B)-1}(T_k-T_{k-1})
    \ge\sum_{k=p(B)+1}^{q(B)-1}C_k.
    \label{eq:metric-group-interior-delay-charge}
  \end{align}
  A group charged in~\eqref{eq:metric-group-interior-delay-charge} belongs
  wholly to that batch and hence cannot be charged by another batch.
  Therefore all strictly interior groups have total certificate value at
  most $\operatorname{Delay}(\sigma)$.

  \paragraph{Case II: boundary and other non-interior groups.}
  It remains to charge every group that is not strictly interior to any
  batch.  Parameterize the complete offline movement trajectory by
  cumulative movement length, including movement inside service walks and
  every pure interbatch relocation.  Retain the schedule's action--component--
  walk order to distinguish service occurrences mapped to the same arclength
  coordinate.  For an uncharged group $G_k$, let
  $Q_k$ be the trajectory interval from the first service occurrence of a
  member of $G_k$ to the last such occurrence.  All service locations of
  $G_k$ occur on $Q_k$.  The movement edges of this subtrajectory form a
  connected walk multigraph spanning these locations, so
  \begin{equation}
    \label{eq:metric-group-connector-lower-bound}
    \len(Q_k)
    \ge
    \ST_X\bigl(\{x(q):q\in G_k\}\bigr)
    \ge C_k.
  \end{equation}

  We claim that the intervals $Q_k$ have arclength congestion at most two
  almost everywhere.  Let $B_1,\ldots,B_r$ denote the nonempty batches of the
  normalized schedule in chronological order, and for every group define
  \[
    J_k=\{i\in[r]:B_i\text{ contains at least one request of }G_k\}.
  \]
  The set $J_k$ is an interval of integers.  Indeed, in the extracted
  instance all complete arrival epochs containing requests of $G_k$ form a
  contiguous interval in the global epoch order, whereas each $B_i$ is a
  consecutive block of complete arrival epochs.  Hence if
  $i_1<i<i_2$ and $i_1,i_2\in J_k$, then also $i\in J_k$.

  First fix a positive-length elementary portion of the trajectory inside
  the service walk of $B_i$.  If $Q_k$ covers this portion, then either a
  service occurrence of $G_k$ lies in $B_i$, or the first service occurrence
  of $G_k$ lies in an earlier batch and the last one lies in a later batch.
  In the latter case
  $\min J_k<i<\max J_k$, and the interval property again gives $i\in J_k$.
  Thus every connector covering such a portion belongs to a group represented
  in $B_i$.

  If this group is uncharged, its representation in $B_i$ can occur only at
  a boundary index:
  \[
    k=p(B_i)\qquad\text{or}\qquad k=q(B_i),
  \]
  because a group satisfying $p(B_i)<k<q(B_i)$ was already charged in
  \eqref{eq:metric-group-interior-delay-charge}.  Consequently at most two
  uncharged connectors cover any positive-length portion of an offline batch
  walk.

  Next consider the interior of a positive-length pure relocation segment
  between adjacent batches $B_i$ and $B_{i+1}$.  If $Q_k$ covers this
  segment, then the first service occurrence of $G_k$ is on or before $B_i$
  and the last one is on or after $B_{i+1}$.  The interval property implies
  that $i,i+1\in J_k$; equivalently, the cut between these consecutive
  arrival blocks splits the contiguous group $G_k$.  A fixed cut in a linear
  order can split at most one contiguous group, so every positive-length
  interbatch relocation has congestion at most one.  Simultaneous arrivals
  at a window boundary cause no difficulty because a complete arrival epoch
  is atomic.

  Endpoint coincidences and zero-length connectors have zero arclength
  measure.  Integrating the preceding almost-everywhere congestion bound over
  the movement trajectory and using
  \eqref{eq:metric-group-connector-lower-bound} gives
  \begin{equation}
    \sum_{k\text{ not charged to delay}} C_k
    \le \sum_{k\text{ not charged to delay}}\len(Q_k)
    \le 2\operatorname{Mov}(\sigma).
  \end{equation}
  Combining this inequality with
  \eqref{eq:metric-group-interior-delay-charge} gives
  \[
    \sum_k C_k
    \le \operatorname{Delay}(\sigma)+2\operatorname{Mov}(\sigma)
    \le \operatorname{Delay}(\sigma_0)+2\operatorname{Mov}(\sigma_0),
  \]
  proving the first inequality in
  \eqref{eq:metric-group-matching-bound} for the original schedule.  The
  second follows from
  \[
    \operatorname{Delay}(\sigma_0)+2\operatorname{Mov}(\sigma_0)
    \le 2\bigl(\operatorname{Delay}(\sigma_0)
                 +\operatorname{Mov}(\sigma_0)\bigr).
  \]
  Taking the infimum over feasible schedules gives
  $\sum_kC_k\le2\OPT$.

  Adding the optional dummy does not change the comparator: it is served at
  the initial server position at its release time with zero movement and
  zero delay.

  For part~\textup{(b)}, apply part~\textup{(a)} separately to the two
  extracted color subinstances.  Request deletion cannot increase the
  elective optimum, so each color has certificate sum at most $2\OPT$.
  Hence
  \[
    \sum_hC_h\le4\OPT.
  \]
  Combining this inequality with~\eqref{eq:local-realization-bound} proves
  \eqref{eq:four-alpha-transfer}.
\end{proof}

For every online upper bound after this theorem, the remaining task is local
realization: geometry, computation, or information determines $\alpha$,
while the temporal--trajectory argument and its global factor four remain
unchanged.

\begin{remark}[Extracted subinstances]
  \label{rem:metric-certificate-subinstances}
  Lemma~\ref{lem:metric-group-matching} is applied to a request subset of a
  larger online instance.  Request deletion cannot increase the elective
  optimum.  Hence, if $I'$ is the extracted certificate instance and $I$ is
  the full instance, then
  \[
    \sum_k C_k\le2\OPT(I')\le2\OPT(I).
  \]
  Different color classes may use overlapping requests because each class
  is compared separately with the full optimum.
\end{remark}

\section{Line-Metric Algorithms and Offline Optimization}
\label{sec:line-warmup}

We now instantiate the certificate theorem on a finite line.  Intervals make
both sides of the theory explicit: the offline block route has a closed-form
kernel, while consecutive online hulls yield two-point transition
certificates with local realization factor $5/2$.  We also prove the semantic
bridge between elective and automatic offline service and the universal
deterministic lower bound~$3$.

Throughout this section,
\[
  X=\{x_1<x_2<\cdots<x_N\}\subset\mathbb{R},
  \qquad d(x,y)=|x-y|.
\]

\subsection{A global-hull algorithm and pair certificates}
\label{sec:global-hull}

This section gives a constant-competitive algorithm for visible elective
service.  The analysis uses two ingredients that are special to the line:
  the pending locations have a hull, and parity separates alternate
  transitions into identity-disjoint certificate families.

Let $\tau_0=0$, let $H_0=\{s_0\}$, and regard the server as ending phase zero
at $s_0$.  Suppose phase $h-1$ has cleared every pending request, has service
time $\tau_{h-1}$, and ends at an endpoint of its service hull
$H_{h-1}$.  At a later time $t$, let $H(t)$ denote the hull of the currently
pending locations and define
\begin{equation}
  \label{eq:global-hull-budget}
  b(t)=\operatorname{diam}(H(t))+\dist(H_{h-1},H(t)).
\end{equation}
If no request is pending, the algorithm waits.  Otherwise, it triggers at the
first time at which the oldest pending waiting time is at least $b(t)$.  It
then serves every pending request by a shortest walk that first reaches the
nearer endpoint of $H(t)$, covers the interval, and ends at the opposite
endpoint.  Write $B_h,H_h,D_h,b_h$, and $e_h$ for,
respectively, the requests, hull, hull diameter, budget, and final endpoint
of this event.

\begin{algorithm}[ht]
  \caption{Visible global-hull service on a line}
  \label{alg:visible-global-hull}
  \begin{algorithmic}[1]
    \Require Finite line $X$, initial position $s_0$, service semantics
    \State $e\gets s_0$; $H_{\mathrm{prev}}\gets\{s_0\}$
    \Loop
      \State Wait for the first complete epoch leaving a nonempty virtual set $V$
      \Statex \Comment{Omit idle anchor-only epochs; retain active automatic hits as ghosts}
      \State $\alpha\gets\min_{q\in V}a(q)$
      \Loop
        \State $H=[L,R]\gets\operatorname{hull}\{x(q):q\in V\}$
        \State $b\gets(R-L)+\dist(H_{\mathrm{prev}},H)$; $c\gets\alpha+b$
        \State $u\gets$ time of the next arrival epoch, or $+\infty$
        \If{$u\le c$}
          \State Advance to $u$, process the complete epoch, and append it to $V$
          \State \textbf{continue}
        \EndIf
        \State Advance to $c$ and \textbf{break}
      \EndLoop
      \If{$|e-L|\le |e-R|$}
        \State $P\gets(e,L,R)$; $e\gets R$
      \Else
        \State $P\gets(e,R,L)$; $e\gets L$
      \EndIf
      \State Traverse $P$ and serve every real phase request it encounters
      \State $H_{\mathrm{prev}}\gets H$; delete all phase ghosts and clear $V$
    \EndLoop
  \end{algorithmic}
\end{algorithm}

On a line, the quantity in~\eqref{eq:global-hull-budget} is nondecreasing as
the pending hull expands.  Indeed, for $H_{h-1}=[a,b]$ and $H(t)=[\ell,r]$ it
equals
\[
  r-\ell+\max\{a-r,\ell-b,0\}.
\]
While the intervals are disjoint, expansion toward $H_{h-1}$ trades gap for
diameter one-for-one and every other expansion increases the expression;
after they intersect, it is simply $r-\ell$.  Arrivals at a trigger timestamp
are processed before the trigger test.  Thus an arrival can only raise the
threshold, and between arrivals the oldest waiting time increases
continuously while the threshold is fixed.  The first successful crossing
therefore occurs with equality: the batch-delay cost of phase $h$ is $b_h$.
After the final input arrival, such a crossing occurs in finite time.
Moreover, every request of $B_h$ arrived no earlier than
$\tau_{h-1}$, and therefore
\begin{equation}
  \label{eq:phase-duration-budget}
  \tau_h-\tau_{h-1}\ge b_h.
\end{equation}

For the transition from $H_{h-1}$ to $H_h$, define
\begin{equation}
  \label{eq:transition-span}
  X_h=\operatorname{diam}(H_{h-1}\cup H_h).
\end{equation}
The two extreme requests of this union will be its \emph{witness pair}.  For
$h=1$, one extreme may be the zero-cost dummy request at $(s_0,0)$.

\begin{lemma}[Local transition cost]
  \label{lem:local-transition-cost}
  Phase $h$ costs at most $\frac52X_h$.
\end{lemma}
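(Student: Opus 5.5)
The plan is to bound the two components of the phase cost separately and then compare both with the transition span $X_h$. The delay component is already fixed by the trigger analysis preceding the lemma: the first crossing occurs with equality, so the batch-delay cost of phase $h$ equals $b_h=D_h+g_h$. Here $g_h=\dist(H_{h-1},H_h)$ and $D_h=\operatorname{diam}(H_h)$. The movement component is the length of the walk of Algorithm~\ref{alg:visible-global-hull}. Write $H_h=[\ell,r]$, and let $e=e_{h-1}$ be the previous endpoint, which is an endpoint of $H_{h-1}$. The walk goes from $e$ to the nearer endpoint of $H_h$ and then sweeps across $H_h$, so its length is $\mu+D_h$ with $\mu=\min\{|e-\ell|,|e-r|\}$. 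Hence the phase costs
\[
  \mu+2D_h+g_h .
\]
For $h=1$ we use $H_0=\{s_0\}$, possibly together with the dummy request, and the same argument applies.

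Next we split into cases according to the position of $e$ relative to $[\ell,r]$. Note that $e$, $\ell$ and $r$ all lie in the hull of $H_{h-1}\cup H_h$, whose length is $X_h$.

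\emph{Case $e\in[\ell,r]$.} Then $H_{h-1}$ and $H_h$ intersect, so $g_h=0$. The nearer endpoint satisfies $\mu\le D_h/2$. The cost is therefore at most $\tfrac52D_h\le\tfrac52X_h$, and this case is where the constant $\tfrac52$ is attained.

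\emph{Case $e\notin[\ell,r]$.} Then $\mu=\dist(e,H_h)$. Both $e$ and all of $H_h$ lie in an interval of length $X_h$, so $\mu\le X_h-D_h$. Also $g_h\le\dist(e,H_h)=\mu\le X_h-D_h$, because $e\in H_{h-1}$. The cost is therefore at most $(X_h-D_h)+2D_h+(X_h-D_h)=2X_h$.

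The main obstacle is being careful that the previous endpoint $e$ may lie strictly inside the new hull. This is exactly why the bound is $\tfrac52$ rather than $2$: the server may have to walk half the new diameter before sweeping. Under automatic service nothing changes, since the walk and the trigger time are the same and any extra requests hit along the way are phase requests that incur no separate charge. The only other point to verify is that the sweep walk indeed covers every pending location, which holds because all of them lie in $H_h$.
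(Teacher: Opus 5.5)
Your proof is correct and takes essentially the paper's route. The phase cost is the covering walk $\mu+D_h$ with $\mu=\min\{|e-\ell|,|e-r|\}$, plus the trigger delay $b_h=D_h+g_h$, and you split cases on whether $e\in H_h$, with the interior case producing the $\tfrac32D_h$ walk. The only difference is bookkeeping: the paper bounds the walk by $\tfrac32X_h$ and $b_h$ by $X_h$ separately, whereas you bound their sum in each case, obtaining $2X_h$ in the exterior case via $g_h\le\dist(e,H_h)$.
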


\begin{proof}
  Write $H_h=[L_h,R_h]$ and let $e_{h-1}$ denote the current server position.  A
  shortest free-endpoint covering walk has length
  \[
    K_h=D_h+\min\{|e_{h-1}-L_h|,|e_{h-1}-R_h|\}.
  \]
  If $e_{h-1}\notin H_h$, this walk is monotone after reaching the nearer
  endpoint and $K_h\le X_h$.  If $e_{h-1}\in H_h$, then
  $K_h\le\frac32D_h\le\frac32X_h$.  In either case,
  $K_h\le\frac32X_h$.

  Also,
  \[
    b_h=D_h+\dist(H_{h-1},H_h)\le X_h:
  \]
  for disjoint hulls the right-hand side additionally contains
  $D_{h-1}$, while for intersecting hulls the distance term is zero.
  The phase cost is $K_h+b_h\le\frac52X_h$.
\end{proof}

The following two-terminal statement is the line prototype of
Theorem~\ref{thm:certificate-engine}.  It is stated for an extracted
certificate subinstance: all requests other than the two designated witnesses
in each window have already been deleted.

\begin{lemma}[Single-color pair certificate]
  \label{lem:single-color-matching}
  Let
  \[
    0=T_0<T_1<\cdots<T_m
  \]
  define ordered, interior-disjoint windows
  $W_1=[T_0,T_1]$ and $W_k=(T_{k-1},T_k]$ for $k\ge2$.
  Suppose that the certificate instance contains exactly two
  requests in $W_k$, their locations are at distance $X_k$, and
  \begin{equation}
    \label{eq:window-dominates-span}
    T_k-T_{k-1}\ge X_k.
  \end{equation}
  Then every feasible elective schedule $\sigma$ satisfies
  \begin{equation}
    \label{eq:single-color-bound}
    \sum_{k=1}^m X_k
    \le
    \operatorname{Delay}(\sigma)+2\operatorname{Mov}(\sigma)
    \le 2\cost(\sigma).
  \end{equation}
  Consequently, $\sum_kX_k\le2\OPT$.
\end{lemma}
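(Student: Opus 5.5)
This lemma is a direct specialization of Theorem~\ref{thm:certificate-engine}\textup{(a)}, so I would derive it from that theorem rather than repeat the charging argument.

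Take $G_k$ to be the two designated witness requests in window $W_k$, and set $C_k=X_k$. The groups are nonempty and pairwise identity-disjoint by hypothesis. Each request of $G_k$ arrives in $W_k=I_k$, and the extracted instance contains no other requests. If the first window uses the zero-cost dummy at $(s_0,0)$, this is exactly the optional dummy that part~(a) allows.

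Next I would check the two inequalities in \eqref{eq:metric-group-certificate-assumptions}. The temporal bound $0\le X_k\le T_k-T_{k-1}$ is hypothesis \eqref{eq:window-dominates-span}; nonnegativity holds because $X_k$ is a distance. For the spatial bound, let the two witness locations be $y,z$. Any tree in the complete graph on $X$ containing $y$ and $z$ contains a $y$--$z$ path. By the triangle inequality that path has total weight at least $d(y,z)=X_k$. Conversely, the single edge $yz$ is a feasible tree, so $\ST_X(\{y,z\})=X_k$ exactly. If both witnesses share a location, then $X_k=0$ and the bound is trivial.

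Part~(a) then gives $\sum_k X_k\le \operatorname{Delay}(\sigma)+2\operatorname{Mov}(\sigma)\le 2\cost(\sigma)$ for every feasible elective schedule $\sigma$ of the certificate instance. Taking the infimum over such schedules yields $\sum_k X_k\le 2\OPT$. By Remark~\ref{rem:metric-certificate-subinstances}, this also bounds the optimum of the full instance.

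I see no real obstacle. The one point to verify is that the window conventions match: both statements use a closed first window and half-open later windows with strictly increasing $T_k$. If a self-contained line proof were preferred instead, I would rerun the two-case argument of the theorem. Interior pairs would be charged to batch delay through the telescoping bound \eqref{eq:metric-group-interior-delay-charge}. Boundary pairs would get a trajectory connector of length at least $|y-z|$, and these connectors have congestion at most two. The congestion claim is the only delicate step, and it is already established in the proof of the theorem.
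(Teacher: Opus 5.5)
Your proposal is correct and follows essentially the same route as the paper: it applies part~(a) of Theorem~\ref{thm:certificate-engine} with the designated pairs as groups and $C_k=X_k$, using the fact that the ambient Steiner cost of two points equals their distance. Your extra verification of that two-point Steiner identity, including the coincident-location case, is a harmless elaboration of the paper's one-line justification.
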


\begin{proof}
  The ambient Steiner cost of two metric points is their distance.  Apply
  part~\textup{(a)} of Theorem~\ref{thm:certificate-engine} with groups given
  by the designated pairs and certificates $C_k=X_k$.
\end{proof}

\begin{theorem}[Constant competitiveness]
  \label{thm:global-hull-competitive}
  The global-hull algorithm is $10$-competitive for visible elective online
  service on a finite line with unit waiting times.
\end{theorem}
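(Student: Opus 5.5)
The proof combines Lemma~\ref{lem:local-transition-cost} with part~(b) of Theorem~\ref{thm:certificate-engine}, using the transition spans $X_h$ as certificates. By Lemma~\ref{lem:local-transition-cost}, phase $h$ costs at most $\frac52X_h$. Summing over phases gives the local realization bound~\eqref{eq:local-realization-bound} with $\alpha=5/2$ and $C_h=X_h$. Once we exhibit the two color classes with valid windows, the transfer~\eqref{eq:four-alpha-transfer} gives $\ALG\le 4\cdot\frac52\OPT=10\OPT$.

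\textbf{Construction of the certificates.} For each phase $h$, the group $G_h$ is the witness pair, namely the two extreme requests of $H_{h-1}\cup H_h$. One of them comes from $B_{h-1}$ and one from $B_h$; the exception is when both extremes lie in the same hull, in which case both come from that batch. For $h=1$, the dummy at $(s_0,0)$ may serve as a witness.

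\textbf{Windows.} All requests of $B_h$ arrive in $(\tau_{h-1},\tau_h]$, or in $[\tau_{h-1},\tau_h]$ under the arrival-first convention. The pair for transition $h$ therefore lies in $(\tau_{h-2},\tau_h]$. We split phases by parity of $h$. Within one color, say the odd phases $h=1,3,5,\ldots$, we use the windows $(\tau_{h-2},\tau_h]$; these are interior-disjoint and ordered, as in Figure~\ref{fig:certificate-engine}(a). The first window is $[0,\tau_1]$ with the dummy. Within each color, distinct transitions $h<h'$ use requests from batches $\{h-1,h\}$ and $\{h'-1,h'\}$. These batch sets are disjoint because $h'\ge h+2$, so the groups are identity-disjoint. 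Some windows may be unused, e.g.\ the leftover phase at the end; we simply merge or pad them with zero certificates, and a window with no real group can be absorbed into a neighbor.

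\textbf{Window-length hypothesis.} We must verify $C_h=X_h\le T_k-T_{k-1}=\tau_h-\tau_{h-2}$. By~\eqref{eq:phase-duration-budget}, $\tau_h-\tau_{h-2}\ge b_h+b_{h-1}$. It suffices to show $X_h\le b_{h-1}+b_h$. Since $X_h\le D_{h-1}+\dist(H_{h-1},H_h)+D_h$, $b_h=D_h+\dist(H_{h-1},H_h)$, and $b_{h-1}\ge D_{h-1}$, this holds. For $h=1$, the bound $X_1\le b_1$ follows since $H_0=\{s_0\}$ has zero diameter. The Steiner hypothesis is immediate, since the ambient Steiner cost of two points is their distance $X_h$. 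Lemma~\ref{lem:single-color-matching} then gives $\sum_{h\text{ odd}}X_h\le2\OPT$, and likewise for even $h$, so $\sum_hX_h\le 4\OPT$.

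\textbf{Main obstacle.} The delicate part is the bookkeeping at window boundaries. Requests arriving exactly at a trigger time $\tau_{h-1}$ belong to $B_{h-1}$, because arrivals are processed first, so they lie in $(\tau_{h-2},\tau_{h-1}]$ and the half-open windows are consistent. We also need that the extracted instance contains only the designated witnesses, which Remark~\ref{rem:metric-certificate-subinstances} permits, and that degenerate pairs (a single point, $X_h=0$) can be dropped harmlessly. I would check the parity/identity-disjointness claim carefully here, since a witness of $B_{h}$ could be reused by transitions $h$ and $h+1$. That reuse is exactly why the two parity classes are compared separately, each against the full $\OPT$.
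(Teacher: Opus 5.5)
Your proposal is correct and follows essentially the same route as the paper. It uses the witness pairs of $H_{h-1}\cup H_h$ as groups with $C_h=X_h$, the parity macro-windows $(\tau_{h-2},\tau_h]$ with initial windows starting at $0$, the chain $\tau_h-\tau_{h-2}\ge b_{h-1}+b_h\ge D_{h-1}+\dist(H_{h-1},H_h)+D_h\ge X_h$, and $\alpha=5/2$ from Lemma~\ref{lem:local-transition-cost} fed into part~(b) of Theorem~\ref{thm:certificate-engine}. Your hedge that $B_h$ might arrive in $[\tau_{h-1},\tau_h]$ and your padding or merging of ``unused'' windows are unnecessary: arrival-first processing puts $B_h$ strictly after $\tau_{h-1}$, and every parity window carries its own witness pair. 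The only boundary case the paper treats explicitly is discarding a possible zero-cost initial phase at $s_0$, so that $\tau_1>0$.
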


\begin{proof}
  Omit a possible initial zero-cost phase supported only at $s_0$; it has
  $X_h=0$ and does not affect the server state or either side of the desired
  inequality.  Thereafter the completion times of numbered phases are
  strictly increasing: a complete arrival epoch at a completion timestamp is
  included in the action that completes there.

  Color transition $h$ by the parity of $h$.  Transitions of one color pair
  vertex-disjoint online batches.  For transition $h\ge2$, place its witness
  pair in the macro-window
  \[
    W_h=(\tau_{h-2},\tau_h],
  \]
  and use the analogous initial window starting at zero for the first
  transition of each color.  The witness requests lie in
  $B_{h-1}\cup B_h$, hence in this window.  Windows of a fixed color are
  interior-disjoint.

  Let $g_h=\dist(H_{h-1},H_h)$.  By
  \eqref{eq:phase-duration-budget},
  \begin{align*}
    \tau_h-\tau_{h-2}
      \ge b_{h-1}+b_h 
      \ge D_{h-1}+g_h+D_h
       \ge X_h.
  \end{align*}
  The initial case follows from $\tau_1\ge b_1=X_1$.  The two witness
  locations have ambient Steiner cost $X_h$.  Together with
  Lemma~\ref{lem:local-transition-cost}, the preceding construction verifies
  part~\textup{(b)} of Theorem~\ref{thm:certificate-engine} with
  $C_h=X_h$ and $\alpha=5/2$.  Therefore
  \[
    \ALG
      \le\frac52\sum_hX_h
      \le10\OPT(I),
  \]
  which concludes the theorem.
\end{proof}

The coefficient two in Lemma~\ref{lem:single-color-matching} is best
possible.  On the two-point line $\{0,1\}$, use two unit-length windows.  Put
one endpoint pair at time $1$ and the second endpoint pair at time
$1+\varepsilon$.  Starting at zero, serve the first request at zero for free
and at time $1+\varepsilon$ traverse from zero to one, serving the remaining
three requests.  The cost is $1+\varepsilon$, whereas the two certificate
spans sum to two.

\begin{corollary}
  \label{cor:visible-automatic-competitive}
  The global-hull algorithm has a $10$-competitive automatic-service
  implementation.
\end{corollary}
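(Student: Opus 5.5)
The plan is to run the elective analysis of Theorem~\ref{thm:global-hull-competitive} on a \emph{virtual} schedule that the automatic implementation tracks exactly. The automatic implementation is Algorithm~\ref{alg:visible-global-hull} with ghosts. Every request hit automatically while a phase is active is served at once but kept in the virtual set $V$ as a ghost. This leaves the oldest release $\alpha$, the hull $H$ and the budget $b$ exactly as the elective rule would compute them. The comparison with the optimum then needs only that elective service is a relaxation, so $\OPT_{\mathrm{elective}}(I)\le\OPT_{\mathrm{auto}}(I)$; the full equality of Theorem~\ref{thm:auto-elective-equivalence} is not required.

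The first step is to classify the automatic events that are not ordinary phase services. Movement happens only at triggers, so between triggers the server rests at its endpoint $e$. The only requests it can hit there are those arriving at $e$, and each is served at age zero with zero movement, hence at zero cost.
\begin{itemize}
\item If a phase is active, such a request is appended to $V$ as a ghost.
\item If no phase is active and the epoch consists only of such anchor requests, the epoch is omitted (the idle anchor-only epochs of the pseudocode).
\end{itemize}
Let $I'$ be the input $I$ with the omitted requests deleted. I would argue that, apart from these zero-cost events, the automatic run coincides step for step with the elective global-hull run on $I'$, in which ghosts are treated as ordinary phase requests.

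The second step checks the cost of a phase. The trigger walk $P$ is instantaneous, so no request arrives during it. Every pending request lies in $H$, so the automatic hits along $P$ are exactly the real pending phase requests. The movement cost is $K_h$ as in Lemma~\ref{lem:local-transition-cost}. The delay charge is the maximum age over the real served requests, which is at most $\tau_h-\alpha=b_h$ because ghosts can only lower $\alpha$. Hence each phase costs at most $K_h+b_h\le\frac52X_h$, with $X_h$ defined from the hulls that include ghosts. The equality trigger still gives $\tau_h-\tau_{h-1}\ge b_h$.

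The third step reuses the certificate argument unchanged. Witness pairs may be ghosts, but ghosts are genuine requests of $I'$. Each request belongs to exactly one phase's set $V$, so the parity classes remain identity-disjoint and the window bounds of Theorem~\ref{thm:global-hull-competitive} go through verbatim. Theorem~\ref{thm:certificate-engine}(b) with Remark~\ref{rem:metric-certificate-subinstances} then yields
\[
\ALG(I)\le10\,\OPT_{\mathrm{elective}}(I')\le10\,\OPT_{\mathrm{elective}}(I)\le10\,\OPT_{\mathrm{auto}}(I).
\]

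I expect the main obstacle to be the second step. One must verify that automatic hits never include a request outside the current phase, and never force a positive-age service that would break the ``delay $=b_h$'' accounting. The instantaneous-walk convention together with the arrival-first rule should settle this. Still, the ghost bookkeeping at timestamps where a phase ends and an anchor arrival coincides needs care, and I would check that case explicitly.
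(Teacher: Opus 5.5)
Your proposal is correct and follows the paper's own argument: ghosts keep the automatic thresholds and trigger walks identical to the elective run, so each automatic phase costs no more than its elective counterpart, and the elective analysis of Theorem~\ref{thm:global-hull-competitive} then applies. Your point that only the trivial inclusion $\OPT_{\mathrm{elective}}\le\OPT_{\mathrm{auto}}$ is needed (rather than the full Theorem~\ref{thm:auto-elective-equivalence} that the paper cites) is valid. The timestamp case you flag is settled by arrival-first processing, which absorbs any epoch at the trigger time into $V$ before the walk and leaves no further arrival at that timestamp after the action.
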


\begin{proof}
  Retain a virtual ghost of every request automatically served at the parked
  server location and run the elective global-hull rule on the ghosts.
  Stationary automatic hits have zero delay.  When the virtual rule triggers,
  its single instantaneous walk covers every ghost location, so every
  remaining real phase request is automatically served in that event.  Its
  maximum waiting time is no greater than the virtual phase maximum, and the
  movement is identical.  Delete all phase ghosts after the walk.  Thus the
  automatic execution costs no more than the elective execution analyzed in
  Theorem~\ref{thm:global-hull-competitive}.  Finally,
  Theorem~\ref{thm:auto-elective-equivalence} identifies the two offline
  comparator values.
\end{proof}

The constants above are not optimized.  The structural point is that each
request occurrence may support both adjacent transition certificates, but
the two parity classes use it at most once each.  Each class is a genuine
subinstance lower bound, rather than a sum of mutually inconsistent local
offline optima.

\subsection{A universal visible-service lower bound}
\label{sec:visible-lower-bound}

We next record a lower bound that applies to
every metric class considered later.  It is not intended to match our upper
bounds; rather, it isolates the irreducible online decision of whether to
move now or preserve the current endpoint for subsequent requests.

The adversary repeatedly forces a persistent-endpoint choice; three coupled
clairvoyant schedules average out that choice.

\begin{theorem}[Two-point deterministic lower bound]
  \label{thm:visible-two-point-lower-bound}
  Fix the two-point line $X=\{0,D\}$, $D>0$, with initial server position
  $0$.  Under either elective or automatic service, every deterministic
  visible online algorithm has competitive ratio at least $3$.
  More precisely, for every $\rho<3$ there is a finite input $I$ with
  $\ALG(I)>\rho\OPT(I)$.  The conclusion also survives a fixed additive
  constant in the definition of competitiveness.
\end{theorem}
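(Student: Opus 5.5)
The plan is an adaptive adversary that repeatedly releases a single request at the point the online server does not occupy. We then compare the online cost with the cheapest of a few clairvoyant schedules, all of which know the entire request sequence in advance.

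\textbf{Adversary.} Start with a request at $D$ at time $t_1=0$, with the server at $0$. Let $w_1\ge0$ be the waiting time after which the deterministic algorithm leaves $0$ and serves it.
\begin{itemize}
\item If the algorithm never moves, its cost is unbounded, while $\OPT\le D$; this case is trivial.
\item At the moment $t_2=t_1+w_1$ when the server arrives at $D$, release a request at $0$. Continue alternately: request $i$ is released at the point the server has just left, at time $t_i$, and is served after waiting $w_i$.
\end{itemize}
Under automatic service no request ever sits at the parked server position, so both semantics behave identically here. Using Lemma~\ref{lem:zero-age-arrivals} and the single-pending-request structure, we may assume that each online event serves exactly one request. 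After $n$ rounds,
\[
  \ALG\ge\sum_{i=1}^n(D+w_i).
\]
The sequence $(w_i)$ is fixed once the algorithm is fixed, so the offline comparators may depend on it.

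\textbf{Offline comparators.} Note that $t_{i+1}-t_i=w_i$. I would use three schedules, indexed by a residue $r\in\{0,1,2\}$. Each schedule makes a per-round choice between two options:
\begin{itemize}
\item serve request $i$ immediately after pre-positioning, at movement cost $D$ and zero delay; or
\item merge two consecutive requests $i,i+1$, which sit at opposite points, into one batch at time $t_{i+1}$. This costs delay $w_i$ and movement $D$, and ends at the location of request $i$ (or $i+1$) depending on the walk direction. Since the offline endpoint is free, the walk direction is chosen to align with the next round.
\end{itemize}
The residue $r$ determines which rounds are merged (for example, rounds $i\equiv r \pmod 3$ are paired with their successor), and endpoint parity is arranged so that no extra relocation is needed. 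The target inequality is
\[
  \sum_{r=0}^{2}\cost(\sigma_r)\le\sum_{i}(D+w_i)+O(D).
\]
Roughly, each $D$ term and each $w_i$ term should be paid by at most one of the three schedules on average. This gives
\[
  \OPT\le\min_r\cost(\sigma_r)\le\tfrac13\ALG+O(D).
\]
If the combinatorics of the residue-$3$ pairing do not balance exactly, the fallback is to let each schedule choose per round between "pay $D$" and "pay $w_i$" with a Ski-rental-style case split on whether $w_i\ge D$, and then re-average.

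\textbf{Additive constant and the main obstacle.} Letting $n\to\infty$, the $O(D)$ boundary loss becomes negligible whenever $\sum_i(D+w_i)\to\infty$, which always holds because each term is at least $D$. Hence for every $\rho<3$ some finite prefix gives $\ALG>\rho\OPT$, even with a fixed additive constant, since the gap grows linearly in $n$. The main obstacle is the construction and averaging of the three coupled offline schedules. They must be designed so that endpoint continuity is respected on the two-point metric, which means every merged batch must end where the next pre-positioning needs it, while still distributing each $D$ and each $w_i$ to only one schedule on average. I expect to first write the per-round accounting as a $3\times$(round type) table and then tune the pairing pattern until the column sums equal $\sum_i(D+w_i)$.
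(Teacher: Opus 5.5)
Your overall plan is the right one: an adaptive adversary that always releases the next request at the endpoint the server does not occupy, compared against three coupled clairvoyant schedules whose costs sum to about $\ALG$. As written, however, it has two genuine gaps.

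\textbf{Alternation is not forced.} The algorithm chooses where its walk ends. If it serves $q_k$ by $0\to D\to 0$, or returns later by a pure move, the next request lands at the \emph{same} point as $q_k$. Your claims that consecutive requests ``sit at opposite points'' and that request $i$ is released where the server ``has just left'' then fail. Your online bound $\ALG\ge\sum_i(D+w_i)$ also discards exactly the extra $D$ the algorithm pays in such a round. You need that extra $D$, because the offline side also spends $2D$ in a repeat round. Set $e_k=\mathbf{1}\{x_{k+1}=x_k\}$ and $E=\sum_k e_k$, and prove $\ALG\ge(n+E)D+\sum_k w_k$.

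A smaller fix concerns timing: release $q_{k+1}$ at $t_k+\varepsilon$, not at $t_k$. Under arrival-first processing, a release at the service timestamp is seen before the action, so your recursively defined input is not well defined. The offsets add only $(n-1)\varepsilon$ to the offline delay.

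\textbf{The three schedules are never constructed.} This is the heart of the proof, and you leave it as a plan to ``tune the pairing pattern.'' The primitive you propose is also wrong. Serving an unpaired request ``after pre-positioning at movement cost $D$'' would make the three schedules together pay roughly twice the online movement. The missing idea is an invariant under which unpaired requests cost nothing, because some schedule already stands at their location. After $q_k$, maintain:
\begin{itemize}
\item one schedule with only $q_k$ pending, standing opposite $x_k$;
\item a second schedule that has served everything and stands at that same point;
\item a third schedule that has served everything and stands at $x_k$.
\end{itemize}

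If $x_{k+1}\ne x_k$, the first two schedules serve $q_{k+1}$ at age zero. The first then walks to $x_k$ to serve $q_k$, and the third defers $q_{k+1}$. The roles rotate, at total movement $D$ and delay $a_{k+1}-a_k$. This is exactly your residue-mod-$3$ pattern.

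If $x_{k+1}=x_k$, the third schedule serves $q_{k+1}$ for free and then moves to the opposite point. The first walks to $x_k$ and serves $q_k,q_{k+1}$ together, and the second defers $q_{k+1}$. The total is movement $2D$ and delay $a_{k+1}-a_k$.

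With $3D$ to initialize and $D$ to finish, this gives $3\OPT\le\ALG+3D+(n-1)\varepsilon$. Both the ratio claim and the additive-constant claim then follow as you describe.
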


\begin{proof}
  Fix an integer $n$ and $\varepsilon>0$.  Construct requests
  $q_1,\ldots,q_n$ recursively.  At time $a_1=0$, release $q_1$ at location
  $x_1=D$.  If $q_k$ is never served, the algorithm is infeasible.  Otherwise,
  let $t_k$ denote its first service time and put
  \[
    w_k=t_k-a_k.
  \]
  For $k<n$, set $a_{k+1}=t_k+\varepsilon$.  Immediately before the complete
  arrival epoch at $a_{k+1}$ is processed, let $x_{k+1}$ denote the endpoint
  opposite the algorithm's current position and release $q_{k+1}$ there.
  Arrival-first ordering makes the new request distance $D$ from the server
  when it is announced.

  Let
  \[
    e_k=\mathbf{1}\{x_{k+1}=x_k\},
    \qquad E=\sum_{k=1}^{n-1}e_k.
  \]
  Between the release of $q_k$ and the next release, the online trajectory
  starts opposite $x_k$, visits $x_k$ to serve $q_k$, and ends opposite
  $x_{k+1}$.  It therefore has length at least $D$ if $x_{k+1}\ne x_k$ and
  at least $2D$ if $x_{k+1}=x_k$.  The final request requires at least one
  further traversal of length $D$.  Since the service event containing $q_k$
  incurs delay at least $w_k$, the disjoint interarrival segments give
  \begin{equation}
    \label{eq:visible-lower-alg}
    \ALG\ge (n+E)D+\sum_{k=1}^n w_k.
  \end{equation}

  We next construct three coupled clairvoyant schedules.  After processing
  $q_k$, maintain the following invariant across the three schedules: one
  schedule has only $q_k$ pending and stands at the endpoint opposite $x_k$;
  a second schedule has served every request through $q_k$ and stands at that
  same endpoint; and the third has served every request through $q_k$ and
  stands at $x_k$.

  For $q_1$, one schedule stays at $0$, one moves directly from $0$ to $D$,
  and one traverses $0\to D\to0$.  This establishes the invariant with total
  movement $3D$ and zero delay.  Suppose the invariant holds for $q_k$.
  Write $P$, $A$, and $B$ for the pending schedule, the caught-up schedule at
  the same endpoint as $P$, and the caught-up schedule at $x_k$, respectively.
  At time $a_{k+1}$, update them as follows.
  \begin{itemize}
    \item If $x_{k+1}\ne x_k$, then $P$ and $A$ are already at $x_{k+1}$ and
      serve $q_{k+1}$ there with zero delay.  Schedule $P$ then moves to $x_k$
      and serves $q_k$, while $B$ leaves $q_{k+1}$ pending.  The invariant is
      restored with movement $D$ and total delay $a_{k+1}-a_k$.
    \item If $x_{k+1}=x_k$, then $B$ serves $q_{k+1}$ with zero delay.
      Schedule $P$ moves to $x_k$ and serves $q_k$ and $q_{k+1}$ together,
      while $A$ leaves $q_{k+1}$ pending.  Finally, $B$ moves to the opposite
      endpoint.  The invariant is restored with movement $2D$ and total delay
      $a_{k+1}-a_k$.
  \end{itemize}
  After the last update, move the one lagging schedule to $x_n$ at time $a_n$
  and serve $q_n$, adding movement $D$ and zero delay.  These actions are
  feasible under elective service.  They are also feasible under automatic
  service: every co-located arrival is served at age zero, and the described
  walks encounter exactly the remaining pending requests they are meant to
  clear.

  Let $C_1,C_2,C_3$ denote the costs of these schedules.  Their delay charges
  partition the adjacent arrival gaps, and their total movement is
  $(n+E+3)D$.  Since
  \[
    a_{k+1}-a_k=w_k+\varepsilon,
  \]
  we obtain
  \begin{equation}
    \label{eq:visible-lower-opt}
    3\OPT
      \le C_1+C_2+C_3
      \le (n+E+3)D+\sum_{k=1}^{n-1}w_k+(n-1)\varepsilon
      \le \ALG+3D+(n-1)\varepsilon.
  \end{equation}
  Combining \eqref{eq:visible-lower-alg} and
  \eqref{eq:visible-lower-opt} gives
  \[
    \frac{\ALG}{\OPT}
      \ge
      \frac{3\ALG}{\ALG+3D+(n-1)\varepsilon}
      \ge
      \frac{3nD}{(n+3)D+(n-1)\varepsilon}.
  \]
  Given $\rho<3$, first choose $n$ so that $3n/(n+3)>\rho$, and then choose
  $\varepsilon/D$ sufficiently small.  This proves the pure ratio lower bound
  of $3$.

  The construction is also robust to an additive constant $\beta$.  For
  every $1\le\rho<3$, inequality~\eqref{eq:visible-lower-opt} and
  $\ALG\ge nD$ imply
  \[
    \ALG-\rho\OPT
      \ge
      \left(\frac{(3-\rho)n}{3}-\rho\right)D
      -\frac{\rho(n-1)\varepsilon}{3}.
  \]
  For the fixed $D>0$, choose $n$ large and then $\varepsilon$ small to make
  this exceed any fixed $\beta$.

  Although the input was described recursively, for a fixed deterministic
  algorithm, $n$, and $\varepsilon$, it can be simulated completely before
  execution: future requests do not affect prefix decisions.  Thus the
  construction defines one fixed finite input tailored to the algorithm.
  Pure movements and delayed service can only increase the online cost.  The
  three coupled offline schedules above establish the same comparison under
  both service semantics.
\end{proof}

The same two-point instance is itself a line and a weighted tree, and is a
valid arbitrary finite metric.  Theorem
\ref{thm:visible-two-point-lower-bound} therefore lower-bounds every visible
setting in this paper.

\subsection{Exact line optimization under elective service}
\label{sec:elective-offline}

The normal form of Proposition~\ref{prop:elective-normal-form} reduces the
offline problem to choosing consecutive arrival blocks and their terminal
server positions.  Equivalently, the recurrence chooses the final
consecutive block together with its predecessor endpoint; unlike in static
aggregation, the endpoint is part of the state.  Line order makes the
resulting transition kernel explicit.

For $1\le p\le j\le m$, let
\[
  L_{p,j}=\min\{x(q):q\in R_p\cup\cdots\cup R_j\},
  \qquad
  U_{p,j}=\max\{x(q):q\in R_p\cup\cdots\cup R_j\}.
\]
For starting point $y$ and ending point $z$, define
\begin{equation}
  \label{eq:line-route-kernel}
  K_{p,j}(y,z)
  =
  (U_{p,j}-L_{p,j})
  +
  \min\left\{
    |y-L_{p,j}|+|U_{p,j}-z|,
    |y-U_{p,j}|+|L_{p,j}-z|
  \right\}.
\end{equation}

\begin{lemma}[Exact block-transition cost]
  \label{lem:block-transition}
  The minimum movement needed to start at $y$, visit every request location
  in epochs $p,\ldots,j$, and end at $z$ is $K_{p,j}(y,z)$.
\end{lemma}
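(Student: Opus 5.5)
We prove matching upper and lower bounds. Write $L=L_{p,j}$, $U=U_{p,j}$, and let $A$ denote the set of request locations in epochs $p,\ldots,j$. Then $L,U\in A$ and $A\subseteq[L,U]$. Since all sites between two points may be listed on a monotone move without extra length (the convention of Section~\ref{sec:model}), visiting $L$ and $U$ along monotone segments suffices to visit all of $A$.

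\emph{Upper bound.} The first term of the minimum in \eqref{eq:line-route-kernel} is realized by the walk $y\to L\to U\to z$, each leg traversed monotonically. This walk visits every site of $[L,U]\cap X$, hence all of $A$, and its length is $|y-L|+(U-L)+|U-z|$. The symmetric walk $y\to U\to L\to z$ realizes the second term. Hence the minimum movement is at most $K_{p,j}(y,z)$.

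\emph{Lower bound.} Take any walk from $y$ to $z$ that visits all of $A$; in particular it visits both $L$ and $U$. Consider the first time the walk reaches one of $\{L,U\}$. By symmetry, suppose it reaches $L$ no later than $U$. Then the walk contains three consecutive subwalks: from $y$ to some visit of $L$, from there to some later visit of $U$, and from that visit of $U$ to $z$. By the triangle inequality on the line, these three subwalks have lengths at least $|y-L|$, $U-L$, and $|U-z|$, respectively. So the total length is at least the first term of the minimum. If instead $U$ is reached first, the same argument bounds the length below by the second term. In both cases the length is at least $K_{p,j}(y,z)$.

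The only delicate point is the ordering choice in the lower bound. One must pick the visit of $U$ occurring after the chosen visit of $L$ (or vice versa), so that the three subwalks are disjoint consecutive pieces of the walk. This is always possible because both endpoints of the hull are visited at least once. The degenerate case $L=U$ is consistent: both terms then equal $|y-L|+|L-z|$, which is the shortest path from $y$ to $z$ through $L$. No earlier result is needed beyond the monotone-listing convention, which makes covering the interval equivalent to covering $A$.
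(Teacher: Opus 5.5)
Your proof is correct and follows the same route as the paper. Any feasible walk visits both extremes $L_{p,j}$ and $U_{p,j}$; splitting on which extreme is reached first and applying the triangle inequality to the three consecutive pieces gives the lower bound, and traversing the extremes in the cheaper order, listing the intermediate sites on the monotone middle leg, attains it.
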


\begin{proof}
  Every feasible walk visits both spatial extremes.  Depending on which
  extreme is visited first, its length is at least one of the two expressions
  in~\eqref{eq:line-route-kernel}.  Traversing the extremes in the cheaper
  order attains the corresponding expression and visits every intervening
  request location.
\end{proof}

Introduce a table $F[j,z]$, initialized by
\[
  F[0,s_0]=0,
  \qquad
  F[0,z]=+\infty\quad(z\ne s_0).
\]
For every completed layer $h$, define its line distance transform
\begin{equation}
  \label{eq:line-distance-transform}
  G_h(x)=\min_{y\in X}\{F[h,y]+|y-x|\},
  \qquad x\in X,
\end{equation}
and retain a deterministically chosen minimizer $Y_h(x)$.

\begin{theorem}[Exact elective offline dynamic program]
  \label{thm:elective-offline-dp}
  For $j\ge1$ and $z\in X$,
  \begin{equation}
    \label{eq:elective-offline-dp}
    F[j,z]
    =
    \min_{\substack{1\le p\le j\\y\in X}}
    \left\{
      F[p-1,y]
      +K_{p,j}(y,z)
      +a_j-a_p
    \right\}.
  \end{equation}
  Moreover,
  \[
    \OPT=\min_{z\in X}F[m,z].
  \]
  If $Q$ is the number of request occurrences, the distance-transform
  implementation runs in $O(Q+m^2N)$ time and uses $O(mN)$ working space
  beyond the input, including traceback pointers.
\end{theorem}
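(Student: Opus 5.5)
The plan is to show separately that the recurrence value never exceeds the optimum and that it is always attained by an actual schedule, and then to count the running time. Throughout, $F[j,z]$ is read as the minimum cost of a normal-form elective schedule that serves exactly epochs $1,\ldots,j$ as consecutive blocks and ends at $z$. Proposition~\ref{prop:elective-normal-form} lets us restrict attention to schedules whose nonempty batches are consecutive blocks of complete epochs, each served at the latest release $a_j$ of its block, with all movement since the previous block folded into the block's walk (pure movement after the last block deleted). Such a block $R_p\cup\cdots\cup R_j$ served at $a_j$ has delay charge exactly $a_j-a_p$, since its oldest request arrived at $a_p$. By Lemma~\ref{lem:block-transition}, the cheapest movement from the previous endpoint $y$ that visits every location of the block and ends at $z$ is $K_{p,j}(y,z)$.

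For the lower bound $\OPT\ge\min_z F[m,z]$, fix any feasible elective schedule and normalize it by Proposition~\ref{prop:elective-normal-form}. Its blocks are $[p_1,j_1],\ldots,[p_r,j_r]$ with $p_1=1$, $j_r=m$ and $p_{i+1}=j_i+1$, and its endpoints are $z_0=s_0,z_1,\ldots,z_r$. The movement of block $i$ is at least $K_{p_i,j_i}(z_{i-1},z_i)$ by Lemma~\ref{lem:block-transition}, and its delay is exactly $a_{j_i}-a_{p_i}$. Induction on $i$, using the recurrence with predecessor layer $p_i-1=j_{i-1}$, then gives $F[j_i,z_i]\le$ (cost of the first $i$ blocks). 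In particular the schedule costs at least $\min_z F[m,z]$. Taking the infimum over schedules gives the inequality.

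For the reverse inequality, unroll the argmin pointers from $\min_z F[m,z]$ to obtain blocks and endpoints. For each block, execute the optimal kernel walk at time $a_j$ and serve the block's requests electively, ignoring any other pending requests the walk meets. This is feasible: all requests of the block have arrived by $a_j$, the blocks are consecutive, and their service times are nondecreasing. Its cost is exactly the sum of the recurrence terms, which also proves that the optimum is attained.

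For the complexity, first compute $L_{p,j}$ and $U_{p,j}$ for all $p\le j$. For a fixed $j$, sweep $p$ downward from $j$ and maintain a running min and max; per-epoch extremes cost $O(Q)$ in total, so the whole table costs $O(Q+m^2)$. The main obstacle is to avoid the naive $O(m^2N^2)$ minimization over $y$. The key observation is that $K_{p,j}(y,z)$ splits as $\min\{|y-L|+c_1(z),\,|y-U|+c_2(z)\}$, with $c_1(z)=(U-L)+|U-z|$ and $c_2(z)=(U-L)+|L-z|$. Hence
\[
\min_y\{F[p-1,y]+K_{p,j}(y,z)\}=\min\{G_{p-1}(L_{p,j})+c_1(z),\;G_{p-1}(U_{p,j})+c_2(z)\}.
\]
Each transform $G_h$ can be computed in $O(N)$ time by a left-to-right and a right-to-left sweep along the sorted line. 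With all transforms in hand, each pair $(p,j)$ costs $O(1)$ to obtain the two constants $A=G_{p-1}(L)+(U-L)$ and $B=G_{p-1}(U)+(U-L)$. The pair then contributes $\min\{A+|U-z|,\,B+|L-z|\}+a_j-a_p$ for every $z$, and a direct evaluation over all $z$ would cost $O(N)$ per pair, i.e.\ $O(m^2N)$ overall. That already meets the claimed time bound. It requires only $O(N)$ scratch space per layer, and storing $F$, $G$, and pointers (the chosen $p$, the extreme used, and $Y_{p-1}$) gives $O(mN)$ working space.
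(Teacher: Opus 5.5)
Your proposal is correct and follows the paper's proof essentially step for step. The lower bound $\min_zF[m,z]\le\OPT$ comes from normalization, Lemma~\ref{lem:block-transition} and induction over blocks; traceback walks executed at $a_j$ give the reverse inequality and attainment; and the distance-transform identity with two-scan $G_h$ gives the $O(Q+m^2N)$ time bound.

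One small correction concerns space. Do not precompute and store the full $\Theta(m^2)$ table of $L_{p,j},U_{p,j}$: it can exceed the claimed $O(mN)$ space when $m\gg N$. Instead, keep only the $O(m)$ per-epoch extremes and update $L,U$ on the fly inside the loop over $j$, as the paper does. During traceback, either recompute them from $p$ and $j$ or store them in the $O(1)$-size pointer.
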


\begin{proof}
  Let $V=\min_{z\in X}F[m,z]$.  Proposition
  \ref{prop:elective-normal-form} transforms every feasible schedule into a
  consecutive-block schedule of no greater cost.  Decomposing any such
  schedule at its final block and inducting over its blocks shows, using
  Lemma~\ref{lem:block-transition}, that $V$ is no greater than the cost of
  the schedule.  Hence $V\le\OPT$, where $\OPT$ is initially an infimum.

  Conversely, a traceback through the finite recurrence specifies a
  consecutive partition, an endpoint sequence, and for every block a walk
  attaining $K_{p,j}(y,z)$.  Execute that walk at the latest arrival epoch of
  its block.  Elective service permits it to ignore requests assigned to
  later blocks.  The resulting feasible schedule has cost $V$, so
  $\OPT\le V$.  Equality follows and the optimum is attained.

  It remains to establish the running time.  For fixed $p,j,z$, distributing
  the minimum in~\eqref{eq:line-route-kernel} over the predecessor endpoint
  gives the exact identity
  \begin{align}
    &\min_{y\in X}\{F[p-1,y]+K_{p,j}(y,z)\}\notag\\
    &\quad=(U_{p,j}-L_{p,j})
      +\min\left\{
        G_{p-1}(L_{p,j})+|U_{p,j}-z|,
        G_{p-1}(U_{p,j})+|L_{p,j}-z|
      \right\}.
    \label{eq:line-dp-transformed-transition}
  \end{align}
  Because $X=\{x_1<\cdots<x_N\}$, all values of $G_h$ and their minimizers
  are computed in $O(N)$ time by two scans.  The forward scan uses
  \[
    G_h^{\leftarrow}(x_i)
      =\min\{F[h,x_i],
              G_h^{\leftarrow}(x_{i-1})+x_i-x_{i-1}\},
  \]
  and the backward scan is symmetric; their pointwise minimum is $G_h$.
  Store with each scan value an attaining predecessor endpoint, using the
  fixed tie order.

  First compute the minimum and maximum request location of every epoch in
  $O(Q)$ time.  For each fixed $j$, scan $p=j,j-1,\ldots,1$ and maintain
  $L_{p,j},U_{p,j}$ incrementally.  Equation
  \eqref{eq:line-dp-transformed-transition} then evaluates each
  $(p,j,z)$ candidate in constant time, for $O(m^2N)$ total transition time.
  The tables $F,G$, their transform minimizers, and one traceback pointer per
  state use $O(mN)$ working space.  A traceback pointer records which of the
  two terms in~\eqref{eq:line-dp-transformed-transition} won and the
  corresponding endpoint $Y_{p-1}(L_{p,j})$ or
  $Y_{p-1}(U_{p,j})$, so exact reconstruction is preserved.
\end{proof}

Algorithm~\ref{alg:line-offline-dp} makes the recurrence and its traceback
explicit.  Whenever several predecessors have equal value, fix any
deterministic tie-breaking rule.

\begin{algorithm}[ht]
  \caption{Exact endpoint DP on a finite line}
  \label{alg:line-offline-dp}
  \begin{algorithmic}[1]
    \Require Arrival epochs $(R_1,a_1),\ldots,(R_m,a_m)$, sites $X$, start $s_0$
    \State Set $F[0,s_0]\gets0$ and $F[0,z]\gets+\infty$ for $z\ne s_0$
    \State Compute $G_0(x)$ and minimizers $Y_0(x)$ by forward/backward scans
    \For{$j=1,\ldots,m$}
      \State Set $F[j,z]\gets+\infty$ for every $z\in X$
      \State $L\gets+\infty$; $U\gets-\infty$
      \For{$p=j,j-1,\ldots,1$}
        \State $L\gets\min\bigl(L,\min_{q\in R_p}x(q)\bigr)$
        \State $U\gets\max\bigl(U,\max_{q\in R_p}x(q)\bigr)$
        \For{$z\in X$}
          \State $v_L\gets G_{p-1}(L)+|U-z|$;
            $v_U\gets G_{p-1}(U)+|L-z|$
          \State Choose the smaller term using the fixed tie order
          \State Set $y\gets Y_{p-1}(L)$ if $v_L$ wins, and
            $y\gets Y_{p-1}(U)$ otherwise
          \State $v\gets(U-L)+(a_j-a_p)+\min\{v_L,v_U\}$
          \If{$v<F[j,z]$, or the candidate wins the fixed tie-breaking rule}
            \State $F[j,z]\gets v$;
              $\operatorname{pred}[j,z]\gets(p,y,L,U,\text{winning order})$
          \EndIf
        \EndFor
      \EndFor
      \State Compute $G_j(x)$ and minimizers $Y_j(x)$ by two line scans
    \EndFor
    \State $z_m\gets\arg\min_{z\in X}F[m,z]$
    \State Trace back $\operatorname{pred}$; for every recovered block $[p,j]$,
      execute a shortest $K_{p,j}$-walk at time $a_j$ and electively serve
      exactly the identities in epochs $p,\ldots,j$
    \State \Return the resulting schedule and $F[m,z_m]$
  \end{algorithmic}
\end{algorithm}

The endpoint state is essential.  Unlike static aggregation, the spatial
cost of a future batch depends on the endpoint selected by every preceding
transition.

\subsection{Automatic service as a semantic bridge}
\label{sec:automatic-service}

The elective normal form does not extend to automatic service; the failure
already occurs on a two-point line.  Nevertheless, equality of the two
offline values lets every certificate analysis use one common comparator.

\begin{proposition}[Failure of consecutive blocks]
  \label{prop:auto-counterexample}
  Under automatic service, there need not exist an optimal schedule whose
  batches are consecutive blocks of arrival epochs.
\end{proposition}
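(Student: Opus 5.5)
The plan is to give an explicit instance on the two-point line $X=\{0,D\}$ with initial position $s_0=0$ and $D$ large compared with the time scale, say $D=10$. It has three arrival epochs:
\begin{itemize}
  \item request $A$ at location $D$ at time $a_1=0$;
  \item request $B$ at location $0$ at time $a_2=1$;
  \item request $C$ at location $D$ at time $a_3=1+\varepsilon$.
\end{itemize}
Automatic service will force the non-consecutive structure. A request arriving at the current server position is served immediately in a singleton batch. So a server that parks at $0$ while $A$ is pending cannot include $B$ in the batch that eventually serves $A$.

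First I exhibit a cheap schedule. Wait at $0$. At time $1$, $B$ arrives at the server position and is automatically served with zero movement and zero delay. At time $1+\varepsilon$, move $0\to D$ and serve $A$ and $C$ together. The cost is $D+(1+\varepsilon)$, and the batches are $\{B\}$ and $\{A,C\}$. The second batch skips epoch $2$, so it is not a consecutive block.

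Second, I show every schedule whose batches are consecutive blocks costs at least $2D$. This is the main step, and it is a case analysis on the batch containing $A$.
\begin{itemize}
  \item If $A$ is served strictly before time $1$, the server has traveled $0\to D$ by then. It must later visit $0$ for $B$ and $D$ for $C$ (or already be at $D$ for $C$). The cheapest option is to stay at $D$, let $C$ be served on arrival, and then travel to $0$ for $B$. This gives movement at least $2D$.
  \item Otherwise $A$ is still pending at time $1$. On the two-point line, any walk that visits $D$ before time $1$ automatically serves $A$. Hence the server has not visited $D$ at any earlier time and stands at $0$ when $B$ arrives. Arrival-first ordering and automatic service then serve $B$ at once in a singleton batch.
  \item In that case the batch that later serves $A$ omits epoch $2$. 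It contains epoch $1$, and it also contains epoch $3$ unless $C$ is served separately. Either way, $\{B\}$ is a separate batch that neither contains epoch $1$ nor can be part of $A$'s block. If $A$'s batch is the singleton $\{A\}$, the partition would be $\{A\},\{B\},\{C\}$ with $\{A\}$ served at time $\ge 1$ after $B$. That is not ordered as consecutive blocks in time: chronological consecutive blocks require the block of epoch $1$ to be served before the block of epoch $2$. So this case contradicts the assumed form.
\end{itemize}

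I expect the main obstacle to be stating precisely what ``consecutive blocks'' forbids in the last case, namely an earlier epoch being served after a later singleton. I will state it as in Proposition~\ref{prop:elective-normal-form}: the batches, taken in chronological order, partition the epochs into consecutive intervals in increasing order. With that convention every consecutive-block schedule costs at least $2D$. Since $D+1+\varepsilon<2D$ for $D=10$, no optimal schedule has the consecutive-block form, which proves the proposition.
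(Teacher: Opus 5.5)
Your instance does not prove the proposition. The failure is in your second case, where you assert that $B$ is served ``at once in a singleton batch.'' The model does not force this. With the server parked at $0$ at time $1$, the atomic action at time $1$ may consist of the single walk $(0,D)$. That walk starts at the parked point, so in one component it hits $B$ and then $A$, giving one batch $\{A,B\}$ with delay charge $1$. The paper's own proof treats exactly this ``same time-$1$ component'' subcase.

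The server then sits at $D$, and $C$, arriving at $D$ at time $1+\varepsilon$, is served at age zero. The batches $\{A,B\}$ and $\{C\}$ are the consecutive blocks $\{1,2\},\{3\}$, served at $a_2$ and $a_3$ exactly as in the normal form, and the cost is $D+1$. This is below your claimed bound $2D$, and even below your nonconsecutive schedule's $D+1+\varepsilon$.

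In fact $D+1$ is optimal for your instance. Every schedule moves at least $D$. Either it serves $A$ before time $1$, which forces a later return to $0$ for $B$ and movement at least $2D$, or $A$'s batch is charged delay at least $1$. So your instance has an optimal consecutive-block schedule and cannot witness the proposition. Your case analysis would only be valid under a stricter reading, in which a co-located arrival must be served in its own singleton event before any walk at that timestamp. The paper does not adopt that reading.

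The missing idea is to penalize this merged option. The request after $B$ should also lie at $0$, not at $D$. Then a walk serving $A$ together with $B$ strands the server at $D$ and forces a return. Your $C$ at $D$ does the opposite: it rewards the consecutive schedule with a free service.

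The paper uses $A=(D,0)$, $B=(0,1)$, $C=(0,2)$, $E=(D,3)$ with $D>3$. Parking at $0$ through time $3$ serves $B$ and $C$ for free and then serves $\{A,E\}$, at total cost $D+3$. In any consecutive-block schedule, one of the following holds:
\begin{itemize}
\item the server is already at $D$ when $B$ arrives, so a return to $0$ is needed for $B$;
\item $A$ was served earlier via a round trip, and the server has already returned to $0$;
\item $A$ shares the time-$1$ component with $B$, and the later request $C$ at $0$ forces a return from $D$.
\end{itemize}
Each case gives movement at least $2D>D+3$. Your case analysis carries over to this instance once you add the same-component subcase.
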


\begin{proof}
  Let $X=\{0,D\}$ with $D>3$, and let the server start at $0$.  Consider the
  four requests
  \[
    A=(D,0),
    \qquad
    B=(0,1),
    \qquad
    C=(0,2),
    \qquad
    E=(D,3),
  \]
  listed as $(\text{location},\text{arrival time})$.  Keep the server at
  $0$ through time $3$.  Requests $B$ and $C$ are automatically served at
  their arrival times and incur zero delay.  At time $3$, move to $D$ and
  serve $A$ and $E$ in one batch.  This schedule costs $D+3$, and its batch
  order is
  \[
    \{B\},\ \{C\},\ \{A,E\},
  \]
  which is not consecutive in arrival order.

  We show that every consecutive-block automatic schedule has movement at
  least $2D$.  Consider the server position when $B$ arrives at time $1$.
  If it is at $D$, the trajectory has already crossed from $0$ to $D$ and
  automatically served $A$; eventually serving $B$ requires a return to
  $0$, for total movement at least $2D$.

  If the server is at $0$, then $B$ is served immediately at time $1$.
  Consecutiveness requires $A$ to be served no later than, or in the same
  component as, this event.  If $A$ was served earlier, the trajectory has
  already visited $D$ and returned to $0$.  If $A$ is served in the same
  time-$1$ component, that component travels from $0$ to $D$; after the
  service occurrence of $A$, the future request $C$ at $0$ forces a return
  traversal from $D$ to $0$, whether as a suffix of that component or in a
  later action.  In either subcase the movement is at least $2D$.  These are
  the only two server
  positions in the metric.  Thus every consecutive-block schedule costs at
  least $2D>D+3$, whereas the displayed nonconsecutive schedule costs
  $D+3$.  No optimal schedule can have the consecutive-block form.
\end{proof}

Proposition~\ref{prop:auto-counterexample} is a separation between the
structures of optimal schedules, not between their optimal values.  Rather
surprisingly, the elective dynamic program still computes the exact
automatic optimum.

\begin{theorem}[Offline equivalence of service semantics]
  \label{thm:auto-elective-equivalence}
  Under the finite-metric, instantaneous-movement, unit-waiting model of
  Section~\ref{sec:model}, on every finite metric,
  \[
    \OPT_{\mathrm{auto}}=\OPT_{\mathrm{elective}}.
  \]
\end{theorem}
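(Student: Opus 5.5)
The equality splits into two inequalities, and only the second needs the normal form.

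For $\OPT_{\mathrm{elective}}\le\OPT_{\mathrm{auto}}$, I would argue that every feasible automatic schedule is already a feasible elective schedule of the same cost. In an automatic event, every served request has its location visited by the walk, which is all elective service requires. Zero-length services of co-located arrivals are degenerate walks visiting that location. Pure movements of the automatic schedule that hit nothing stay pure movements. Hence the elective infimum is at most the automatic infimum.

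For $\OPT_{\mathrm{auto}}\le\OPT_{\mathrm{elective}}$, I would show that every elective schedule $\sigma_0$ is dominated by an automatic schedule. First apply Proposition~\ref{prop:elective-normal-form} to get $\sigma$ of no greater cost. Its batches are consecutive blocks $[p_k,j_k]$ of complete arrival epochs, and block $k$'s whole transition walk $P_k$ is executed instantaneously at $a_{j_k}$. Between these times the server is parked at the endpoint of $P_{k-1}$, and pure movement after the final block has been deleted. I then reinterpret this same trajectory under automatic semantics, with the same walks at the same times, and compare costs event by event.

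\begin{itemize}
\item \emph{Parking hits.} While parked, any arriving request at the current point is served automatically at age zero. This adds a zero-cost event and only removes members from the later batch, which can only lower that batch's maximum delay.
\item \emph{Pending set at the walk.} At time $a_{j_k}$, after the complete epoch is processed (arrival-first), all requests of blocks $<k$ have been served. Requests of blocks $>k$ arrive at epochs strictly after $a_{j_k}$, by the strict separation $\beta(B_k)<\alpha(B_{k+1})$ in the normal form. So the pending requests are exactly the not-yet-hit members of block $k$.
\item \emph{Cost of the walk.} $P_k$ visits all their locations, so automatic service clears exactly this set in one event. Its movement is unchanged, and its maximum delay is at most that of the elective batch (equal, or zero if the set is empty and $P_k$ becomes pure movement).
\end{itemize}

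Hence $\cost_{\mathrm{auto}}\le\cost(\sigma)\le\cost(\sigma_0)$, and taking infima gives the inequality.

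The step I expect to be the main obstacle is the pending-set claim: that no request outside the current block is ever encountered by $P_k$. Non-consecutive encounters are exactly what breaks the block structure in Proposition~\ref{prop:auto-counterexample}. I would verify it carefully from the timing normalization. No request of a later block exists at time $a_{j_k}$, and the parked interval between $a_{j_{k-1}}$ and $a_{j_k}$ involves no movement, so only zero-age co-located hits occur there. I would also check the tie convention: an epoch at $a_{j_k}$ is processed before the action, and it belongs to block $k$ by construction, so it cannot be a later-block intrusion. With that settled, the argument uses only concatenation of instantaneous walks, so it holds on every finite metric.
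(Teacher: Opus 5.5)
Your proof is correct and follows the paper's argument essentially step for step. You note that automatic schedules are elective ones, then normalize via Proposition~\ref{prop:elective-normal-form} and replay the block walks $P_k$ under automatic semantics, using $\beta(B_k)<\alpha(B_{k+1})$ so that only not-yet-hit members of $B_k$ are pending when $P_k$ runs, with parking hits free; your ``dominate every schedule, then take infima'' is equivalent to the paper's $\varepsilon$-near-optimal schedule. One tiny slip: your parenthetical ``equal, or zero'' should just read ``at most,'' since a parking hit of the oldest member of $B_k$ can lower the batch maximum strictly without making it zero.
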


\begin{proof}
  Every automatic schedule is a feasible elective schedule: elect to serve
  exactly the requests that the automatic execution serves.  Hence
  \[
    \OPT_{\mathrm{elective}}\le\OPT_{\mathrm{auto}}.
  \]

  For the reverse inequality, fix $\varepsilon>0$.  Take an elective schedule
  whose cost is at most $\OPT_{\mathrm{elective}}+\varepsilon$, and put it
  in the normal form of Proposition~\ref{prop:elective-normal-form}.  Write its
  consecutive complete arrival-epoch blocks as
  $B_1,\ldots,B_K$.  Block $B_k$ is served at
  \[
    t_k=\beta(B_k),
  \]
  and the entire movement from the preceding endpoint $y_{k-1}$ to the new
  endpoint $y_k$ is an instantaneous walk $P_k$ executed at $t_k$.  The
  server is stationary at $y_{k-1}$ before this walk.  Consecutiveness gives
  \begin{equation}
    \label{eq:auto-replay-separation}
    \beta(B_k)<\alpha(B_{k+1}).
  \end{equation}

  Replay exactly this physical trajectory under automatic service.  While
  the server is stationary, a request arriving at its current point is
  served at its release time and contributes zero delay.  Remove such a
  request from its designated batch membership while retaining every
  physical walk $P_k$ unchanged.

  We use the following replay invariant immediately before $P_k$ is
  executed: every request of an earlier block has already been served, no
  request of a later block has arrived, and hence every pending request
  belongs to $B_k$.  Indeed, for $j<k$,
  \[
    t_j=\beta(B_j)<\alpha(B_{j+1})\le\alpha(B_k),
  \]
  so no earlier walk $P_j$ can encounter a member of $B_k$ before its
  release.  At time $t_k$,~\eqref{eq:auto-replay-separation} likewise implies
  that no request of a later block has arrived.  Thus every pending request
  automatically hit by $P_k$ belongs to $B_k$.  Some members of $B_k$ may
  already have disappeared
  through zero-delay stationary hits, and deleting them cannot increase the
  maximum waiting time.  Therefore the automatic service event along $P_k$
  has delay at most
  \[
    t_k-\alpha(B_k),
  \]
  the delay charged to the elective block.  If all of $B_k$ disappeared
  while the server was parked, retain $P_k$ as pure repositioning; its
  movement was already present in the elective schedule and it preserves the
  endpoint needed by later walks.

  If requests of the epoch $t_k$ occur at the parked endpoint, their
  immediate zero-delay services may be kept as zero-cost components before
  $P_k$; all other requests of that complete epoch are served along $P_k$.
  The arrival-before-action convention gives the same bound.

  The replay has identical total movement and no larger total delay, proving
  $\OPT_{\mathrm{auto}}\le\OPT_{\mathrm{elective}}+\varepsilon$.
  Letting $\varepsilon\downarrow0$ gives the reverse inequality and hence
  equality.  On finite lines and trees, the dynamic programs show that both
  infima are attained.
\end{proof}

The constructive direction of the proof is summarized in
Algorithm~\ref{alg:automatic-replay}.  This is an offline reconstruction
procedure, not an online black-box reduction.

\begin{algorithm}[ht]
  \caption{Automatic replay of a normalized elective schedule}
  \label{alg:automatic-replay}
  \begin{algorithmic}[1]
    \Require Consecutive elective blocks $B_1,\ldots,B_K$, service times
      $t_k=\beta(B_k)$, and endpoint-continuous walks $P_1,\ldots,P_K$
    \State $k\gets1$
    \For{arrival epochs $h=1,\ldots,m$ in chronological order}
      \State Park at the current endpoint and process the complete epoch $R_h$
      \State Automatically serve all arrivals in $R_h$ at the parked point
        and delete them from their designated blocks
      \If{$a_h=t_k$}
        \State Traverse all of $P_k$ after processing $R_h$
        \State Automatically serve every remaining request of $B_k$ hit by $P_k$
        \State Retain $P_k$ as pure repositioning if $B_k$ was already empty
        \State $k\gets k+1$
      \EndIf
    \EndFor
    \State \Return the automatic trajectory
  \end{algorithmic}
\end{algorithm}

\begin{corollary}[Exact automatic offline optimization]
  \label{cor:auto-offline-dp}
  On a finite line, recurrence~\eqref{eq:elective-offline-dp} computes the
  exact automatic-service optimum in $O(Q+m^2N)$ time and $O(mN)$ working
  space beyond the input, where $Q$ is the number of request occurrences.
  An optimal automatic trajectory is obtained by reconstructing the
  elective DP walks and replaying them automatically as in
  Theorem~\ref{thm:auto-elective-equivalence}.
\end{corollary}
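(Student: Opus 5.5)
The corollary combines three earlier results, and I would prove it in three short steps.

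\textbf{Value.} Theorem~\ref{thm:elective-offline-dp} shows that $\min_{z\in X}F[m,z]$ equals $\OPT_{\mathrm{elective}}$ and that this optimum is attained. Theorem~\ref{thm:auto-elective-equivalence} then gives $\OPT_{\mathrm{auto}}=\OPT_{\mathrm{elective}}$. Together these show that recurrence~\eqref{eq:elective-offline-dp} computes the exact automatic-service value. The time bound $O(Q+m^2N)$ and the $O(mN)$ working space are inherited verbatim from Theorem~\ref{thm:elective-offline-dp}, since the automatic problem needs no additional table.

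\textbf{Schedule.} I would take the traceback of Algorithm~\ref{alg:line-offline-dp}. It yields consecutive blocks $[p,j]$, endpoints, and walks attaining $K_{p,j}(y,z)$, each executed at $a_j=\beta(\text{block})$. This is already a consecutive-block elective schedule in the normal form of Proposition~\ref{prop:elective-normal-form}, and it has cost exactly $V=\OPT_{\mathrm{elective}}$.

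Feed this schedule to the replay of Algorithm~\ref{alg:automatic-replay}. The replay argument in the proof of Theorem~\ref{thm:auto-elective-equivalence} applies directly to this schedule, with no $\varepsilon$ needed because the elective optimum is attained. The separation $\beta(B_k)<\alpha(B_{k+1})$ holds because the blocks partition distinct epochs in increasing order. Given that separation, the replay invariant shows that every request hit by $P_k$ belongs to $B_k$ or has zero age. Hence the automatic trajectory has identical movement and no larger delay, so its cost is at most $\OPT_{\mathrm{elective}}=\OPT_{\mathrm{auto}}$, and it is therefore optimal.

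\textbf{Main point to check.} The only step needing care is that the reconstructed walks really fit the normal-form hypotheses used by the replay. Concretely: each walk is executed at the block's latest arrival epoch, consecutive walks are endpoint-continuous, and no pure movement occurs between blocks. All three hold by construction of the traceback, since $F[p-1,y]$ chains endpoints and each block's whole transition is one $K_{p,j}$-walk. The replay itself runs in $O(Q+\text{walk description size})$, which is within the stated bound because each walk lists at most three turning points.
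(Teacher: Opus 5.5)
Your proposal is correct and follows the paper's own route. Theorem~\ref{thm:elective-offline-dp} supplies the value, its attainment, and the time and space bounds, and Theorem~\ref{thm:auto-elective-equivalence} supplies the value equality. Because the elective optimum is attained, its replay argument applies to the traced-back consecutive-block DP schedule directly, with no $\varepsilon$, and yields an automatic trajectory of cost $\OPT_{\mathrm{elective}}=\OPT_{\mathrm{auto}}$.
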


The reconstructed automatic schedule generally does not have consecutive
actual service batches.  Forced zero-delay arrivals at a parked endpoint may
fragment one planned elective block into several singleton events plus the
remaining time-$t_k$ walk.  This is exactly what happens in
Proposition~\ref{prop:auto-counterexample}; it does not affect the value
computed by the DP.

The proof relies on instantaneous movement, arrival-before-action ordering,
zero cost for a zero-age co-located service, and deletion monotonicity of a
batch maximum.  Pure repositioning is not free: the automatic replay retains
exactly the physical movements already paid for by the elective schedule.
The theorem does not cover positive travel time, request-specific waiting
rates, or general delay functions.  Nor does it give an online equivalence: a
general elective online trajectory need not serve consecutive arrival blocks
and may encounter an older request intended for a future batch.

\subsection{A second axis of non-clairvoyance}

The line also suggests an information model orthogonal to temporal
non-clairvoyance: an arrival and its waiting clock are announced, but its
metric coordinate is hidden until visited.  This form of spatial blindness is
particularly natural for mobile-service and routing problems, where movement
simultaneously incurs cost, reveals information, and changes the future server
state.  Section~\ref{sec:blind-line} develops the model and shows that line
exploration loses only a constant factor, whereas a star already forces a
ratio proportional to its degree.

\section{Weighted Tree Metrics}
\label{sec:tree-extension}

Let $T=(V,E)$ denote a finite tree with positive edge weights, and let $d_T$
denote shortest-path distance on $V$.  For a finite vertex set $A$, let
$T[A]$ denote the unique minimal subtree connecting $A$, and let $w(T[A])$
denote its edge weight.  Thus
\[
  \ST_V(A)=w(T[A]).
\]
Unlike an ambient Steiner tree in a general metric, this subtree is explicit
and can be found by a tree traversal.  Consequently, the exact certificate
$C_h=w(T[A_h])$ has local realization factor $\alpha=3$, so
Theorem~\ref{thm:certificate-engine} gives the polynomial ratio $12$.  The
same structure also yields an exact offline route kernel.

\subsection{Exact offline optimization}

For an arrival block $[p,j]$, let $H_{p,j}$ denote the minimal subtree spanning
the request locations in epochs $p,\ldots,j$, and write
\[
  \eta_{p,j}(v)=d(v,H_{p,j})
  \qquad(v\in V).
\]
For vertices $y,z$, let $T_{p,j}(y,z)$ denote the minimal subtree containing
$H_{p,j}\cup\{y,z\}$.

\begin{lemma}[Tree attachment identity]
  \label{lem:tree-attachment-identity}
  Let $H$ denote a nonempty connected subtree of a weighted tree and let $x,y$
  be vertices.  If
  \[
    a=d(x,H),\qquad b=d(y,H),\qquad
    c=\min\{d(x,y),a+b\},
  \]
  then the minimal subtree $T_{H,x,y}$ containing $H\cup\{x,y\}$ satisfies
  \begin{equation}
    \label{eq:tree-attachment-identity}
    w(T_{H,x,y})=w(H)+\frac{a+b+c}{2}.
  \end{equation}
\end{lemma}

\begin{proof}
  Let $h_x,h_y$ denote the first vertices of $H$ on the unique paths from $x,y$
  to $H$, and let $P_x$ and $P_y$ denote respectively the $x$--$h_x$ and
  $y$--$h_y$ paths; their lengths are $a$ and $b$.  Their intersection outside
  $H$, if nonempty, is a common terminal subpath leading to the same
  attachment point.  Write $r$ for its length, with $r=0$ when the two paths
  are edge-disjoint outside $H$.  The edges added to $H$ form
  $P_x\cup P_y$ and have total weight $a+b-r$.

  If $h_x\ne h_y$, then the $x$--$y$ path consists of $P_x$, the path in
  $H$ from $h_x$ to $h_y$, and $P_y$.  Hence
  $d(x,y)\ge a+b$, so $c=a+b$ and $r=0$.  If $h_x=h_y$, the
  unique $x$--$y$ path deletes the common terminal subpath twice, and hence
  $d(x,y)=a+b-2r$.  Thus in both cases $c=a+b-2r$, and
  \[
    a+b-r=\frac{a+b+c}{2}.
  \]
  This includes $x\in H$, $y\in H$, and one attachment path contained in the
  other.
\end{proof}

\begin{lemma}[Tree route kernel]
  \label{lem:tree-route-kernel}
  The minimum length of a walk starting at $y$, visiting every request in
  epochs $p,\ldots,j$, and ending at $z$ is
  \begin{equation}
    \label{eq:tree-route-kernel}
    K^T_{p,j}(y,z)
      =2w(T_{p,j}(y,z))-d(y,z).
  \end{equation}
  Equivalently, writing $a=\eta_{p,j}(y)$,
  $b=\eta_{p,j}(z)$, and $c=\min\{d(y,z),a+b\}$,
  \begin{equation}
    \label{eq:tree-route-kernel-fast}
    K^T_{p,j}(y,z)
      =2w(H_{p,j})+a+b+c-d(y,z).
  \end{equation}
\end{lemma}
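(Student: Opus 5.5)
The plan is to prove the tree route-kernel formula \eqref{eq:tree-route-kernel} by a matching lower and upper bound. Then \eqref{eq:tree-route-kernel-fast} follows from Lemma~\ref{lem:tree-attachment-identity}. Throughout, write $A$ for the set of request locations in epochs $p,\ldots,j$ and $T'=T_{p,j}(y,z)$, the minimal subtree containing $H_{p,j}\cup\{y,z\}$, which is the same as the minimal subtree containing $A\cup\{y,z\}$.

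For the lower bound, I take any walk $W$ from $y$ to $z$ that visits all of $A$. Because the tree has unique paths, $W$ must traverse every edge of $T'$. Each edge $e$ of $T'$ splits $A\cup\{y,z\}$ into two nonempty sides.
\begin{itemize}
\item If $e$ does not lie on the $y$--$z$ path, then $y$ and $z$ are on the same side. The other side still contains a location of $A$, so $W$ must cross $e$ and come back, crossing it an even number of times, hence at least twice.
\item If $e$ lies on the $y$--$z$ path, then $W$ crosses $e$ at least once.
\end{itemize}
Summing over the edges of $T'$ gives
\[
\len(W)\ge 2w(T')-d(y,z).
\]

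For the upper bound, I would take the closed depth-first (Euler) tour of $T'$ rooted at $y$, which traverses every edge of $T'$ twice. I order the children so that the subtree containing $z$ is explored last at every vertex along the $y$--$z$ path, and I stop the tour at its first arrival at $z$ after everything else has been covered. The edges of the $y$--$z$ path are then crossed only once, so the walk has length exactly $2w(T')-d(y,z)$ and visits every vertex of $T'\supseteq A$.

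For the equivalent form, I apply Lemma~\ref{lem:tree-attachment-identity} with $H=H_{p,j}$, $x=y$ and $y=z$. This gives $w(T')=w(H_{p,j})+(a+b+c)/2$, and doubling and subtracting $d(y,z)$ yields \eqref{eq:tree-route-kernel-fast}.

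The main obstacle is the even-crossing parity argument in the lower bound, together with making sure the truncated tour in the upper bound really ends at $z$ with every off-path edge traversed twice. The degenerate cases $y=z$, or $y,z\in H_{p,j}$, are covered by the same argument with $d(y,z)=0$ or with an empty attachment.
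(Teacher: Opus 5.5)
Your proof is correct and follows the paper's argument. It uses the same edge-cut parity lower bound (odd crossings on the $y$--$z$ path, even and hence at least two crossings elsewhere), the same open depth-first traversal attaining those multiplicities, and the same substitution of Lemma~\ref{lem:tree-attachment-identity} for the second formula, with more detail than the paper on why every edge of $T_{p,j}(y,z)$ separates required vertices and on truncating the tour at $z$.
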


\begin{proof}
  Deleting an edge of $T_{p,j}(y,z)$ separates required vertices.  An edge
  on the unique $y$--$z$ path must be crossed an odd number of times and
  hence at least once; every other edge must be crossed an even number of
  times and hence at least twice.  An open depth-first traversal attains
  exactly these multiplicities.  The second formula follows by substituting
  Lemma~\ref{lem:tree-attachment-identity}.
\end{proof}

Define
\begin{equation}
  \label{eq:tree-offline-dp}
  F^T[j,z]
  =
  \min_{\substack{1\le p\le j\\y\in V}}
  \left\{
    F^T[p-1,y]
    +2w(T_{p,j}(y,z))-d(y,z)
    +a_j-a_p
  \right\},
\end{equation}
with $F^T[0,s_0]=0$ and all other layer-zero states infinite.

\begin{algorithm}[ht]
  \caption{Exact offline dynamic program on a weighted tree}
  \label{alg:tree-offline-dp}
  \begin{algorithmic}[1]
    \Require Weighted tree $T=(V,E)$, initial vertex $s_0$, arrival epochs
    \State Precompute tree distances and per-vertex request prefix counts
    \State Set $F^T[0,s_0]\gets0$ and all other layer-zero states to $+\infty$
    \For{$j=1,\ldots,m$}
      \State Set $F^T[j,z]\gets+\infty$ for every $z\in V$
      \For{$p=j,j-1,\ldots,1$}
        \State Construct $H_{p,j}$, its weight, and all
          $\eta_{p,j}(v)=d(v,H_{p,j})$
        \For{$y,z\in V$}
          \State Evaluate $K^T_{p,j}(y,z)$ in $O(1)$ from
            \eqref{eq:tree-route-kernel-fast}
          \State Relax~\eqref{eq:tree-offline-dp} and store $(p,y)$ using the fixed tie order
        \EndFor
      \EndFor
    \EndFor
    \State Backtrack the consecutive epoch blocks and their endpoints
    \For{each recovered block $[p,j]$ from endpoint $y$ to endpoint $z$}
      \State At time $a_j$, open-DFS $T_{p,j}(y,z)$ from $y$ to $z$
      \Statex \Comment{Traverse off-$y$--$z$ edges twice and the $y$--$z$ path once}
      \State Electively serve exactly the identities in epochs $p,\ldots,j$
    \EndFor
    \State \Return the reconstructed schedule and $\min_zF^T[m,z]$
  \end{algorithmic}
\end{algorithm}

\begin{theorem}[Exact tree offline optimization]
  \label{thm:tree-offline-dp}
  Under elective service,
  \[
    \OPT=\min_{z\in V}F^T[m,z].
  \]
  If $n=|V|$ and $Q$ is the number of request occurrences, a direct
  implementation runs in $O(Q+mn+m^2n^2)$ time.  The same value and a
  reconstructed optimal trajectory are exact under automatic service.
\end{theorem}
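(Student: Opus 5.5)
The plan mirrors the proof of Theorem~\ref{thm:elective-offline-dp}, with the line kernel replaced by the tree kernel of Lemma~\ref{lem:tree-route-kernel}. Set $V^\ast=\min_{z\in V}F^T[m,z]$. For the lower bound $V^\ast\le\OPT$, take any feasible elective schedule and apply Proposition~\ref{prop:elective-normal-form}, which is metric-independent. This yields a schedule of no greater cost whose batches are consecutive epoch blocks $[p_1,j_1],[p_2,j_2],\ldots$. Block $[p,j]$ is executed at time $a_j$ by one walk from the previous endpoint $y$ to a new endpoint $z$; this walk includes any interblock pure movement, and final pure movement is deleted. Its delay is exactly $a_j-a_p$, because the earliest member of the block arrives at $a_p$. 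Its movement is at least $K^T_{p,j}(y,z)=2w(T_{p,j}(y,z))-d(y,z)$ by Lemma~\ref{lem:tree-route-kernel}. Induction over the blocks, using the recurrence~\eqref{eq:tree-offline-dp} with base $F^T[0,s_0]=0$, shows that $F^T[j_k,y_k]$ is at most the cost of the first $k$ blocks. Hence $V^\ast$ is at most the cost of the schedule, and taking the infimum gives $V^\ast\le\OPT$.

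For the upper bound and attainment, trace back a minimizer through the finite recurrence. This gives a partition into consecutive blocks and an endpoint sequence. For each block $[p,j]$ from $y$ to $z$, at time $a_j$ execute the open depth-first traversal of $T_{p,j}(y,z)$ from $y$ to $z$. It crosses edges on the $y$--$z$ path once and all other edges twice, so its length attains $K^T_{p,j}(y,z)$. During this walk, electively serve exactly the requests of epochs $p,\ldots,j$. All of them have arrived by $a_j$, and requests of later blocks encountered on the walk are ignored. The resulting schedule is feasible with cost $V^\ast$. Thus $\OPT\le V^\ast$, equality holds, and the optimum is attained.

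For the running time, first precompute all pairwise tree distances in $O(n^2)$ by one traversal per vertex; this is dominated by the stated bound. Also precompute per-epoch request location lists, which costs $O(Q)$. For fixed $j$, scan $p=j,j-1,\ldots,1$. The step I expect to need the most care is keeping the per-$(p,j)$ work at $O(n)$, so that the total is $O(mn)$ per $j$ summed appropriately. My first approach is to rebuild $H_{p,j}$ from scratch for each pair $(p,j)$ by iterative leaf pruning of unmarked leaves, which takes $O(n)$. Then compute $w(H_{p,j})$, and compute all values $\eta_{p,j}(v)$ by a multi-source traversal from $H_{p,j}$ in $O(n)$. Marking the terminals of epochs $p,\ldots,j$ incrementally as $p$ decreases contributes $O(Q)$ overall if epoch lists are charged once per $j$. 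If that charge causes trouble, I will instead use per-vertex prefix counts over epochs, as the algorithm suggests: a vertex is a terminal for $[p,j]$ iff its count difference is positive, which is an $O(n)$ test. This gives $O(mn)$ per $j$ for this stage. Given these values, each of the $n^2$ pairs $(y,z)$ is evaluated in $O(1)$ by~\eqref{eq:tree-route-kernel-fast}. That yields $O(m^2n^2)$ in total for the transitions, and hence $O(Q+mn+m^2n^2)$ overall.

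Finally, for automatic service, invoke Theorem~\ref{thm:auto-elective-equivalence} to conclude $\OPT_{\mathrm{auto}}=\OPT_{\mathrm{elective}}=V^\ast$. Since the elective optimum is attained, the replay in that proof (Algorithm~\ref{alg:automatic-replay}) applied to the reconstructed elective schedule gives an automatic schedule with the same movement and no larger delay. That schedule therefore has cost at most $V^\ast$ and is an exact optimal automatic trajectory.
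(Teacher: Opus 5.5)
Your proposal is correct and follows essentially the same route as the paper: you combine the metric-independent normal form with Lemma~\ref{lem:tree-route-kernel} for both traceback directions, do $O(n)$ work per block to obtain $H_{p,j}$, $w(H_{p,j})$ and $\eta_{p,j}$ from per-vertex prefix counts, evaluate~\eqref{eq:tree-route-kernel-fast} in $O(1)$ per endpoint pair, and transfer to automatic service through the replay of Theorem~\ref{thm:auto-elective-equivalence}. One small correction: incrementally marking the epoch lists for each fixed $j$ costs $O(Q)$ per $j$, hence $O(mQ)$ overall, which can exceed $O(Q+mn+m^2n^2)$ when many occurrences share a vertex, so you should commit to the prefix-count variant, which is the one the paper uses.
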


\begin{proof}
  Proposition~\ref{prop:elective-normal-form} is metric-independent.  Apply
  its consecutive-block decomposition and use
  Lemma~\ref{lem:tree-route-kernel} for each transition.  The two directions
  of the traceback proof of Theorem~\ref{thm:elective-offline-dp} then apply
  verbatim and establish attainment.

  Per-vertex prefix counts can be built in $O(Q+mn)$ time and determine in
  constant time whether a vertex occurs in a block.  Precompute all tree
  distances in $O(n^2)$ time by one traversal from every source.  For each of
  the $O(m^2)$ blocks, one postorder traversal identifies $H_{p,j}$ and its
  weight, and one traversal outward from this connected subtree computes all
  values $\eta_{p,j}(v)$, for $O(n)$ work per block.  Formula
  \eqref{eq:tree-route-kernel-fast} then evaluates each of the $n^2$ endpoint
  pairs in constant time.  The total is $O(Q+mn+m^2n^2)$.

  Store the minimizing block start and predecessor endpoint for every DP
  state.  During traceback, reconstruct $H_{p,j}$ and
  $T_{p,j}(y,z)$ for each selected block and perform the open DFS from
  Lemma~\ref{lem:tree-route-kernel}; hence traceback remains exact and
  polynomial.  Finally,
  Theorem~\ref{thm:auto-elective-equivalence} transfers the value and the
  reconstructed trajectory to automatic service.
\end{proof}

The corresponding block kernel on an arbitrary metric contains a rooted
metric traveling-salesperson-path subproblem.  In fact, this obstruction
already appears when the instance has only one arrival epoch.

\begin{proposition}[General-metric offline hardness]
  \label{prop:general-metric-offline-hardness}
  Exact offline optimization on an arbitrary finite metric is NP-hard under
  both elective and automatic service, even when all requests arrive at time
  zero.
\end{proposition}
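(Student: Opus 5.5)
We plan to reduce from the metric traveling-salesperson path problem with a fixed start and free endpoint. This problem is NP-hard on finite metrics, for example via Hamiltonian path from a prescribed vertex with edge lengths $1$ and $2$. Given such an instance, a finite metric $(X,d)$ with start $s_0$, we place one request at every point of $X$ (or of the required terminal set), all arriving at time $0$. The claim to establish is that $\OPT$ equals the minimum length of a walk from $s_0$ visiting every point. The decision version then transfers, so exact optimization is NP-hard.

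\textbf{Upper bound.} At time $0$, after the arrival epoch, execute a shortest covering walk from $s_0$ as one service event serving every request. Its delay term is $\max_q(0-a(q))=0$, so the cost is exactly the walk length. The same schedule is feasible under automatic service, since the walk hits every request at time $0$.

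\textbf{Lower bound.} Under elective service, every feasible schedule has a trajectory starting at $s_0$ that eventually visits every request location, because each request can only be served at a visited point. Concatenating all component walks gives one walk from $s_0$ covering all points, so $\operatorname{Mov}(\sigma)$ is at least the optimal covering-path length, and the delay terms are nonnegative. Hence $\OPT$ is exactly that length. For automatic service, Theorem~\ref{thm:auto-elective-equivalence} gives the same value; alternatively, the argument above applies verbatim.

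The only delicate step is confirming that the source problem is NP-hard in the precise form used: a walk in which revisits are allowed, with a fixed start, a free end, and a general finite metric. We handle this with the standard $\{1,2\}$-metric from a graph $G$ on $n$ vertices. We add the start $s_0$ adjacent (distance $1$) to every vertex, so that the start does not fix the Hamiltonian path's first vertex. Then a covering walk of length $n$ exists iff $G$ has a Hamiltonian path. Revisits cannot help, since any covering walk of length $n$ must visit $n$ new points with unit steps. Because $\ST$ and walk lengths coincide with shortest-path closures, the metric version with revisits has the same optimum as the path version.
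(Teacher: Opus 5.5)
Your proposal is correct and is essentially the paper's proof. It uses the same construction: a new start vertex at distance $1$ from every vertex of $G$ with the $\{1,2\}$ shortest-path metric, and one request per vertex at time zero. It then identifies $\OPT$ with the minimum covering-walk length $L^\star$ under both semantics, and shows $L^\star=n$ if and only if $G$ has a Hamiltonian path. The closing sentence invoking $\ST$ is unnecessary, since Steiner cost is not the relevant quantity here; your unit-step counting argument already rules out gains from revisits, as does the paper's first-visit-order triangle-inequality bound.
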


\begin{proof}
  Reduce from undirected \emph{Hamiltonian Path}.  Let $G=(V,E)$ denote an
  instance with $n=|V|$.  Add a new vertex $s$ adjacent to every vertex of
  $V$, and let $d$ denote the shortest-path metric of the resulting unweighted
  graph on $X=V\cup\{s\}$.  Equivalently,
  \[
    d(s,v)=1\quad(v\in V),
    \qquad
    d(u,v)=
    \begin{cases}
      1,&uv\in E,\\
      2,&uv\notin E
    \end{cases}
    \quad(u,v\in V,\ u\ne v).
  \]
  This explicit $(n+1)\times(n+1)$ distance matrix is computable in
  polynomial time.  Start the server at $s$ and release one request at every
  vertex of $V$ at time zero.

  Let $L^\star$ denote the minimum length of a walk starting at $s$, visiting all
  vertices of $V$, and ending anywhere.  After the time-zero arrival epoch,
  such a walk can be traversed in one time-zero service event, so it serves
  all requests with zero delay and gives $\OPT\le L^\star$.  Conversely,
  concatenate the endpoint-continuous movement of any feasible schedule.
  The resulting trajectory starts at $s$ and visits every vertex of $V$;
  its movement is at least $L^\star$, and all delay charges are nonnegative.
  Hence $\OPT=L^\star$.  This argument applies directly to both service
  semantics.

  For any visiting walk, record the requested vertices in their first-visit
  order $v_{\pi(1)},\ldots,v_{\pi(n)}$.  Applying the triangle inequality to
  the trajectory between consecutive first visits gives
  \[
    \len(P)
    \ge d(s,v_{\pi(1)})
       +\sum_{i=1}^{n-1}d(v_{\pi(i)},v_{\pi(i+1)}).
  \]
  Conversely, every permutation $\pi$ defines a feasible metric walk.
  Therefore
  \[
    L^\star
    =n+\min_{\pi}
       \bigl|\{i\in[n-1]:v_{\pi(i)}v_{\pi(i+1)}\notin E\}\bigr|.
  \]
  It follows that $\OPT\le n$ if and only if $\OPT=n$, which holds if and
  only if $G$ has a Hamiltonian path.  Thus deciding whether the offline
  optimum is at most $n$ is NP-hard, and so is computing the exact optimum.
\end{proof}

Thus the exact offline problem is polynomial-time solvable on finite lines
and explicitly represented weighted trees, but NP-hard on arbitrary finite
metrics.

\subsection{A polynomial 12-competitive algorithm}

Number only positive phases.  After one phase clears, discard every request
that arrives at the idle anchor after its immediate zero-cost service.  The
next phase starts with the first non-anchor arrival.  Process a complete
arrival epoch before testing a trigger, so numbered completion times are
strictly increasing.

At the beginning of a positive phase $h$, let $s_{h-1}$ denote the current
anchor and the location of an actual request from the preceding phase; use a
free dummy for phase one.  For the phase requests announced by time $t$, let
\[
  W(t)=w\bigl(T[\{s_{h-1}\}\cup
                  \{x(q):q\text{ is a current phase request}\}]\bigr).
\]
Wait until the oldest phase request has age $W(t)$.  Then open-DFS the
minimal subtree in one atomic service event and end at an actual phase
request $p_h$.  Anchor-only arrivals have zero cost and are omitted from the
numbered positive phases.

\begin{algorithm}[h]
  \caption{Visible subtree-trigger service on a weighted tree}
  \label{alg:tree-subtree-trigger}
  \begin{algorithmic}[1]
    \Require Weighted tree $T$, initial anchor $s_0$, service semantics
    \State $s\gets s_0$
    \Loop
      \State Serve and discard every arrival at the idle anchor $s$ at zero cost
      \State Wait for the first complete epoch containing a non-anchor arrival
      \State $P\gets$ its non-anchor requests; $\alpha\gets\min_{q\in P}a(q)$
      \Loop
        \State $F\gets T[\{s\}\cup\{x(q):q\in P\}]$; $W\gets w(F)$
        \State $c\gets\alpha+W$; $u\gets$ next arrival time, or $+\infty$
        \If{$u\le c$}
          \State Advance to $u$ and process the complete epoch
          \State Serve/discard its anchor arrivals and append all others to $P$
          \State \textbf{continue}
        \EndIf
        \State Advance to $c$ and \textbf{break}
      \EndLoop
      \State Choose an actual phase request $p$ farthest from $s$ in $F$
      \State Open-DFS $F$ with the $s$--$x(p)$ path last; serve $P$ in one event
      \State $s\gets x(p)$; discard all phase ghosts
    \EndLoop
  \end{algorithmic}
\end{algorithm}

\begin{theorem}[Visible tree competitiveness]
  \label{thm:visible-tree-competitive}
  The subtree-trigger algorithm is a polynomial-time deterministic
  $12$-competitive algorithm on every finite weighted tree, under both
  elective and automatic service.
\end{theorem}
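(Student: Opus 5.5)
The plan is to instantiate part~\textup{(b)} of Theorem~\ref{thm:certificate-engine} with local factor $\alpha=3$ and certificate $C_h=W_h$. Here $W_h$ is the final trigger value $w(F_h)$ of positive phase $h$, where $F_h=T[\{s_{h-1}\}\cup\{x(q):q\in B_h\}]$.

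\emph{Step 1: local realization.}
\begin{itemize}
  \item $W(t)$ is nondecreasing in $t$. Arrivals only add terminals, and a minimal subtree spanning a superset weighs at least as much.
  \item Complete epochs are processed before the trigger test, so the oldest phase age increases continuously against a fixed threshold between arrivals. The trigger therefore fires with equality, and the phase delay is exactly $W_h$.
  \item By Lemma~\ref{lem:tree-route-kernel}, the open DFS of $F_h$ from $s_{h-1}$ ending at $x(p_h)$ has length $2W_h-d(s_{h-1},x(p_h))\le 2W_h$.
  \item Anchor arrivals cost zero. Hence $\ALG\le3\sum_hW_h$, which is~\eqref{eq:local-realization-bound} with $\alpha=3$.
  \item The algorithm runs in polynomial time: each recomputation of $F$ is one tree traversal, and there are at most $m$ recomputations per phase.
\end{itemize}

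\emph{Step 2: certificate families.}
\begin{itemize}
  \item Define the group $G_h=\{p_{h-1}\}\cup B_h$, where $p_{h-1}$ is the actual request at $s_{h-1}$ chosen by phase $h-1$. For $h=1$, replace it by the free dummy at $(s_0,0)$.
  \item On a tree, $\ST_V$ of the locations of $G_h$ equals $w(F_h)=W_h$.
  \item Color phases by parity and put $G_h$ in the window $(\tau_{h-2},\tau_h]$, or $[0,\tau_h]$ for the first phase of each color.
  \item Every request of $B_h$ arrives strictly after $\tau_{h-1}$: the epoch at $\tau_{h-1}$ belongs to phase $h-1$, and anchor arrivals are discarded. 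The oldest request of $B_h$ then waits exactly $W_h$, so $\tau_h-\tau_{h-1}\ge W_h$. The window length is therefore at least $W_h$, as required by~\eqref{eq:metric-group-certificate-assumptions}.
  \item Within one color, $G_h\subseteq B_{h-1}\cup B_h$ and $G_{h+2}\subseteq B_{h+1}\cup B_{h+2}$, so the groups are identity-disjoint and the windows are interior-disjoint.
  \item The dummy is used only by the first odd window $[0,\tau_1]$, which is allowed. The first even group uses the real request $p_1$ and needs no dummy.
  \item Deleted noncertificate requests are handled by Remark~\ref{rem:metric-certificate-subinstances}.
  \item Part~\textup{(b)} then gives $\ALG\le 12\,\OPT$ under elective service.
\end{itemize}

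\emph{Step 3: automatic service.} I would follow Corollary~\ref{cor:visible-automatic-competitive}.
\begin{itemize}
  \item Requests at the parked anchor are served at age zero. Those arriving before the phase starts are discarded by the algorithm anyway. Those arriving during a phase are kept as ghosts, so $F$ and the trigger are computed exactly as in the elective run.
  \item Every pending non-anchor request is a current phase request, and the DFS visits every vertex of $F$. Hence the single walk automatically serves exactly the remaining real phase requests.
  \item Its maximum waiting time is at most the elective phase delay, and its movement is identical, so the automatic execution costs no more than the elective one.
  \item Theorem~\ref{thm:auto-elective-equivalence} equates the two offline comparators.
\end{itemize}

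\emph{Main obstacle.} I expect the delicate step to be the window and disjointness bookkeeping in Step~2. Specifically, I must check that the strict inequality $a(q)>\tau_{h-1}$ holds for every phase request even with ties at completion timestamps. I must also check that the shared anchor request $p_{h-1}$ is used only by the adjacent opposite-color group, so each color class remains identity-disjoint. The tree-specific ingredient, $\ST_V=w(T[\cdot])$ together with the DFS length bound, is routine given Lemma~\ref{lem:tree-route-kernel}.
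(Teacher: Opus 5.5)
Your proposal is correct and takes essentially the same route as the paper's proof. Monotonicity of the minimal-subtree weight gives a trigger with equality, delay $W_h$, duration at least $W_h$ and open-DFS movement at most $2W_h$ (so $\alpha=3$); the groups $\{p_{h-1}\}\cup P_h$ with $C_h=W_h$ in the parity macro-windows $(\tau_{h-2},\tau_h]$ verify part~(b) of Theorem~\ref{thm:certificate-engine}; and the ghost argument plus Theorem~\ref{thm:auto-elective-equivalence} handles automatic service, with the extra bookkeeping you flag (strict arrival after $\tau_{h-1}$, identity-disjointness within a color, the dummy only in the first odd window) correctly filling in what the paper leaves implicit.
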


\begin{proof}
  The subtree weight is monotone under terminal addition.  Hence the first
  trigger occurs with equality.  If $W_h$ is the final value, the open DFS
  has movement at most $2W_h$, the phase delay is $W_h$, and the phase
  duration is at least $W_h$.

  For transition $h$, group the actual preceding-anchor request with every
  phase-$h$ request and assign certificate $C_h=W_h$.  If $\tau_h$ is the
  completion time of phase $h$, use the odd windows
  $[0,\tau_1],(\tau_1,\tau_3],\ldots$ and the even windows
  $[0,\tau_2],(\tau_2,\tau_4],\ldots$.  Same-parity groups use disjoint
  identities, and each such window has length at least $W_h$.  These groups,
  windows, and the local phase bound verify part~\textup{(b)} of
  Theorem~\ref{thm:certificate-engine} with $C_h=W_h$ and $\alpha=3$.
  Therefore
  \[
    \ALG\le3\sum_hW_h\le12\OPT.
  \]
  For automatic service, anchor arrivals occurring while no positive phase
  is active are discarded after their zero-cost service.  During an active
  phase they may instead be retained as zero-age ghosts; this does not
  change $W(t)$ because the anchor is already a terminal.  The one trigger
  walk serves every remaining real phase request with no larger delay, and
  Theorem~\ref{thm:auto-elective-equivalence} identifies the comparator
  values.
\end{proof}

The tree analysis exposes the three ingredients used by the broader metric
theory: a spatial certificate that is monotone as phase terminals arrive,
that lower-bounds the cost of connecting those terminals, and that supports a
service route of comparable length.  The exact minimal subtree provides all
three efficiently on an explicit tree.  In a general finite metric the online
charging argument survives; the new issue is computational access to an exact
monotone Steiner-type certificate.  Section~\ref{sec:general-metric} first
uses such a certificate through an oracle and then replaces it by a
phase-local envelope of efficiently computable spatial surrogates.

\section{Visible Service on General Finite Metrics}
\label{sec:general-metric}

Let $(X,d)$ denote an arbitrary finite metric, and suppose that request
locations are visible at arrival.  The global factor four is already supplied
by Theorem~\ref{thm:certificate-engine}; this section concerns only local
realization and computation.  Exact ambient Steiner structure gives local
factor $\alpha=3$ and ratio $12$.  Replacing it by a phase-local terminal-MST
envelope gives a polynomial local factor $\alpha=5$ and ratio $20$.

Both algorithms use the following phase convention.  A numbered positive
phase starts with the first request arriving away from the current anchor
after the preceding phase has cleared.  A request arriving at the idle
anchor is served immediately at zero cost and is not assigned to a numbered
phase.  During an active phase, a co-located arrival may likewise be served
and discarded; retaining a zero-age ghost gives the same thresholds because
the anchor is already a terminal.  A complete arrival epoch is processed
before a trigger is tested.  Consequently, completion times of numbered
phases are strictly increasing.  Every numbered phase contains a non-anchor
request and has positive spatial certificate.

At the end of phase $h$, the server is placed at the location of a
designated request $p_h$ belonging to that phase.  Write $s_h=x(p_h)$ and
use a free dummy request $p_0$ at the initial point $s_0$ and at time zero.
The identity $p_h$, rather than merely its location, will be used by the
parity certificate.

\subsection{An exact-Steiner algorithm}

Fix an active phase with anchor $s=s_{h-1}$.  Let $P(t)$ denote the requests
assigned to this phase that have arrived by time $t$, and define
\begin{equation}
  \label{eq:general-metric-steiner-threshold}
  S(t)
  =\ST_X
    \bigl(\{s\}\cup\{x(q):q\in P(t)\}\bigr).
\end{equation}
The algorithm waits until the waiting time of the oldest phase request is
$S(t)$.  At the first such time, let $F_h$ denote an exact Steiner tree attaining
$S_h=S(t)$.  Traverse $F_h$ by an open depth-first walk starting at $s$,
serve every phase request in one atomic event, and end at a designated phase
request $p_h$.  For example, one may choose a phase terminal farthest from
$s$ in the tree $F_h$.

\begin{algorithm}[h]
  \caption{Visible exact-Steiner phase algorithm}
  \label{alg:exact-steiner-phase}
  \begin{algorithmic}[1]
    \Require Finite metric $(X,d)$, exact ambient-Steiner oracle, start $s_0$
    \State $s\gets s_0$
    \Loop
      \State Serve and discard idle-anchor arrivals at zero cost
      \State Wait for the first complete epoch containing a non-anchor arrival
      \State $P\gets$ its non-anchor requests; $\alpha\gets\min_{q\in P}a(q)$
      \Loop
        \State $F\gets$ the canonical exact Steiner tree for
          $\{s\}\cup\{x(q):q\in P\}$
        \State $S\gets w(F)$
        \State $c\gets\alpha+S$; $u\gets$ next arrival time, or $+\infty$
        \If{$u\le c$}
          \State Advance to $u$ and process the complete epoch
          \State Serve/discard its anchor arrivals and append all others to $P$
          \State \textbf{continue}
        \EndIf
        \State Advance to $c$ and \textbf{break}
      \EndLoop
      \State Choose an actual phase request $p$ farthest from $s$ in tree $F$
      \State Open-DFS $F$ with the $s$--$x(p)$ path last; serve $P$ in one event
      \State $s\gets x(p)$; discard all phase ghosts
    \EndLoop
  \end{algorithmic}
\end{algorithm}

\begin{lemma}[Exact-Steiner phase bound]
  \label{lem:general-metric-steiner-phase}
  The trigger is causal and occurs with equality.  Phase $h$ has duration at
  least $S_h$ and cost at most $3S_h$.
\end{lemma}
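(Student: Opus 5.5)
The plan follows the tree argument of Theorem~\ref{thm:visible-tree-competitive}, replacing the minimal subtree by the ambient Steiner cost. There are three claims: causality, equality at the trigger, and the duration and cost bounds.

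\emph{Causality and equality.} First, $S(t)$ is monotone nondecreasing in $t$ during a phase. Adding terminals can only shrink the feasible set of trees in~\eqref{eq:ambient-steiner-cost}, so the minimum cannot decrease. $S(t)$ is piecewise constant and changes only at arrival epochs, each of which is processed before the trigger is tested. Between epochs the oldest waiting time grows continuously at unit rate while the threshold stays fixed. So, exactly as in the line and tree arguments, the event-driven loop either absorbs an epoch with $u\le c$ and recomputes, or reaches $c=\alpha+S$ with equality. Every decision uses only requests already announced, so the rule is causal.

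The trigger happens in finite time after the last arrival, since $S$ is then fixed and finite. At the trigger the phase delay is exactly $S_h$, because the oldest phase request is the batch maximum. Zero-age anchor arrivals or ghosts do not increase this maximum.

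\emph{Duration.} The phase starts with the first non-anchor arrival after the previous completion $\tau_{h-1}$. Its oldest request therefore arrives at time at least $\tau_{h-1}$, and the trigger time is $\alpha+S_h\ge\tau_{h-1}+S_h$.

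\emph{Movement.} The tree $F_h$ contains $s$ and every phase location. An open depth-first traversal from $s$ that ends at $x(p_h)$ crosses the edges of the $s$--$x(p_h)$ path once and all other edges twice. Its length is therefore at most $2w(F_h)-d_{F_h}(s,x(p_h))\le 2S_h$. Since $F_h$ lives in the complete graph on $X$, each edge is realized by a metric hop of length $d(e)$, so the walk is a legitimate metric walk.

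Total phase cost is at most $2S_h+S_h=3S_h$. Under automatic service the walk may also hit other pending requests. By the phase convention, every pending request is a phase request or a zero-age ghost, so the maximum delay is still $S_h$.

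The main obstacle is the monotonicity and equality step. It needs care in pinning down that $S(t)$ is a step function constant between epochs, and that ties $u=c$ are resolved by processing the epoch first. With that settled, no infinitesimal overshoot can occur. The remaining steps are routine.
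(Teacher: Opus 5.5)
Your proposal is correct and follows essentially the same argument as the paper. Monotonicity of $\ST_X$ under terminal addition, together with arrival-first processing, gives an equality trigger; the phase's oldest request arrives no earlier than $\tau_{h-1}$, so the duration is at least $S_h$; and the open DFS of $F_h$ costs $2S_h-d_{F_h}(s,s_h)\le 2S_h$, plus one delay charge $S_h$. Your monotonicity justification, via inclusion of the feasible tree sets in~\eqref{eq:ambient-steiner-cost}, is a slightly more direct version of the paper's leaf-pruning argument and suffices as stated.
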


\begin{proof}
  Ambient Steiner cost is monotone under terminal addition.  Indeed, a tree
  spanning a larger terminal set contains, after pruning unnecessary leaves,
  a tree spanning every smaller terminal set.  Hence arrivals can only
  increase $S(t)$.  Between arrivals the threshold is fixed and the oldest
  waiting time increases continuously.  Under arrival-first processing, the
  first successful crossing is therefore an equality.  After the final
  arrival of a finite input, the fixed threshold is reached in finite time.

  If the oldest phase request arrives at $a_h$ and the phase ends at
  $\tau_h$, then
  \[
    \tau_h-a_h=S_h.
  \]
  Since $a_h>\tau_{h-1}$ for numbered phases,
  $\tau_h-\tau_{h-1}\ge S_h$.

  In an open depth-first traversal, every edge of $F_h$ is used twice except
  the tree path from $s$ to the final terminal, which is used once.  Thus the
  movement is
  \[
    2S_h-d_{F_h}(s,s_h)\le2S_h.
  \]
  The one service event has maximum delay $S_h$, proving the claimed phase
  bound.
\end{proof}

\begin{theorem}[Steiner-oracle bound]
  \label{thm:general-metric-steiner-competitive}
  The algorithm above is $12$-competitive for visible elective service on
  every finite metric.
\end{theorem}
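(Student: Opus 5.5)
The plan is to mirror the tree proof (Theorem~\ref{thm:visible-tree-competitive}) and verify part~\textup{(b)} of Theorem~\ref{thm:certificate-engine} with $C_h=S_h$ and $\alpha=3$. Lemma~\ref{lem:general-metric-steiner-phase} already supplies the local realization: phase $h$ costs at most $3S_h$. Anchor-only arrivals cost zero and lie in no numbered phase, so
\[
  \ALG\le3\sum_hS_h .
\]

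The certificate groups are defined as follows. For phase $h$, let $G_h$ consist of the designated request $p_{h-1}$ of the preceding phase together with every phase-$h$ request. For $h=1$, the free dummy $p_0$ plays the role of $p_{h-1}$. The locations of $G_h$ are exactly the terminal set $\{s_{h-1}\}\cup\{x(q):q\in P_h\}$, so by definition
\[
  S_h=\ST_X\bigl(\{x(q):q\in G_h\}\bigr).
\]
The Steiner hypothesis in \eqref{eq:metric-group-certificate-assumptions} therefore holds with equality.

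Color phase $h$ by its parity. Within one color, $G_h$ and $G_{h+2}$ are identity-disjoint. Phase-$h$ requests and phase-$(h+2)$ requests are distinct. The borrowed request $p_{h+1}$ belongs to phase $h+1$, so it lies in neither $G_h$'s phase set nor equals $p_{h-1}$. Assign $G_h$ to the macro-window $(\tau_{h-2},\tau_h]$, with the initial window starting at $0$ for the first phase of each color. Every member of $G_h$ arrives in this window: $p_{h-1}$ arrived after $\tau_{h-2}$ because phase $h-1$ starts strictly after $\tau_{h-2}$, and phase-$h$ requests arrive by $\tau_h$. Windows of one color are consecutive and interior-disjoint.

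The window-length condition comes from Lemma~\ref{lem:general-metric-steiner-phase}, which gives $\tau_h-\tau_{h-1}\ge S_h$. Since completion times are strictly increasing,
\[
  \tau_h-\tau_{h-2}\ge\tau_h-\tau_{h-1}\ge S_h .
\]
In the initial case, $\tau_h\ge S_h$.

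The delicate step is identity bookkeeping rather than any inequality, and it has three parts. First, the dummy must be used only in the first window of each color, which part~\textup{(a)} permits. Second, requests excluded from every group must be deleted, as Remark~\ref{rem:metric-certificate-subinstances} permits. Third, the windows must start at $T_0=0$ with strictly increasing endpoints, which the strictly increasing completion times guarantee.

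Part~\textup{(b)} then yields $\sum_hS_h\le4\OPT$, and hence $\ALG\le12\OPT$.
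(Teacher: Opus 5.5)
Your proposal is correct and matches the paper's proof essentially step for step: the same groups $G_h=\{p_{h-1}\}\cup P_h$, the same parity macro-windows $(\tau_{h-2},\tau_h]$ with initial windows starting at $0$, and the same application of part~(b) of Theorem~\ref{thm:certificate-engine} with $C_h=S_h$ and $\alpha=3$ from Lemma~\ref{lem:general-metric-steiner-phase}. One minor imprecision: the dummy $p_0$ appears only in $G_1$, since the even color's first group $G_2$ already uses the actual request $p_1$, so it is not used ``in the first window of each color.''
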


\begin{proof}
  For transition $h$, form the certificate group
  \begin{equation}
    \label{eq:general-metric-phase-group}
    G_h=\{p_{h-1}\}\cup P_h,
  \end{equation}
  where $P_h$ is the complete set of requests assigned to phase $h$.  By
  construction,
  \[
    \ST_X\bigl(\{x(q):q\in G_h\}\bigr)=S_h.
  \]
  Color transitions by parity.  Groups of one color use disjoint request
  identities.  For $h\ge2$, place $G_h$ in the macro-window
  \[
    I_h=(\tau_{h-2},\tau_h],
  \]
  and use the initial window $[0,\tau_h]$ for the first transition of each
  color.  Thus the odd windows are
  $[0,\tau_1],(\tau_1,\tau_3],(\tau_3,\tau_5],\ldots$, and the even
  windows are $[0,\tau_2],(\tau_2,\tau_4],\ldots$.  This window
  contains the preceding anchor request and every phase-$h$ request, and
  Lemma~\ref{lem:general-metric-steiner-phase} gives
  \[
    |I_h|\ge\tau_h-\tau_{h-1}\ge S_h.
  \]
  Macro-windows of one parity are interior-disjoint.

  Thus the hypotheses of part~\textup{(b)} of
  Theorem~\ref{thm:certificate-engine} hold with certificate $C_h=S_h$ and,
  by Lemma~\ref{lem:general-metric-steiner-phase}, local factor $\alpha=3$.
  Hence
  \[
    \ALG\le3\sum_hS_h\le12\OPT,
  \]
  which proves this theorem.
\end{proof}

The exact-Steiner phase algorithm is well defined on every finite metric, but
it assumes an oracle for~\eqref{eq:ambient-steiner-cost}; no polynomial-time
implementation is claimed.  It should therefore be viewed as a structural
benchmark for the polynomial construction below.  This oracle computes only
the threshold and route of one online phase; it is unrelated to exact
offline optimization, which is NP-hard by
Proposition~\ref{prop:general-metric-offline-hardness}.

The preceding proof is an instance of a general transfer principle.  It
separates the online certificate analysis from the computational quality of
the spatial structure available inside a phase.

\begin{proposition}[Spatial-surrogate transfer principle]
  \label{prop:spatial-surrogate-transfer}
  Fix $\rho\ge1$.  Suppose that for every terminal set $A\subseteq X$ a
  deterministic procedure returns an actual terminal-spanning tree $F(A)$
  in the complete metric, of weight $\Gamma(A)$ satisfying
  \begin{equation}
    \label{eq:spatial-surrogate-factor}
    \Gamma(A)\le\rho\ST_X(A).
  \end{equation}
  Within each positive phase, let
  \[
    E(t)=\max_{u\text{ a prefix time of the current phase},\,u\le t}
           \Gamma(A(u)),
    \qquad b(t)=E(t)/\rho,
  \]
  where $A(u)$ consists of the anchor and the phase terminals released by
  time $u$.  Trigger when the oldest phase request has age $b(t)$, and
  open-DFS the current tree $F(A(t))$ in one service event, ending at the
  location of a deterministically chosen actual phase request.  All ties in
  the tree procedure and endpoint choice are resolved canonically.  This
  algorithm is deterministic $4(2\rho+1)$-competitive for visible elective service.
  If the tree procedure is polynomial-time, so is the phase algorithm.
\end{proposition}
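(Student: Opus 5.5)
The plan is to follow the exact-Steiner analysis (Lemma~\ref{lem:general-metric-steiner-phase} and Theorem~\ref{thm:general-metric-steiner-competitive}), replacing $S_h$ by the envelope value. The certificate will be $C_h=b_h$, where $b_h=E_h/\rho$ is the final threshold of phase $h$. The target local factor is $\alpha=2\rho+1$. Part~(b) of Theorem~\ref{thm:certificate-engine} then yields the ratio $4(2\rho+1)$.

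\emph{Step 1: causality and trigger equality.} By construction $E(t)$ is a running maximum, so it is nondecreasing in $t$ and changes only at arrival epochs. This is exactly why the envelope is used: $\Gamma$ itself, for instance the terminal MST weight, need not be monotone. As in Lemma~\ref{lem:general-metric-steiner-phase}, arrival-first processing, a fixed threshold between arrivals, and a continuously growing oldest age make the first crossing an equality. After the last arrival the crossing is reached in finite time. Hence the phase delay is exactly $b_h$. Since the oldest phase request arrives after $\tau_{h-1}$, we get $\tau_h-\tau_{h-1}\ge b_h$.

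\emph{Step 2: local cost.} The open DFS of the current tree $F(A(\tau_h))$ has movement at most $2\Gamma(A(\tau_h))\le 2E_h=2\rho b_h$. Adding the delay $b_h$, the phase costs at most $(2\rho+1)b_h$. Polynomiality is immediate because each phase recomputes $F$ once per arrival epoch.

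\emph{Step 3: certificate hypotheses.} Form the groups $G_h=\{p_{h-1}\}\cup P_h$ and use parity macro-windows $(\tau_{h-2},\tau_h]$, exactly as in Theorem~\ref{thm:general-metric-steiner-competitive}. Same-parity groups are identity-disjoint, the windows are interior-disjoint, and each window length is at least $\tau_h-\tau_{h-1}\ge b_h$.

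The key remaining inequality is $b_h\le \ST_X(\{x(q):q\in G_h\})$, and I expect this to be the main obstacle. It holds because $E_h=\Gamma(A(u))$ for some prefix time $u$. Then
\[
\Gamma(A(u))\le\rho\,\ST_X(A(u))\le\rho\,\ST_X(A(\tau_h)),
\]
using \eqref{eq:spatial-surrogate-factor} and the monotonicity of ambient Steiner cost under terminal addition, as proved in Lemma~\ref{lem:general-metric-steiner-phase}. Finally, $A(\tau_h)$ is exactly the anchor $s_{h-1}=x(p_{h-1})$ together with the phase terminals, that is, the location set of $G_h$.

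For anchor-located arrivals discarded at zero cost, removing them from $G_h$ only lowers the Steiner cost on the right-hand side. They also do not affect $A$, since the anchor is already a terminal. Summing, part~(b) of Theorem~\ref{thm:certificate-engine} gives
\[
\ALG\le(2\rho+1)\sum_h b_h\le 4(2\rho+1)\OPT .
\]
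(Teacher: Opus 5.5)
Your proposal is correct and follows the paper's proof essentially step for step. The running envelope gives a causal trigger with equality and phase duration at least $b_h$; the chain $\Gamma(A(u))\le\rho\,\ST_X(A(u))\le\rho\,\ST_X(A_h)$ gives $b_h\le\ST_X$ of the final group; the open DFS of a tree of weight at most $\rho b_h$ gives local factor $2\rho+1$; and the parity macro-windows from the exact-Steiner theorem feed part~(b) of Theorem~\ref{thm:certificate-engine}.

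One small wording fix. Anchor-located arrivals leave the location set of $G_h$ \emph{unchanged}, because $p_{h-1}$ already sits at the anchor, and that is what the argument needs. A genuine decrease of the right-hand side would go in the wrong direction, so ``only lowers the Steiner cost'' should read ``does not change it.''
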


\begin{proof}
  The phase-local envelope $E(t)$ is nondecreasing, so arrival-first
  processing makes the first trigger causal and an equality.  Let $b_h$ denote
  its value and let $A_h$ denote the final terminal set of phase $h$.  For every
  phase prefix $A(u)\subseteq A_h$, inequality
  \eqref{eq:spatial-surrogate-factor} and monotonicity of ambient Steiner
  cost give
  \[
    \Gamma(A(u))\le\rho\ST_X(A(u))
      \le\rho\ST_X(A_h).
  \]
  Hence $b_h\le\ST_X(A_h)$.  The phase duration is at least $b_h$.

  At the trigger, the current tree has weight at most
  $E_h=\rho b_h$.  Its open DFS has movement at most $2\rho b_h$, and its one
  maximum-delay charge is $b_h$.  Thus phase $h$ costs at most
  $(2\rho+1)b_h$.

  Use the preceding anchor request together with all phase-$h$ requests as
  the certificate group, assign it value $C_h=b_h$, and use the same odd and
  even two-phase macro-windows as in
  Theorem~\ref{thm:general-metric-steiner-competitive}.  The window length is
  at least $b_h$, and every connecting tree for the group has weight at least
  $b_h$.  These facts verify part~\textup{(b)} of
  Theorem~\ref{thm:certificate-engine} with $C_h=b_h$ and local factor
  $\alpha=2\rho+1$.  Therefore
  \[
    \ALG\le(2\rho+1)\sum_hb_h
      \le4(2\rho+1)\OPT.
  \]
  The remaining operations are polynomial whenever the tree procedure is.
\end{proof}

Exact ambient Steiner trees instantiate the principle with $\rho=1$; their
weight is already monotone, so the envelope equals the current value.  The
next subsection obtains a polynomial instantiation with $\rho=2$.

\subsection{A polynomial terminal-MST envelope}

A minimum spanning tree on the terminals alone is a factor-two surrogate
for ambient Steiner cost, but its weight is not monotone under terminal
addition.  For example, let $X$ denote a star metric with center $o$ and $k\ge3$
leaves, with center--leaf distance one and leaf--leaf distance two.  If the
anchor and requests occupy all $k$ leaves, the terminal MST has weight
$2(k-1)$.  When a request arrives at $o$, the terminal MST weight drops to
$k$.  A raw terminal-MST threshold can therefore move backwards at an
arrival.

\begin{figure}[h]
  \centering
  \begin{tikzpicture}[
      x=0.78cm,
      y=0.78cm,
      >=Latex,
      every node/.style={font=\scriptsize},
      terminal/.style={circle,draw=blue!65!black,fill=blue!12,thick,minimum size=5mm,inner sep=0pt},
      center/.style={circle,draw=orange!85!black,fill=orange!15,thick,minimum size=5mm,inner sep=0pt},
      mst/.style={draw=blue!65!black,very thick},
      steiner/.style={draw=orange!85!black,densely dashed,thick},
      axis/.style={draw=black!65,thin,-{Latex[length=1.8mm]}}
    ]

    \begin{scope}
      \node[font=\small\bfseries] at (2.2,4.25) {(a) Before arrival};
      \coordinate (a) at (0.4,0.9);
      \coordinate (b) at (4,0.9);
      \coordinate (c) at (2.2,3.6);
      \coordinate (o) at (2.2,1.8);
      \draw[mst] (a)--node[below] {$2$}(b)--node[right] {$2$}(c);
      \draw[black!30,dotted] (c)--(a);
      \draw[steiner] (o)--(a) (o)--(b) (o)--(c);
      \node[terminal] at (a) {$a$};
      \node[terminal] at (b) {$b$};
      \node[terminal] at (c) {$c$};
      \node[circle,draw=black!45,fill=white,minimum size=4mm,inner sep=0pt] at (o) {$o$};
      \node[align=center] at (2.2,0.05)
        {$\MST(\{a,b,c\})=4$\\$\ST_X(\{a,b,c\})=3$};
    \end{scope}

    \begin{scope}[xshift=5.5cm]
      \node[font=\small\bfseries] at (2.2,4.25) {(b) Center terminal};
      \coordinate (a2) at (0.4,0.9);
      \coordinate (b2) at (4,0.9);
      \coordinate (c2) at (2.2,3.6);
      \coordinate (o2) at (2.2,1.8);
      \draw[mst] (o2)--node[below left] {$1$}(a2)
        (o2)--node[below right] {$1$}(b2)
        (o2)--node[right] {$1$}(c2);
      \node[terminal] at (a2) {$a$};
      \node[terminal] at (b2) {$b$};
      \node[terminal] at (c2) {$c$};
      \node[center] at (o2) {$o$};
      \node[align=center] at (2.2,0.05)
        {$\MST(\{a,b,c,o\})=3$\\the raw threshold decreases};
    \end{scope}

    \begin{scope}[xshift=11cm]
      \node[font=\small\bfseries] at (2.2,4.25) {(c) Envelope};
      \draw[axis] (0.2,0.55)--(4.35,0.55) node[right] {$t$};
      \draw[axis] (0.45,0.3)--(0.45,3.85);
      \draw[black!25,dotted] (0.45,1.8)--(4.1,1.8);
      \draw[black!25,dotted] (0.45,2.45)--(4.1,2.45);
      \draw[black!25,dotted] (0.45,3.25)--(4.1,3.25);
      \node[left] at (0.4,1.8) {$2$};
      \node[left] at (0.4,2.45) {$3$};
      \node[left] at (0.4,3.25) {$4$};
      \draw[mst] (0.55,3.25)--(2.2,3.25)--(2.2,2.45)--(4.05,2.45);
      \draw[orange!85!black,very thick] (0.55,3.25)--(4.05,3.25);
      \draw[green!45!black,very thick] (0.55,1.8)--(4.05,1.8);
      \draw[black!55,dashed] (2.2,0.55)--(2.2,3.55);
      \node[below=2pt,align=center] at (2.2,0.55) {$o$ arrives};
      \node[anchor=west,text=blue!65!black] at (2.45,2.42) {$M(t)$};
      \node[anchor=west,text=orange!85!black] at (2.45,3.47) {$E(t)=\max_{u\le t}M(u)$};
      \node[anchor=west,text=green!35!black] at (2.45,1.58) {$b(t)=E(t)/2$};
    \end{scope}
  \end{tikzpicture}
  \caption{Why the polynomial general-metric algorithm uses a running MST
  envelope.  In the three-leaf unit-star metric, the terminal MST has weight
  $4$ before the center $o$ is itself a terminal, although the ambient
  Steiner tree (dashed) has weight $3$.  Adding $o$ makes the terminal MST
  drop to $3$.  The phase-local running maximum $E(t)$ cannot move backward,
  and $b(t)=E(t)/2$ remains a valid lower-bound certificate because
  $E(t)\le2\ST_X$ for the final phase terminals.}
  \label{fig:mst-envelope}
\end{figure}

For the polynomial algorithm, within phase $h$ let
\begin{equation}
  M_h(t)=\MST
    \bigl(\{s\}\cup\{x(q):q\in P(t)\}\bigr)
\end{equation}
be the MST weight in the complete metric induced by the current terminals.
Reset the envelope at the beginning of every phase and maintain
\begin{equation}
  \label{eq:mst-monotone-envelope}
  E_h(t)=\max_{\tau_{h-1}\le u\le t}M_h(u),
  \qquad
  b_h(t)=\frac12E_h(t),
\end{equation}
where the maximum ranges only over terminal prefixes of the current phase
and it is enough to update it after each arrival.  Wait until the oldest
phase request has age $b_h(t)$.  At the first crossing, open-DFS a
current terminal MST from the anchor, serve all phase requests in one atomic
event, and end at a designated phase request.

\begin{algorithm}[h]
  \caption{Polynomial terminal-MST envelope algorithm}
  \label{alg:mst-envelope}
  \begin{algorithmic}[1]
    \Require Finite metric $(X,d)$, initial anchor $s_0$
    \State $s\gets s_0$
    \Loop
      \State Serve and discard idle-anchor arrivals at zero cost
      \State Wait for the first complete epoch containing a non-anchor arrival
      \State $P\gets$ its non-anchor requests; $\alpha\gets\min_{q\in P}a(q)$; $E\gets0$
      \Loop
        \State $F\gets$ the canonical current $\MST(\{s\}\cup\{x(q):q\in P\})$
        \State $M\gets w(F)$
        \State $E\gets\max\{E,M\}$; $b\gets E/2$
        \State $c\gets\alpha+b$; $u\gets$ next arrival time, or $+\infty$
        \If{$u\le c$}
          \State Advance to $u$ and process the complete epoch
          \State Serve/discard its anchor arrivals and append all others to $P$
          \State \textbf{continue}
        \EndIf
        \State Advance to $c$ and \textbf{break}
      \EndLoop
      \State Choose a canonical actual non-anchor phase request $p$ in current $F$
      \State Open-DFS $F$ with the $s$--$x(p)$ path last; serve $P$ in one event
      \State $s\gets x(p)$; discard all phase ghosts
    \EndLoop
  \end{algorithmic}
\end{algorithm}

\begin{lemma}[MST versus ambient Steiner cost]
  \label{lem:terminal-mst-steiner-factor-two}
  For every terminal set $A\subseteq X$,
  \[
    \MST(A)
    \le2\ST_X(A).
  \]
\end{lemma}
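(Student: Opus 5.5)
The plan is the classical doubling-and-shortcutting argument, with the definition \eqref{eq:ambient-steiner-cost} (trees in the complete graph on $X$, Steiner vertices allowed) as the starting point. Fix a tree $F$ in the complete graph on $X$ with $A\subseteq V(F)$ and weight $\ST_X(A)$; such a minimizer exists because $X$ is finite, so only finitely many trees are candidates. If $|A|\le1$, both sides vanish, so assume $|A|\ge2$.

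First, double every edge of $F$. This gives a connected Eulerian multigraph, and it has an Euler tour of weight $2\,w(F)=2\ST_X(A)$, which is a closed walk visiting every vertex of $F$ and hence every terminal of $A$. Next, list the terminals of $A$ in the order in which the tour first visits them, say $v_1,\ldots,v_k$, and shortcut. By the triangle inequality, for each $i$ the distance $d(v_i,v_{i+1})$ is at most the length of the tour segment between the first visits of $v_i$ and $v_{i+1}$. These segments are disjoint, so $\sum_{i=1}^{k-1}d(v_i,v_{i+1})\le 2\ST_X(A)$; in fact the closing segment back to $v_1$ could be discarded as well.

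The path $v_1v_2\cdots v_k$ is a spanning tree of the complete metric graph on $A$. Its weight is at most $2\ST_X(A)$, so it upper-bounds $\MST(A)$, which proves the claim.

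The only point that needs care is that Steiner vertices of $X\setminus A$ may appear in $F$. They are simply skipped by the shortcutting, which is exactly where the metric triangle inequality is used. No earlier result of the paper is needed beyond the definition of $\ST_X$. This step is routine rather than a genuine obstacle, and the factor two can even be sharpened to $2(1-1/|A|)$, which is not needed here.
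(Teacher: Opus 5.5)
Your proposal is correct and follows essentially the same route as the paper: double an optimal ambient Steiner tree, take the Euler tour of length $2\ST_X(A)$, and shortcut nonterminal and repeated visits via the triangle inequality. The only cosmetic difference is that you shortcut directly to a Hamiltonian path on $A$, whereas the paper shortcuts to a closed terminal tour and then deletes one edge to obtain the spanning tree.
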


\begin{proof}
  Double every edge of an optimal ambient Steiner tree for $A$.  The
  resulting Euler tour has length $2\ST_X(A)$ and visits every
  terminal.  Shortcut repeated vertices and nonterminal visits using the
  triangle inequality.  The resulting terminal tour, and hence a spanning
  tree obtained from it by deleting an edge, has weight at most
  $2\ST_X(A)$.
\end{proof}

\begin{lemma}[MST-envelope phase and certificate bounds]
  \label{lem:mst-envelope-phase}
  Let $b_h$ denote the trigger value in phase $h$, and let
  \[
    S_h=\ST_X
      \bigl(\{s_{h-1}\}\cup\{x(q):q\in P_h\}\bigr)
  \]
  be the ambient Steiner cost of the final phase group.  Then
  \[
    b_h\le S_h,
    \qquad
    \tau_h-\tau_{h-1}\ge b_h,
    \qquad
    \cost(h)\le5b_h.
  \]
\end{lemma}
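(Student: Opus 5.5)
The plan is to specialize the argument of Proposition~\ref{prop:spatial-surrogate-transfer} with $\rho=2$, using Lemma~\ref{lem:terminal-mst-steiner-factor-two} as the surrogate bound. The three inequalities will be proved in the order they are stated. The only genuinely new point is that the envelope $E_h$ tracks MST weights of \emph{prefix} terminal sets, so each prefix must be compared with the final set $A_h=\{s_{h-1}\}\cup\{x(q):q\in P_h\}$.

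\emph{First inequality.} Let $u$ range over the update times within phase $h$, and let $A(u)\subseteq A_h$ be the terminal prefix at time $u$. By Lemma~\ref{lem:terminal-mst-steiner-factor-two},
\[
  M_h(u)\le 2\ST_X(A(u)).
\]
Monotonicity of ambient Steiner cost under terminal addition, already proved in Lemma~\ref{lem:general-metric-steiner-phase} by pruning leaves, gives $\ST_X(A(u))\le S_h$. Taking the maximum over $u$ yields $E_h\le 2S_h$, and hence $b_h=E_h/2\le S_h$. The argument does not need the MST to be monotone, which is fortunate given the star example; it only uses that Steiner cost is monotone.

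\emph{Second inequality.} The envelope $E_h(t)$ is nondecreasing by construction, and arrivals are processed before the trigger test. Between arrivals the oldest age grows continuously against a fixed threshold, so the first crossing happens exactly when the oldest phase request has age $b_h$. That request was released after $\tau_{h-1}$, because numbered phases start strictly after the previous clearance under the stated phase convention, including the discarding of idle-anchor arrivals. Therefore $\tau_h-\tau_{h-1}\ge \tau_h-a_h=b_h$. Finiteness of the input guarantees that the trigger fires.

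\emph{Third inequality.} At the trigger, the current MST $F$ has weight $M_h(\tau_h)\le E_h=2b_h$. An open DFS starting from the anchor traverses every edge of $F$ at most twice, so the movement is at most $2w(F)\le 4b_h$. The single service event has maximum delay exactly $b_h$, which gives $\cost(h)\le 5b_h$. I expect the main obstacle to be bookkeeping rather than mathematics. One must check that the MST used for routing at the trigger is the current one and not the one attaining the envelope maximum; this is harmless because the current weight is at most the envelope. One must also check that anchor-co-located arrivals discarded at zero cost neither enter $P_h$ nor contribute delay, so the phase cost is exactly the walk plus $b_h$.
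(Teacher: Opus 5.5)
Your proposal is correct and follows essentially the same route as the paper's proof. The paper likewise proves $\tau_h-\tau_{h-1}\ge b_h$ from monotonicity of the envelope, the equality trigger and release after $\tau_{h-1}$. It gets $b_h\le S_h$ by combining the factor-two MST bound with monotonicity of ambient Steiner cost over every phase prefix, and $\cost(h)\le5b_h$ by bounding the current MST weight by $E_h=2b_h$ and then doubling it for the open DFS.
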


\begin{proof}
  The phase-local envelope $b_h(t)$ is nondecreasing, so the causality,
  equality, and
  finite-termination argument of
  Lemma~\ref{lem:general-metric-steiner-phase} applies.  In particular, the
  oldest phase request has age $b_h$ at the trigger and the phase duration is
  at least $b_h$.

  For every prefix terminal set $A(u)$ of the phase,
  Lemma~\ref{lem:terminal-mst-steiner-factor-two} and monotonicity of ambient
  Steiner cost give
  \[
    M_h(u)
    \le2\ST_X(A(u))
    \le2S_h.
  \]
  Taking the phase-local maximum over prefixes yields $E_h\le2S_h$, and hence
  $b_h=E_h/2\le S_h$.

  At the trigger, the current MST has weight at most
  $E_h=2b_h$.  Its open depth-first traversal has movement at most
  $2E_h=4b_h$.  Adding the one batch-delay charge $b_h$ gives
  $\cost(h)\le5b_h$.
\end{proof}

\begin{theorem}[Polynomial general-metric competitiveness]
  \label{thm:general-metric-mst-competitive}
  The terminal-MST envelope algorithm is a polynomial-time deterministic
  $20$-competitive algorithm for visible elective service on every finite
  metric.
\end{theorem}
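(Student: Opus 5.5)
The plan is to instantiate part~(b) of Theorem~\ref{thm:certificate-engine} with certificate $C_h=b_h$ and local factor $\alpha=5$. Equivalently, we can invoke Proposition~\ref{prop:spatial-surrogate-transfer} with $\Gamma=\MST$ and $\rho=2$, which gives $4(2\rho+1)=20$. Since the envelope algorithm is literally that proposition's construction, the cleanest route is to verify its hypotheses and then add a polynomiality argument.

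\textbf{Surrogate hypotheses.} The terminal MST is an actual terminal-spanning tree in the complete metric on the terminals. Its weight satisfies $\MST(A)\le 2\ST_X(A)$ by Lemma~\ref{lem:terminal-mst-steiner-factor-two}, so~\eqref{eq:spatial-surrogate-factor} holds with $\rho=2$. Canonical tie-breaking under the fixed total orders makes both the tree and the endpoint choice deterministic.

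\textbf{Phase bounds.} Lemma~\ref{lem:mst-envelope-phase} already supplies the three local facts: $b_h\le S_h$, $\tau_h-\tau_{h-1}\ge b_h$, and $\cost(h)\le 5b_h$. Summing the last bound over phases gives $\ALG\le 5\sum_h b_h$.

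\textbf{Certificate families.} As in Theorem~\ref{thm:general-metric-steiner-competitive}, I take $G_h=\{p_{h-1}\}\cup P_h$ and color phases by parity. The windows are $(\tau_{h-2},\tau_h]$, with the initial windows starting at $0$; the free dummy $p_0$ handles $h=1$. I then check the following.
\begin{itemize}
\item Same-parity groups are identity-disjoint, because $p_{h-1}\in P_{h-1}$ has the opposite parity.
\item Every member of $G_h$ arrives inside its window, since $p_{h-1}$ arrived after $\tau_{h-2}$.
\item The window length is at least $b_h$.
\item $b_h\le S_h=\ST_X(\{x(q):q\in G_h\})$.
\end{itemize}
Part~(b) then yields $\ALG\le 20\OPT$.

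\textbf{Main obstacle: the bookkeeping around anchor arrivals.} Requests served at zero cost at the idle anchor belong to no numbered phase, and they must not disturb the argument. Under part~(b) they are noncertificate requests, so they are deleted in the extracted instance, and by Remark~\ref{rem:metric-certificate-subinstances} deleting them only helps. Their online cost is zero, so they contribute nothing to $\ALG$.

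\textbf{Polynomial time.} Each arrival epoch triggers one MST recomputation on at most $N$ distinct terminal locations, costing $O(N^2)$. Each phase ends with one open DFS of length $O(N)$ plus event processing linear in the number of requests. The total is therefore polynomial in $N$ and the input size.
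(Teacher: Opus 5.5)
Your proposal is correct and follows the paper's proof essentially verbatim. It uses the same groups $G_h=\{p_{h-1}\}\cup P_h$, the same parity macro-windows $(\tau_{h-2},\tau_h]$, and the certificate $C_h=b_h$, with Lemma~\ref{lem:mst-envelope-phase} supplying the window-length, Steiner, and $\alpha=5$ bounds for part~(b) of Theorem~\ref{thm:certificate-engine}; polynomiality then comes from recomputing the terminal MST after each arrival. Your observation that this is the $\Gamma=\MST$, $\rho=2$ instance of Proposition~\ref{prop:spatial-surrogate-transfer} matches the paper's own remark, and your extra checks (identity-disjointness, strict arrival after $\tau_{h-2}$, deletion of zero-cost anchor arrivals) only make explicit what the paper leaves implicit.
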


\begin{proof}
  Use the same phase group $G_h$ and parity macro-window as in
  Theorem~\ref{thm:general-metric-steiner-competitive}, but assign it the
  certificate value $C_h=b_h$.  Lemma~\ref{lem:mst-envelope-phase} gives
  both
  \[
    |I_h|\ge b_h
    \qquad\text{and}\qquad
    b_h\le
      \ST_X\bigl(\{x(q):q\in G_h\}\bigr).
  \]
  Thus part~\textup{(b)} of Theorem~\ref{thm:certificate-engine}, together
  with the local factor $\alpha=5$ from
  Lemma~\ref{lem:mst-envelope-phase}, implies
  \[
    \ALG\le5\sum_hb_h\le20\OPT.
  \]

  With the metric given as an explicitly encoded rational distance matrix,
  a terminal MST can be
  recomputed by a standard spanning-tree algorithm after every arrival in
  time polynomial in $|X|$.  Over $Q$ request occurrences, the total running
  time is $O(Q\,\mathrm{poly}(|X|))$.  Maintaining the phase-local scalar envelope
  $E_h(t)$ and
  reconstructing an open DFS traversal are also polynomial.  Repeated
  request locations are represented once in the MST while their request
  identities remain distinct for service and parity certificates.
\end{proof}

\subsection{Automatic service}

\begin{corollary}[Automatic-service bounds]
  \label{cor:general-metric-automatic}
  The exact-Steiner and terminal-MST envelope algorithms have competitive
  ratios $12$ and $20$, respectively, under automatic service.
\end{corollary}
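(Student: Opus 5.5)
The plan follows the pattern of Corollary~\ref{cor:visible-automatic-competitive}: run each elective phase algorithm on a \emph{virtual} request set, execute its walk under automatic service, and show two things. First, the automatic execution costs no more than the virtual elective execution. Second, the comparator values coincide by Theorem~\ref{thm:auto-elective-equivalence}. Concretely, the automatic implementation maintains the same phase state as Algorithm~\ref{alg:exact-steiner-phase} (respectively Algorithm~\ref{alg:mst-envelope}): anchor $s$, virtual phase set $P$, and threshold $S(t)$ (respectively $b_h(t)$). Any request arriving at the parked anchor is served automatically at age zero. If no phase is active, it is discarded exactly as in the elective algorithm. If a phase is active, it is retained as a zero-age ghost. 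Because the anchor is already a terminal, adding a ghost at $s$ changes neither the ambient Steiner cost nor the terminal MST, so the thresholds, trigger times and chosen trees agree with the elective run on the virtual instance.

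Next I will check that every automatic event is dominated by the corresponding virtual event. Between triggers the server is parked at $s$, so the only automatic services are these zero-delay, zero-movement co-located hits. At a trigger, the algorithm traverses the same open DFS of $F$ that the elective algorithm would. This walk visits every terminal, so every remaining real phase request is hit and served in that single event. The step needing care is that the walk may also hit requests outside the phase. Under arrival-first processing at the trigger timestamp, every arrived request is already in $P$ or served, and all earlier phases were cleared. So the walk can only hit pending real phase requests, and no future request has been released. The batch maximum is therefore over a subset of the virtual phase requests, and by deletion monotonicity it is at most the virtual delay charge $S_h$ (respectively $b_h$). Movement is identical, and the server ends at the same designated location $x(p_h)$.

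It follows that the automatic execution cost is at most the elective execution cost on the virtual instance, phase by phase. The virtual instance is the real input, with ghosts being copies of real requests at their true locations and times. Hence Lemma~\ref{lem:general-metric-steiner-phase} and Lemma~\ref{lem:mst-envelope-phase}, together with Theorem~\ref{thm:general-metric-steiner-competitive} and Theorem~\ref{thm:general-metric-mst-competitive}, bound it by $12\OPT_{\mathrm{elective}}$ and $20\OPT_{\mathrm{elective}}$.

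The certificate groups are built from ghost identities. By Lemma~\ref{lem:zero-age-arrivals} and Remark~\ref{rem:metric-certificate-subinstances}, they remain valid subinstances of the true input, so the parity argument is unaffected. One subtlety is that the designated request $p_h$ might be a ghost. It is still a real request of the input with its true location, so using its identity in $G_{h+1}$ is legitimate. Finally, Theorem~\ref{thm:auto-elective-equivalence} replaces $\OPT_{\mathrm{elective}}$ by $\OPT_{\mathrm{auto}}$.

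I expect the main obstacle to be the claim that an automatic walk cannot prematurely serve a request that belongs to a later phase. That claim is what relies on complete-epoch processing before the trigger and on instantaneous movement.
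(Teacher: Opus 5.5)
Your proposal is correct and follows essentially the same route as the paper. Anchor arrivals are served at age zero and either discarded or kept as ghosts, which changes neither threshold; the trigger is the same open DFS, with identical movement and endpoint and a batch maximum no larger than the virtual one; and Theorem~\ref{thm:auto-elective-equivalence} then transfers the comparator. Your explicit check that the trigger walk cannot hit a request outside the current phase fills in a step the paper leaves implicit. The extra paragraph re-verifying the certificate groups, including its appeal to Lemma~\ref{lem:zero-age-arrivals}, is unnecessary: the automatic run already costs no more than the elective run covered by Theorems~\ref{thm:general-metric-steiner-competitive} and~\ref{thm:general-metric-mst-competitive}.
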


\begin{proof}
  The server is stationary at the phase anchor while waiting.  A request
  arriving there is automatically served with delay zero and may either be
  discarded or retained as a virtual ghost; its location is already a
  terminal and it changes neither threshold.  At the trigger, the open DFS
  is one atomic service event visiting every remaining phase terminal.
  Hence every remaining real phase request is automatically served in that
  event, whose maximum delay is no larger than the corresponding virtual
  phase delay.  The movement and final anchor are identical to those in the
  elective execution.  Choosing the endpoint from the positive phase
  requests provides an actual request identity for the next parity
  certificate.

  Thus the automatic execution costs no more than the elective execution
  analyzed above.  Theorem~\ref{thm:auto-elective-equivalence} gives
  $\OPT_{\mathrm{auto}}=\OPT_{\mathrm{elective}}$ under instantaneous
  movement, so the same competitive ratios follow.
\end{proof}

\section{Blind Spatial Information on a Finite Line}
\label{sec:blind-line}

Standard notions of non-clairvoyance often hide temporal information, such
as a job's processing requirement or future evolution.  We study instead
\emph{spatially blind service}: the release time, persistent request identity,
and waiting counter are visible, but the metric location is hidden until the
server visits it.  Thus an announcement certifies that an obligation exists,
while the server must physically explore the metric to locate it.

This section stress-tests the same certificate theorem under information
loss.  The two parity families and the global factor four remain unchanged;
only the local task becomes harder.  A dyadic search realizes the
anchor-distance certificate within factor $21$, giving ratio $84$.

Imagine a bomb-disposal expert patrolling a beach.  Whenever a new device is
buried, an alarm reveals its release time and starts a visible waiting clock,
but not its location.  Repeated alarms turn the classical one-target search
story into a dynamic service problem in which exploration, batching, and
relocation interact.  This information model is particularly natural for
mobile-service and routing problems because movement has three simultaneous
roles: it costs distance, discovers hidden locations, and determines the
starting state of every future route.  We do not claim the same modeling
choice is automatically natural for generic covering problems, where hidden
locations may instead introduce an unrelated query problem.

The following conventions are in force throughout this section.  Announcements
carry persistent identities (or equivalent timestamp-and-multiplicity
feedback), and visiting a hidden location reveals the identities there.  One
instantaneous sweep or walk is one component service event, so all requests
served during it share one maximum-delay charge.  A complete arrival epoch is
processed before an action, and no epoch is interleaved inside a compound
zero-duration action.  Pure repositioning is permitted and charged by length.
Under automatic service, visiting a pending request serves it.  These are the
atomic-action and information conventions of Section~\ref{sec:model}, stated
here before the algorithms because the blind guarantees depend on them.  The
line order then makes the model qualitatively different from a star: an
interval of radius $R$ can be explored with movement $O(R)$ independently of
the number of sites it contains.

Figure~\ref{fig:visible-blind-space-time} contrasts visible service with the
exploration forced by the same announced arrivals under spatial blindness.

\begin{figure}[h]
  \centering
  \begin{tikzpicture}[
      x=0.56cm,
      y=0.82cm,
      >=Latex,
      axis/.style={draw=black!65,thin,-{Latex[length=2mm]}},
      server/.style={draw=black,very thick,-{Latex[length=2mm]}},
      wait/.style={draw=black!70,thick},
      hidden/.style={draw=orange!85!black,densely dotted,thick},
      known/.style={draw=blue!65!black,dashed,thick},
      probe/.style={draw=blue!65!black,densely dashed,thick,{Latex[length=1.7mm]}-{Latex[length=1.7mm]}},
      every node/.style={font=\scriptsize}
    ]

    \begin{scope}[xshift=0cm]
      \node[font=\small\bfseries] at (0,4.75) {(a) Visible locations};
      \draw[axis] (-4.35,0) -- (4.45,0) node[right] {$x$};
      \draw[axis] (-4.2,-0.12) -- (-4.2,4.55) node[above] {$t$};
      \foreach \x/\lab in {-1/$-1$,0/$0$,3/$3$}
        \draw (\x,0.06)--(\x,-0.06) node[below=2pt] {\lab};
      \foreach \t in {1,2,4}
        \draw (-4.14,\t)--(-4.26,\t) node[left=1pt] {\t};

      \fill[black] (0,0) rectangle +(0.13,0.13);
      \node[below right=2pt] at (0,0) {$s_0$};
      \fill[blue!65!black] (-1,0) circle (2.1pt)
        node[above left=1pt] {$q_1$};
      \fill[blue!65!black] (3,0.6) circle (2.1pt)
        node[right=2pt] {$q_2$};
      \draw[known] (-1,0.06)--(-1,4);
      \draw[known] (3,0.66)--(3,4);
      \draw[wait] (0,0.14)--(0,4);
      \node[anchor=west,text=black!70] at (0.18,2.8)
        {known hull; wait for $b=4$};
      \draw[server] (0,4)--(-1,4);
      \draw[server] (-1,4)--(3,4);
      \node[above=2pt,align=center] at (1.15,4)
        {one batch covers $[-1,3]$};
    \end{scope}

    \begin{scope}[xshift=8.5cm]
      \node[font=\small\bfseries] at (0,4.75) {(b) Blind locations};
      \draw[axis] (-4.35,0) -- (4.45,0) node[right] {$x$};
      \draw[axis] (-4.2,-0.12) -- (-4.2,4.55) node[above] {$t$};
      \foreach \x/\lab in {-1/$-1$,0/$0$,3/$3$}
        \draw (\x,0.06)--(\x,-0.06) node[below=2pt] {\lab};
      \foreach \t in {1,2,4}
        \draw (-4.14,\t)--(-4.26,\t) node[left=1pt] {\t};

      \fill[black] (0,0) rectangle +(0.13,0.13);
      \node[below right=2pt] at (0,0) {$s_0$};
      \fill[orange!85!black] (-1,0) circle (2.1pt)
        node[above left=1pt] {$q_1$};
      \fill[orange!85!black] (3,0.6) circle (2.1pt)
        node[right=2pt] {$q_2$};
      \draw[hidden] (-1,0.06)--(-1,1);
      \draw[hidden] (3,0.66)--(3,4);
      \draw[wait] (0,0.14)--(0,4);

      \draw[probe] (-1,1)--(1,1);
      \node[above right=1pt,text=blue!65!black] at (0.15,1) {$r=1$};
      \draw[fill=white,draw=blue!65!black,thick] (-1,1) circle (2.2pt);
      \node[above left=1pt,text=blue!65!black] at (-1,1) {discover $q_1$};
      \draw[known] (-1,1.06)--(-1,4);

      \draw[probe] (-2,2)--(2,2);
      \node[above right=1pt,text=blue!65!black] at (0.15,2) {$r=2$};

      \draw[probe] (-4,4)--(4,4);
      \node[above left=1pt,text=blue!65!black] at (4,4) {$r=4$: all found};
      \draw[fill=white,draw=blue!65!black,thick] (3,4) circle (2.2pt);
      \draw[server] (0,4)--(-1,4);
      \draw[server] (-1,4)--(3,4);
      \node[below=3pt,align=center,fill=white,inner sep=1pt] at (1.2,4)
        {cleanup in the same atomic action};
    \end{scope}
  \end{tikzpicture}
  \caption{A space--time view of the same two-request instance on
  $X=\{-4,-3,\ldots,4\}$.  The server starts at $0$; requests
  $q_1=(-1,0)$ and $q_2=(3,0.6)$ are shown as (location, release-time)
  points, and time runs upward.  In the visible model, their locations and
  hull are known at arrival, so the server waits until the hull budget is
  mature and clears both in one walk.  In the blind elective model, dashed
  horizontal bars denote the spatial coverage of instantaneous closed
  probes at dyadic radii.  The first probe discovers $q_1$ but deliberately
  leaves it pending; only the radius-$4$ probe certifies that every announced
  request has been found, after which one cleanup batch is executed.  Dotted
  vertical segments represent a request whose location is still hidden;
  dashed vertical segments represent a discovered or visible request that
  remains unserved.}
  \label{fig:visible-blind-space-time}
\end{figure}

We first quantify the irreducible cost of spatial uncertainty.  The lower
bounds below give the online algorithm the favorable promise that the input
contains only one request.

\begin{theorem}[One-request blind lower bounds]
  \label{thm:blind-search-lower-bounds}
  Suppose one hidden request arrives at time zero and no request will arrive
  later.
  \begin{enumerate}
    \item On the three-point line $\{-1,0,1\}$, with the server initially at
    $0$, every deterministic algorithm has competitive ratio at least $3$,
    and every randomized algorithm has ratio at least $2$ against an
    oblivious adversary.
    \item On a unit star with $d$ leaves, with the server initially at its
    center, the corresponding lower bounds are $2d-1$ deterministically and
    $d$ randomized.
  \end{enumerate}
  Both statements hold under elective and automatic service.
\end{theorem}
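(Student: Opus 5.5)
The plan is to compare a fixed exploration order against an offline optimum that is trivially $1$. In every instance the single request sits at distance exactly $1$ from the initial server position: an endpoint $\pm1$ of the line, or a leaf of the star. Since the request is the only one, released at time zero, the clairvoyant schedule moves to it at time $0$, serves it with zero delay and stops. Hence $\OPT=1$ under both elective and automatic service. All delay charges are nonnegative, so it suffices to lower-bound the online \emph{movement} performed before the server first visits the hidden location. Under either semantics the request cannot be served before that visit, which is why one argument covers both service models.

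The key structural step is an indistinguishability argument. Fix a deterministic online algorithm. Until it visits the hidden location, everything it observes is the same for every placement of the request among the unvisited candidate points: the announcement, the identity and the waiting counter are all independent of the location, and every visit so far has returned ``nothing here''. No further arrivals occur, so waiting reveals nothing either. Hence the algorithm's sequence of atomic actions and component walks, including the adaptive choices made inside one atomic action, is a single fixed trajectory. It is the trajectory the algorithm follows on the input in which the request has not yet been found. Feasibility forces this trajectory to visit every candidate point eventually. Otherwise the adversary places the request at a never-visited candidate and the algorithm never serves it. I expect stating this carefully under the paper's action conventions to be the main obstacle, especially the adaptivity within one zero-time action and the initial zero-length check at the start. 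The check at the start can be handled by excluding location $0$ or the star center from the adversary's placements.

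With the fixed trajectory in hand, order the $k$ candidates ($k=2$ on the line, $k=d$ on the star) by first-visit time. The $i$-th candidate visited is reached only after the server has travelled to and returned from each of the $i-1$ earlier candidates, each a dead end at distance $1$ from the center. So the movement before the first visit of the $i$-th candidate is at least $2(i-1)+1=2i-1$. For the deterministic bound the adversary places the request at the last-visited candidate. This forces $\ALG\ge 2k-1$, which is $3$ on $\{-1,0,1\}$ and $2d-1$ on the star, against $\OPT=1$.

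For the randomized bounds I will invoke Yao's principle against an oblivious adversary, taking the uniform distribution over the $k$ candidates. Every deterministic algorithm then pays in expectation at least
\[
\frac1k\sum_{i=1}^{k}(2i-1)=k .
\]
This gives ratio at least $2$ on the line and $d$ on the star, since $\OPT=1$ for every realization. Because both the online lower bound and $\OPT=1$ hold verbatim under elective and automatic service, all four bounds hold in both models.
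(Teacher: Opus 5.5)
Your proposal is correct and follows the paper's proof step for step: $\OPT=1$, the first-visit movements satisfy $c_1\ge1$ and $c_{k+1}\ge c_k+2$ (the paper derives this from the pairwise leaf distance $2$, which is the precise form of your ``travel to and return from a dead end'' step), the adversary places the request at the last first-visited candidate, and Yao's principle with the uniform distribution gives expected movement $k$. Your explicit indistinguishability argument, which fixes the no-discovery trajectory and notes that feasibility forces it to visit every candidate, is the same idea as the paper's remark about simulating the deterministic strategy, spelled out in more detail.
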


\begin{proof}
  Consider first the line.  For a deterministic strategy, one of the two
  endpoints is first reached no earlier than movement $1$ and the other no
  earlier than movement $3$.  Put the request at the endpoint reached
  second.  The online movement is at least $3$, while a clairvoyant server
  pays $1$.  Equivalently, this fixed input can be selected after simulating
  the deterministic strategy.  For the randomized bound, choose the hidden
  endpoint uniformly.  Every deterministic strategy then has expected
  discovery movement at least $(1+3)/2=2$; Yao's principle applies.

  On the star, let $c_k$ denote the movement accumulated when a deterministic
  strategy first reaches its $k$th distinct leaf.  Since distinct leaves are
  distance two apart, $c_1\ge1$ and $c_{k+1}\ge c_k+2$, whence
  $c_k\ge2k-1$.  A request at the last first-visited leaf forces movement
  $2d-1$, whereas the optimum pays one.  Under a uniform hidden leaf, every
  deterministic strategy has expected discovery movement at least
  \[
    \frac1d\sum_{k=1}^d(2k-1)=d,
  \]
  and Yao's principle gives the randomized claim.  Waiting before discovery
  and declining service after discovery cannot reduce the movement already
  incurred, so the argument covers both semantics.
\end{proof}

The following elementary fact rules out an impossibility based on one hidden
request.  If $|X|=1$, every request is at the known server point and all
blind statements are trivial; hence assume $|X|\ge2$ below.

\begin{proposition}[One-request blind search]
  \label{prop:blind-one-request}
  Let $X\subset\mathbb{R}$ denote a finite set, let $s\in X$ denote the initial server
  position, and let
  \[
    \delta=\min\{|x-y|:x,y\in X,\ x\ne y\}>0.
  \]
  A hidden request at distance $d>0$ can be found deterministically with
  movement less than $9d$.  A request at distance zero is found with zero
  movement.
\end{proposition}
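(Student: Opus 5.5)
Distance zero is immediate: under the information model, occupying a location counts as a visit, so a zero-length check at $s$ reveals the request. For $d>0$ I would use the classical doubling (cow-path) search, scaled to the minimum gap $\delta$ so that the first probe radius is already at most $d$. Concretely, for $i=0,1,2,\ldots$ let $r_i=2^i\delta$. In round $i$ the server moves from $s$ to the point $s+r_i$ (listing all sites passed, which is free on a line), returns to $s$, then moves to $s-r_i$ and returns to $s$. The order of the two sides may alternate across rounds, but fixing the order suffices for the bound below. Round $i$ visits every site of $X$ within distance $r_i$ of $s$. Its movement is at most $4r_i$. If it succeeds, the server can stop at the moment of discovery, so the successful round costs less.

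Since the request lies at a site of $X$ different from $s$, we have $d\ge\delta=r_0$. Let $k$ be the first index with $r_k\ge d$. Then either $k=0$, or $r_{k-1}<d$, giving $r_k<2d$. All rounds $0,\ldots,k-1$ fail, and their total movement is at most
\[
  4\sum_{i<k}r_i = 4(r_k-\delta)< 4r_k.
\]
In round $k$ the request is discovered before the full $4r_k$ is spent. In the worst case the wrong side is swept first, costing $2r_k$, and then the server reaches distance $d$ on the correct side. So round $k$ contributes at most $2r_k+d$.

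This gives total movement less than $6r_k+d$. With $r_k<2d$ this is $<13d$, which is too weak; this accounting is the main obstacle. I would fix it by searching geometrically asymmetrically: alternate sides with radii $r_i=2^i\delta$, one side per round. Round $i$ goes out to distance $r_i$ on side $(-1)^i$ and returns, at cost $2r_i$.

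Now let $j$ be the first round that reaches distance $r_j\ge d$ on the correct side. Because the correct side recurs every two rounds, either $j\le1$ or $r_{j-2}<d$, so $r_j<4d$. The failed rounds cost
\[
  2\sum_{i<j}r_i< 2r_j,
\]
and the final leg costs $d$, for a total below $2r_j+d<9d$. I would verify the boundary cases $j=0,1$ directly using $d\ge\delta$: for $j=0$ the cost is just $d$, and for $j=1$ it is at most $2\delta+d\le 3d$. These ensure the strict inequality. Both cases are consistent with the bound, and the argument is the standard linear-search ratio $9$.
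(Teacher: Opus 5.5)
Your final argument is correct and is essentially the paper's proof with $L=r_j$: one-sided alternating excursions of radii $2^i\delta$, the bound $r_j<4d$ from the failed same-side excursion two rounds earlier, completed excursions costing less than $2r_j$, a final partial leg of length $d$, and the cases $j=0,1$ handled via $d\ge\delta$. One cosmetic point: an excursion should turn at the farthest site of $X$ within radius $r_i$ on its side, or at the end of the finite line, so its cost is at most $2r_i$ rather than exactly $2r_i$; the abandoned two-sided scheme can simply be dropped.
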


\begin{proof}
  Check $s$ and then make alternating left and right excursions from $s$ of
  radii
  \[
    \delta,2\delta,4\delta,8\delta,\ldots,
  \]
  returning to $s$ after every unsuccessful excursion.  An excursion of
  planned radius $L$ visits every site on its chosen side within distance
  $L$; if the finite line ends earlier, it turns at the endpoint.  Suppose
  the request is first reached on an excursion of planned radius $L$.  For
  the first excursion the cost is $d$; on the second it is at most
  $2\delta+d\le3d$.  Otherwise, the preceding excursion
  on the same side had radius $L/4$ and failed, so $d>L/4$.  The completed
  excursions have total length less than $2L$, and the final partial
  excursion has length $d$.  Hence the search length is less than
  \[
    2L+d<9d.
  \]
\end{proof}

With instantaneous movement, the search in
Proposition~\ref{prop:blind-one-request} accrues no additional waiting time.
The next algorithm separates exploration from clearing and extends this
local observation to arbitrary dynamic arrivals.

We use the persistent-identity feedback specified in
Section~\ref{subsec:information-models}; in particular, the algorithm knows
when every announced request has been discovered.  The complete arrival
epoch at a timestamp is processed before the atomic action at that timestamp.
The successful sweep and its ensuing cleanup or repositioning are components
of one compound atomic action, so no arrival is interleaved between them.

\subsection{Dyadic anchor exploration}

The algorithm operates in positive-cost phases.  Phase $k$ starts with no
pending request and with the server at an anchor $s_{k-1}$.  Requests arriving
at the anchor are observed and served immediately at zero cost; they do not
start or end a positive-cost phase.  Let $\alpha_k$ denote the arrival time of the
oldest remaining pending request.  At ages
\[
  r_j=2^j\delta,
  \qquad j=0,1,2,\ldots,
\]
make a closed exhaustive sweep of
$X\cap[s_{k-1}-r_j,s_{k-1}+r_j]$, returning to the anchor and deliberately
serving no request.  At the first radius for which every currently pending
request has been discovered, immediately, as the next component of the same
atomic action, make one shortest walk that serves all of them and ends at a
farthest pending location from the anchor.  Denote
this terminal request by $p_k$, its location by $s_k$, and the completion
time by $T_k$.  Let $p_0$ denote a dummy request at $(s_0,0)$ and set $T_0=0$.

\begin{algorithm}[h]
  \caption{Blind elective dyadic anchor exploration}
  \label{alg:blind-elective-dyadic}
  \begin{algorithmic}[1]
    \Require Known finite line $X$ with $|X|\ge2$, initial anchor $s_0$
    \State $\delta\gets\min\{|x-y|:x,y\in X,\ x\ne y\}$; $s\gets s_0$
    \Loop
      \State Wait for the next complete arrival epoch and process it
      \State Zero-check $s$ and serve/remove every identity revealed there
      \State $P\gets$ identities from the epoch unresolved after the zero-check
      \If{$P$ is empty} \State \textbf{continue} \EndIf
      \State $\alpha\gets\min_{q\in P}a(q)$; $j\gets0$
      \Loop
        \State $r\gets2^j\delta$; $c\gets\alpha+r$; $u\gets$ next arrival time, or $+\infty$
        \If{$u\le c$}
          \State Advance to $u$, process its whole epoch, and zero-check $s$
          \State Append every identity unresolved after the zero-check to $P$
          \If{$u<c$} \State \textbf{continue} \EndIf
        \Else
          \State Advance to $c$
        \EndIf
        \State Sweep $s\to L_r\to R_r\to s$ in one component without serving
        \Statex \Comment{$L_r,R_r$ are the extreme sites of $X\cap[s-r,s+r]$}
        \State Mark every encountered identity and location as discovered
        \If{some identity in $P$ is undiscovered}
          \State $j\gets j+1$; \textbf{continue}
        \EndIf
        \State Let $L,R$ denote the extreme discovered locations in $P$
        \If{$|s-R|\ge|s-L|$}
          \State Traverse $s\to L\to R$ and set $s\gets R$
        \Else
          \State Traverse $s\to R\to L$ and set $s\gets L$
        \EndIf
        \State Electively serve all identities in $P$ in this cleanup component
        \State Clear $P$ and \textbf{break}
      \EndLoop
    \EndLoop
  \end{algorithmic}
\end{algorithm}

The initial zero-check is costless and is what makes the phase-start rule
implementable: only after it can the algorithm infer that a remaining
unresolved announcement is genuinely off the current anchor.

\begin{lemma}[Blind phase bound]
  \label{lem:blind-phase-bound}
  Let
  \[
    R_k=|s_k-s_{k-1}|
       =\max_{q\text{ served in phase }k}|x(q)-s_{k-1}|.
  \]
  The cost of phase $k$ is less than $21R_k$.
\end{lemma}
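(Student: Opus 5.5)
The plan is to split the cost of phase $k$ into three parts and bound each by a multiple of $R_k$: the exploration sweeps, the single cleanup walk, and the one batch-delay charge. Let $J$ be the index of the successful radius, so the cleanup happens at time $\alpha_k+r_J$ with $r_J=2^J\delta$. Here $\alpha_k$ is the oldest pending release of the phase. Anchor arrivals and zero-checks are served at zero cost, so they contribute nothing.

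\emph{Step 1 (sweep movement).} The sweep at radius $r_j$ is the closed walk $s_{k-1}\to L_{r_j}\to R_{r_j}\to s_{k-1}$. Both extremes lie within distance $r_j$ of the anchor, so its length is $2(|s_{k-1}-L_{r_j}|+|R_{r_j}-s_{k-1}|)\le4r_j$. Summing the geometric series over $j=0,\ldots,J$ gives total sweep movement at most $4\delta(2^{J+1}-1)<8r_J$. The sweeps serve nothing, so they add no delay.

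\emph{Step 2 (cleanup and delay).} The cleanup walk goes from the anchor to the nearer extreme discovered location and then across to the farther one. It has length at most $\min\{|s-L|,|s-R|\}+|R-L|\le R_k+2R_k=3R_k$. This uses the fact that the farthest pending location, where the walk ends, is at distance exactly $R_k=|s_k-s_{k-1}|$ from the anchor. The phase has a single service event. It occurs when the oldest pending request has age exactly $r_J$, so the delay charge is $r_J$. The phase cost is therefore less than $9r_J+3R_k$.

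\emph{Step 3 (comparing $r_J$ with $R_k$).} This is the step that needs care.
\begin{itemize}
\item If $J=0$, the phase is positive, so some served request lies off the anchor. Every nonzero distance in $X$ is at least $\delta$, hence $r_0=\delta\le R_k$. The phase cost is then at most $12R_k<21R_k$.
\item If $J\ge1$, the sweep at radius $r_{J-1}$ failed. At that moment some identity already in $P$ was undiscovered. The sweep visited every site within distance $r_{J-1}$ of the anchor, so that request lies at distance greater than $r_{J-1}$. It stays pending, since sweeps serve nothing and the anchor did not move, and so it is served in the phase-$k$ cleanup. Hence $R_k>r_{J-1}=r_J/2$, i.e.\ $r_J<2R_k$, and the cost is less than $18R_k+3R_k=21R_k$.
\end{itemize}

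The main obstacle is the causality bookkeeping in Step 3. A failure at radius $r_{J-1}$ must be caused by a request present at that sweep, not by one arriving later; arrivals with $u=c$ are processed before the sweep, and a request counts as in $P$ at that sweep only if it had arrived by then. The other subtlety is the identification of $R_k$ with the farthest served location, which has to hold even after new arrivals extend $P$ between sweeps.
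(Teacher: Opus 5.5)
Your proposal is correct and follows the paper's proof essentially step for step: the same three-way split into geometric sweep movement below $8r$, a cleanup walk of length at most $3R_k$, and one delay charge of exactly $r$. These are combined with the same scale comparison $R_k\le r\le 2R_k$, obtained from the failed radius $r/2$ or, when the first sweep succeeds, from $\delta\le R_k$; your separate $12R_k$ bound for $J=0$ is only a slightly finer version of the paper's uniform estimate $8r+3R_k+r\le 21R_k$.
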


\begin{proof}
  Let $r=2^j\delta$ denote the first successful exploration radius.  The
  successful sweep gives $R_k\le r$.  If $j\ge1$, the preceding radius
  $r/2$ did not discover every pending request.  Its closed sweep visited
  every metric point within distance $r/2$, so some then-pending request was
  farther away.  Elective exploration did not serve it, and it remains in
  phase $k$.  Hence $R_k>r/2$.  If $j=0$, a positive-cost phase contains a
  non-anchor request, so $R_k\ge\delta=r$.  Thus in all cases
  \begin{equation}
    \label{eq:blind-final-scale}
    R_k\le r\le2R_k.
  \end{equation}

  A closed radius-$r_j$ line sweep costs at most $4r_j$.  Therefore all
  exploration costs less than
  \[
    4\sum_{i=0}^j2^i\delta<8r.
  \]
  If the final pending hull extends distances $a$ and $b$ to the two sides of
  the anchor, its cleanup walk ending at a farther extreme costs at most
  \[
    2\min\{a,b\}+\max\{a,b\}\le3R_k.
  \]
  The oldest phase request remains pending throughout exploration: elective
  sweeps deliberately serve no phase request, and the anchor is fixed.  The
  successful sweep and cleanup occur at time $\alpha_k+r$ in the same atomic
  action.  Hence the single cleanup batch has maximum waiting time exactly
  $r$.  Consequently the phase cost is less than
  \[
    8r+3R_k+r\le21R_k.
  \]
\end{proof}

\begin{theorem}[Blind elective competitiveness]
  \label{thm:blind-elective-competitive}
  The dyadic anchor-exploration algorithm is $84$-competitive for blind
  elective online service on a known finite line with unit waiting times and
  observable hits.
\end{theorem}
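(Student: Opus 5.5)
The plan is to instantiate part~(b) of Theorem~\ref{thm:certificate-engine} with certificate $C_k=R_k$ and local factor $\alpha=21$; Lemma~\ref{lem:blind-phase-bound} already supplies the local bound $\ALG\le21\sum_kR_k$. Zero-cost anchor services contribute nothing and are omitted. Two things then remain: valid groups and windows for each parity class, and the Steiner and window-length inequalities of~\eqref{eq:metric-group-certificate-assumptions}.

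For phase $k$ I will take the group $G_k=\{p_{k-1}\}\cup P_k$. Here $p_{k-1}$ is the terminal request of the previous cleanup, with the dummy $p_0$ at $(s_0,0)$ for $k=1$, and $P_k$ is the set of requests served in phase $k$. The Steiner condition holds because $p_{k-1}$ sits at $s_{k-1}$ and $P_k$ contains a request at distance exactly $R_k$ from it, namely the farthest one $p_k$. Any connector spanning both points has length at least $R_k$, so $\ST_X(\{x(q):q\in G_k\})\ge R_k$.

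I color phases by parity and use the macro-windows $I_k=(T_{k-2},T_k]$, with $[0,T_k]$ for the first phase of each color, exactly as in Theorem~\ref{thm:general-metric-steiner-competitive}. Groups of one color are identity-disjoint, since $G_k$ and $G_{k+2}$ share no request: $p_{k+1}\in P_{k+1}$ is not in $P_k$. Every member of $G_k$ arrives in $I_k$, since $p_{k-1}$ arrived after $T_{k-2}$ and the phase-$k$ requests arrived after $T_{k-1}$.

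For the window length, the oldest request of phase $k$ arrives at $\alpha_k>T_{k-1}$, and the phase completes at $T_k=\alpha_k+r\ge\alpha_k+R_k$ by~\eqref{eq:blind-final-scale}. Hence $|I_k|\ge T_k-T_{k-1}\ge R_k$. Strictness of $\alpha_k>T_{k-1}$ follows from the convention that an epoch at $T_{k-1}$ is processed before, and absorbed into, the completing action. I also need $T_k$ strictly increasing for the windows to be well formed; this holds since $r>0$.

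The one genuinely delicate point is that the comparator in the blind model is the clairvoyant elective $\OPT$. Theorem~\ref{thm:certificate-engine} is a statement about feasible schedules of the full-information extracted instance and never uses the information model. So Remark~\ref{rem:metric-certificate-subinstances} applies unchanged, giving $\sum_kR_k\le4\OPT$ and hence $\ALG\le84\OPT$. I would also check that requests discovered during exploration but not served until cleanup still belong to $P_k$. They do, because elective sweeps serve nothing.
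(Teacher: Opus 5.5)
Your proposal is correct and follows the paper's proof: parity coloring, macro-windows $(T_{k-2},T_k]$ with arrivals placed by the completed-action convention, the window bound $T_k-T_{k-1}\ge r\ge R_k$, and part~(b) of Theorem~\ref{thm:certificate-engine} with $C_k=R_k$ and $\alpha=21$. The only difference is cosmetic: the paper uses just the witness pair $\{p_{k-1},p_k\}$ as the group rather than your $\{p_{k-1}\}\cup P_k$. This is immaterial, because the Steiner condition only needs two points at distance $R_k$.
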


\begin{proof}
  The terminal request $p_{k-1}$ of phase $k-1$ is located at the next
  anchor $s_{k-1}$, and $p_k$ is a phase-$k$ request at distance $R_k$ from
  it.  Use $(p_{k-1},p_k)$ as the witness pair for transition $k$.

  Color the transitions by parity.  For $k\ge2$, put the witness pair in the
  two-phase macro-window
  \[
    U_k=(T_{k-2},T_k],
  \]
  with the analogous initial window starting at zero for the first
  transition of each color.  Since a phase admits new requests only after the
  preceding compound action has completed,
  \[
    a(p_{k-1})>T_{k-2}
    \qquad\text{and}\qquad
    T_{k-1}<a(p_k)\le T_k.
  \]
  Also $a(p_{k-1})\le T_{k-1}$, so both witnesses have their actual arrival
  times in $U_k$.  The open left endpoint excludes witnesses from the
  preceding same-color window, whereas the closed right endpoint permits
  $a(p_k)=T_k$.  Furthermore, phase $k$ finishes when its oldest pending request
  has age $r$, and hence
  \[
    T_k-T_{k-1}\ge r\ge R_k.
  \]
  Thus $|U_k|\ge R_k$.  The windows belonging to one parity are
  interior-disjoint and, after deleting all nonwitness requests, contain
  exactly their two designated requests.

  The two witnesses have ambient Steiner cost $R_k$; they may have different
  release times or be split across adjacent offline batches.  The dummy
  $p_0$ is free and request deletion can only decrease the offline optimum.
  Hence the preceding construction and Lemma~\ref{lem:blind-phase-bound}
  verify part~\textup{(b)} of Theorem~\ref{thm:certificate-engine} with
  $C_k=R_k$ and $\alpha=21$:
  \[
    \ALG<21\sum_kR_k\le84\OPT,
  \]
  which proves this theorem.
\end{proof}

The finite known metric and atomic-walk assumptions are substantive.  A
positive minimum scale is needed for deterministic two-sided search.  Also,
if an adversary were allowed to inject a request inside a nominally
instantaneous sweep after its location had already been passed, it could
force the exploration radii to grow while keeping all request locations near
the anchor.  Such interleaving is excluded by the instantaneous movement
model used throughout this manuscript.

\subsection{Automatic service via virtual ghosts}

Automatic discovery can be handled by retaining an internal ghost of every
request after its real copy is forced to be served.  A ghost keeps its label,
arrival time, revealed location if known, and discovered status.  It
continues to drive the dyadic phase clock exactly as an unserved elective
request would.  A phase ends only when every ghost announced during the
phase has been discovered.  Requests announced at the idle anchor are
observed and served at age zero and are omitted from the numbered positive
phases.  The same is true for co-located arrivals during an active phase;
they need no ghost because the elective reference algorithm also removes
them for free.

At a radius-$r_j$ exploration, all real requests first discovered by the
same instantaneous sweep are automatically served in one service event.
Their arrivals are no earlier than the oldest phase ghost, so the maximum
delay of this event is at most $r_j$.  A sweep with no new hit creates no
service event.  Once every ghost is discovered, every corresponding real
request has already been served.  After the successful closed sweep returns
to the anchor, make a pure repositioning move, within the same compound
atomic action, directly to a farthest ghost location, delete all phase
ghosts, and use this point as the next anchor.

\begin{algorithm}[ht]
  \caption{Blind automatic dyadic exploration with virtual ghosts}
  \label{alg:blind-automatic-dyadic}
  \begin{algorithmic}[1]
    \Require Known finite line $X$ with $|X|\ge2$, initial anchor $s_0$
    \State $\delta\gets\min\{|x-y|:x,y\in X,\ x\ne y\}$; $s\gets s_0$
    \Loop
      \State Wait for the next complete arrival epoch and process it
      \State Resolve its automatic hits at $s$ and record the zero-check feedback
      \State Create a ghost for each identity unresolved after this check
      \If{no ghost was created} \State \textbf{continue} \EndIf
      \State Let $\alpha$ denote the oldest ghost release; $j\gets0$
      \Loop
        \State $r\gets2^j\delta$; $c\gets\alpha+r$; $u\gets$ next arrival time, or $+\infty$
        \If{$u\le c$}
          \State Advance to $u$, process its epoch, resolve anchor hits, and ghost the remaining identities
          \If{$u<c$} \State \textbf{continue} \EndIf
        \Else
          \State Advance to $c$
        \EndIf
        \State Sweep $s\to L_r\to R_r\to s$ as one automatic service component
        \State Mark the ghosts of all automatic hits as discovered
        \If{some ghost is undiscovered}
          \State $j\gets j+1$; \textbf{continue}
        \EndIf
        \State Choose a farthest ghost $p$ and move directly to $x(p)$
        \Statex \Comment{The pure move shares the sweep's compound atomic action}
        \State $s\gets x(p)$; delete all phase ghosts; \textbf{break}
      \EndLoop
    \EndLoop
  \end{algorithmic}
\end{algorithm}

As in the elective algorithm, an idle epoch is classified only after its
costless anchor check.  The automatic semantics resolve co-located arrivals
immediately; only identities still unresolved afterward create a positive
exploration phase.

\begin{theorem}[Blind automatic competitiveness]
  \label{thm:blind-automatic-competitive}
  Under the information and atomic-event conventions stated above, virtual
  dyadic anchor exploration is $84$-competitive for blind automatic online
  service on a known finite line.
\end{theorem}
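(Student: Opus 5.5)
The plan is to mirror the elective analysis: show that each automatic phase costs less than $21R_k$, with the same witness pairs and windows, and then invoke part~(b) of Theorem~\ref{thm:certificate-engine} together with Theorem~\ref{thm:auto-elective-equivalence}.

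\textbf{Phase structure.} Define phases exactly as in the elective algorithm, but drive them by the ghosts. The dyadic clock depends only on the oldest ghost release, and ghosts are never served. Let $r=2^j\delta$ be the first radius at which every phase ghost is discovered, and let $s_k$ be a farthest ghost location, with $R_k=|s_k-s_{k-1}|$. I will rerun the scale argument of Lemma~\ref{lem:blind-phase-bound} with ghosts in place of pending requests:
\begin{itemize}
\item the radius-$r/2$ sweep visited every point within distance $r/2$ but left some ghost undiscovered, so $R_k>r/2$;
\item if $j=0$, the phase contains a non-anchor ghost, so $R_k\ge\delta=r$;
\item the successful sweep gives $R_k\le r$.
\end{itemize}
Hence $R_k\le r\le 2R_k$.

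\textbf{Cost accounting.} There are three pieces of cost.
\begin{itemize}
\item Exploration movement is less than $8r$, as before.
\item The repositioning move from the anchor to $s_k$ costs $R_k$. This replaces the cleanup walk, which cost at most $3R_k$, so the bound can only improve.
\item Delay is the new ingredient. Each radius-$r_i$ sweep that hits new real requests is one service event. Every hit request arrived no earlier than the oldest ghost, and the sweep happens at age $r_i$, so its delay is at most $r_i$. Summing over all sweeps gives total delay at most $\sum_{i\le j}r_i<2r$.
\end{itemize}
The total is therefore below $8r+2r+R_k\le 21R_k$. This is the same local factor $\alpha=21$, though the delay term must be summed over several events instead of charged once.

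\textbf{Certificates.} Use the ghost identities $p_{k-1}$ (at anchor $s_{k-1}$) and $p_k$ as witness pairs for transition $k$, in the parity macro-windows $U_k=(T_{k-2},T_k]$. These are genuine request identities of the input: each was served at some time, but the certificate theorem only uses arrival times and locations. The arrival-time containments and the bound $T_k-T_{k-1}\ge r\ge R_k$ follow verbatim from the elective proof, because the phase clock, the ages, and the atomic-action conventions are identical. Applying Theorem~\ref{thm:certificate-engine}(b) gives $\ALG<21\sum_kR_k\le 84\,\OPT_{\mathrm{elective}}$, and Theorem~\ref{thm:auto-elective-equivalence} converts this to $84\,\OPT_{\mathrm{auto}}$.

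\textbf{Main obstacle.} I expect the delicate step to be confirming that automatic hits never break the phase invariants. Three points need checking:
\begin{enumerate}
\item A real request served early by a sweep must still count toward the phase (through its ghost), so the lower bound $R_k>r/2$ survives.
\item Requests automatically served at the anchor during an active phase create no ghost, and they cost nothing in both the elective reference and the automatic run.
\item No arrival is interleaved inside the compound sweep-plus-reposition action, so windows of the same color remain identity-disjoint.
\end{enumerate}
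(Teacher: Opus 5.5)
Your proposal is correct and follows the paper's proof essentially step for step: the ghost-driven scale bound $R_k\le r\le 2R_k$, exploration movement below $8r$, summed sweep delays below $2r$, and a repositioning move of length $R_k$ give $10r+R_k\le 21R_k$, after which the same parity witness pairs, part~(b) of Theorem~\ref{thm:certificate-engine}, and Theorem~\ref{thm:auto-elective-equivalence} yield the ratio $84$. One point the paper states and you should make explicit: the final move is a pure movement that creates no service event, because once every ghost is discovered every real phase request has already been automatically served, and no arrival is interleaved in the compound action.
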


\begin{proof}
  Retain the notation of Lemma~\ref{lem:blind-phase-bound}: $r$ is the final
  dyadic radius and $R_k$ is the farthest ghost distance from the phase
  anchor.  The proof of~\eqref{eq:blind-final-scale} depends only on
  discovery, not on physical service, and therefore still gives
  \[
    R_k\le r\le2R_k.
  \]
  The exploration movement is less than $8r$.  The sum of all automatic
  batch-delay charges created by discovery sweeps is less than
  \[
    \sum_{i=0}^j2^i\delta<2r.
  \]
  Stationary hits at the anchor have zero delay.  The final pure reposition
  to a farthest ghost costs $R_k$ and creates no service event, because all
  phase requests have already been served.  Hence
  \[
    \ALG_k<10r+R_k\le21R_k.
  \]

  The farthest ghost is the label of an actual phase-$k$ request, regardless
  of when its real copy was automatically served.  It can therefore be used
  as $p_k$ in exactly the same odd--even witness construction as in
  Theorem~\ref{thm:blind-elective-competitive}.  Part~\textup{(b)} of
  Theorem~\ref{thm:certificate-engine} and
  Theorem~\ref{thm:auto-elective-equivalence} give
  \[
    \sum_kR_k\le4\OPT_{\mathrm{auto}}.
  \]
  Consequently,
  \[
    \ALG_{\mathrm{auto}}
      <21\sum_kR_k
      \le84\OPT_{\mathrm{auto}}.
  \]
\end{proof}

The convention that one instantaneous sweep is one service event is
essential.  Charging a separate batch maximum at every visited request
location could introduce a factor equal to the number of hits.  Labeled
announcements, or equivalent timestamp and multiplicity information, are
also needed so that the algorithm knows when all ghosts have been
discovered.  Pure charged repositioning is explicitly allowed by the model;
alternatively, the final successful sweep may stop at its last newly
discovered request and use that request as the next endpoint witness.

\section{Conclusion and Future Directions}
\label{sec:conclusion}

Per-batch maximum delay changes online service in two opposite ways.  It
removes the additive urgency contributed by a large request population, but
it also permits one old request to sponsor an entire service walk.  The
resulting problem is nevertheless governed by a robust certificate: an
ordered request group can be paid either by the temporal span of an offline
batch or by a connected portion of the offline trajectory.  Alternating
online transitions form two identity-disjoint families, which turns these
local certificates into a global lower bound.

This viewpoint gives the common factorization
\[
  \text{competitive ratio}
  =\text{global certificate loss }4
   \times\text{ local realization loss }\alpha.
\]
Intervals give $\alpha=5/2$, weighted trees and exact Steiner structure give
$\alpha=3$, the polynomial MST envelope gives $\alpha=5$, and blind line
exploration gives $\alpha=21$.  Elective and automatic service have different
feasible event structures but equal offline values; online ghost executions
then transfer the upper bounds between the two semantics.  Blindness is
qualitatively more geometric: line exploration has constant overhead,
whereas a high-degree star already forces linear loss.
The lower-bound stress tests also leave substantial quantitative gaps: the
    visible deterministic ratio lies between $3$ and our upper bounds
$10,12,20$, while the blind finite-line ratio lies between $3$ and $84$
(and has randomized lower bound $2$).

Three coherent axes remain open.  On the \emph{competitive} axis, one may
tighten either layer of the factorization.  Improving the parity packing below
four would strengthen every upper bound simultaneously; improving a route,
spatial surrogate, or exploration rule changes only its local factor.  The
first quantitative goal is to close the visible gaps between the lower bound
$3$ and the upper bounds $10,12,20$, and to determine whether randomization
helps.  The blind line is a particularly clean test: randomized two-sided
search improves the one-request exploration constant, but it is unclear
whether the gain survives dynamic batching and parity certificates.

On the \emph{computational} axis, exact offline optimization on arbitrary
finite metrics is NP-hard even for simultaneous arrivals.  The central
question is therefore approximation: can route approximation be combined
with an endpoint-indexed schedule approximation without compounding errors
across many batches?  Online, the spatial-surrogate transfer principle asks
whether efficiently computable structures better than the factor-two
terminal MST can improve the polynomial ratio $20$.

On the \emph{model and information} axis, the central question is how far the
temporal-or-trajectory certificate principle extends.  Spatial blindness
suggests studying
other mobile-service and routing settings in which movement itself reveals
information; the star lower bound warns that topology matters.  Other
batch-delay functionals, weighted or nonlinear waiting, and stochastic
arrivals are equally natural but require new certificates.  For example,
Poisson arrivals create renewal structure in additive-delay multi-level
aggregation~\cite{mari2024online}, which might replace parity separation here.
Extensions to facility location, multicut, Steiner tree, or Steiner forest
would require a carefully specified information model because their service
actions create structures rather than move one persistent endpoint.

Thus per-batch maximum delay is not merely a numerical change to the
objective: it changes the global temporal and trajectory certificates
available for stateful online service.

\section*{Disclosure of AI-Assisted Research and Writing}
The authors used OpenAI language-model systems, including Codex, extensively
and substantively as interactive assistants during this project.  These
systems contributed to proof exploration, counterexample search,
formalization, adversarial checking, literature organization, LaTeX drafting,
and editorial revision.  The human authors selected the research questions
and modeling choices, evaluated and verified the generated arguments, and
take full responsibility for every claim and for the final manuscript.

\bibliographystyle{plain}
\bibliography{aaai2027}

@article{dooly2001tcp,
  author  = {Dooly, Daniel R. and Goldman, Sally A. and Scott, Stephen D.},
  title   = {On-line analysis of the {TCP} acknowledgment delay problem},
  journal = {Journal of the ACM},
  volume  = {48},
  number  = {2},
  pages   = {243--273},
  year    = {2001}
}

@inproceedings{azar2019framework,
  title={General framework for metric optimization problems with delay or with deadlines},
  author={Azar, Yossi and Touitou, Noam},
  booktitle={Proc. FOCS},
  pages={60--71},
  year={2019},
  organization={IEEE}
}

@inproceedings{karlin2001dynamic,
  title={Dynamic TCP acknowledgement and other stories about e/(e-1)},
  author={Karlin, Anna R. and Kenyon, Claire and Randall, Dana},
  booktitle={Proc. STOC},
  pages={502--509},
  year={2001}
}

@inproceedings{seiden2000guessing,
  title={A guessing game and randomized online algorithms},
  author={Seiden, Steven S.},
  booktitle={Proc. STOC},
  pages={592--601},
  year={2000}
}

@inproceedings{mari2024online,
  title={Online multi-level aggregation with delays and stochastic arrivals},
  author={Mari, Mathieu and Paw{\l}owski, Micha{\l} and Ren, Runtian and Sankowski, Piotr},
  booktitle={Proc. ISAAC},
  pages={49:1--49:20},
  year={2024}
}

@article{bhore2026online,
  title={Online TCP acknowledgment under general delays},
  author={Bhore, Sujoy and Paw{\l}owski, Micha{\l} and Umboh, Seeun William},
  journal={arXiv preprint arXiv:2604.13428},
  year={2026}
}

@article{albers2005dynamic,
  title={Dynamic {TCP} acknowledgment: Penalizing long delays},
  author={Albers, Susanne and Bals, Helge},
  journal={SIAM Journal on Discrete Mathematics},
  volume={19},
  number={4},
  pages={938--951},
  year={2005}
}

@inproceedings{azar2017osd,
  author    = {Azar, Yossi and Ganesh, Arun and Ge, Rong and Panigrahi, Debmalya},
  title     = {Online Service with Delay},
  booktitle = {Proc. STOC},
  pages     = {551--563},
  year      = {2017},
  doi       = {10.1145/3055399.3055475}
}

@inproceedings{bienkowski2018osdline,
  author    = {Bie{\'n}kowski, Marcin and Kraska, Artur and Schmidt, Pawe{\l}},
  title     = {Online Service with Delay on a Line},
  booktitle = {Proc. SIROCCO},
  series    = {LNCS},
  volume    = {11085},
  pages     = {237--248},
  year      = {2018},
  doi       = {10.1007/978-3-030-01325-7_22}
}

@inproceedings{touitou2023improved,
  author    = {Touitou, Noam},
  title     = {Improved and Deterministic Online Service with Deadlines or Delay},
  booktitle = {Proc. STOC},
  pages     = {761--774},
  year      = {2023},
  doi       = {10.1145/3564246.3585107}
}

@inproceedings{azar2020beyond,
  author    = {Azar, Yossi and Touitou, Noam},
  title     = {Beyond Tree Embeddings---A Deterministic Framework for Network Design with Deadlines or Delay},
  booktitle = {Proc. FOCS},
  year      = {2020},
  eprint    = {2004.07946},
  archivePrefix = {arXiv}
}

@misc{lu2026mlamax,
  author       = {Lu, Tianhang and Ren, Runtian and Liu, Shengcai and Tang, Ke},
  title        = {Online Multi-Level Aggregation with Per-Batch Maximum Delay},
  year         = {2026},
  eprint       = {2608.06796},
  archivePrefix = {arXiv},
  primaryClass = {cs.DS}
}

@inproceedings{disser2025lazy,
  author    = {Disser, Yann and Thelen, Linda},
  title     = {A Tight Lower Bound for Online Service with Deadlines and Lazy Server},
  booktitle = {Proc. ISAAC},
  series    = {LIPIcs},
  volume    = {359},
  pages     = {26:1--26:17},
  year      = {2025},
  doi       = {10.4230/LIPIcs.ISAAC.2025.26}
}

@inproceedings{touitou2025nonclairvoyant,
  author    = {Touitou, Noam},
  title     = {Nearly-Optimal Algorithm for Non-Clairvoyant Service with Delay},
  booktitle = {Proc. STACS},
  series    = {LIPIcs},
  volume    = {327},
  pages     = {74:1--74:21},
  year      = {2025},
  doi       = {10.4230/LIPIcs.STACS.2025.74}
}

@inproceedings{emek2016matching,
  author    = {Emek, Yuval and Kutten, Shay and Wattenhofer, Roger},
  title     = {Online Matching: Haste Makes Waste!},
  booktitle = {Proc. STOC},
  pages     = {333--344},
  year      = {2016},
  doi       = {10.1145/2897518.2897557}
}

@inproceedings{azar2017matching,
  author    = {Azar, Yossi and Chiplunkar, Ashish and Kaplan, Haim},
  title     = {Polylogarithmic Bounds on the Competitiveness of Min-Cost Perfect Matching with Delays},
  booktitle = {Proc. SODA},
  pages     = {1051--1061},
  year      = {2017},
  doi       = {10.1137/1.9781611974782.67}
}

@inproceedings{ashlagi2017bipartite,
  author    = {Ashlagi, Itai and Azar, Yossi and Charikar, Moses and Chiplunkar, Ashish and Geri, Ofir and Kaplan, Haim and Makhijani, Rahul and Wang, Yuyi and Wattenhofer, Roger},
  title     = {Min-Cost Bipartite Perfect Matching with Delays},
  booktitle = {Proc. APPROX/RANDOM},
  series    = {LIPIcs},
  volume    = {81},
  pages     = {1:1--1:20},
  year      = {2017},
  doi       = {10.4230/LIPIcs.APPROX-RANDOM.2017.1}
}

@inproceedings{azar2021concave,
  author    = {Azar, Yossi and Ren, Runtian and Vainstein, Danny},
  title     = {The Min-Cost Matching with Concave Delays Problem},
  booktitle = {Proc. SODA},
  pages     = {301--320},
  year      = {2021},
  doi       = {10.1137/1.9781611976465.20}
}

@inproceedings{dufay2026matching,
  author    = {Dufay, Marc and Wattenhofer, Roger},
  title     = {A Deterministic Polylogarithmic Competitive Algorithm for Matching with Delays},
  booktitle = {Proc. SODA},
  pages     = {3936--3964},
  year      = {2026},
  doi       = {10.1137/1.9781611978971.144}
}

@article{gupta2022caching,
  author  = {Gupta, Anupam and Kumar, Amit and Panigrahi, Debmalya},
  title   = {Caching with Time Windows and Delays},
  journal = {SIAM Journal on Computing},
  volume  = {51},
  number  = {4},
  pages   = {975--1017},
  year    = {2022},
  doi     = {10.1137/20M1346286}
}

@inproceedings{krnetic2020kserver,
  author    = {Krneti\'c, Predrag and Melnyk, Darya and Wang, Yuyi and Wattenhofer, Roger},
  title     = {The $k$-Server Problem with Delays on the Uniform Metric Space},
  booktitle = {Proc. ISAAC},
  series    = {LIPIcs},
  volume    = {181},
  pages     = {61:1--61:13},
  year      = {2020},
  doi       = {10.4230/LIPIcs.ISAAC.2020.61}
}

\end{document}